\providecommand{\tabularnewline}{\\}
\numberwithin{equation}{section}
\numberwithin{figure}{section}
\theoremstyle{plain}
\newtheorem{thm}{\protect\theoremname}
\theoremstyle{definition}
\newtheorem{defn}[thm]{\protect\definitionname}
\theoremstyle{plain}
\newtheorem{cor}[thm]{\protect\corollaryname}
\theoremstyle{plain}
\newtheorem{lem}[thm]{\protect\lemmaname}
\providecommand{\corollaryname}{Corollary}
\providecommand{\definitionname}{Definition}
\providecommand{\lemmaname}{Lemma}
\providecommand{\theoremname}{Theorem}
\begin{document}
\title{Graph Neural Networks: Theory for Estimation with Application on Network
Heterogeneity}
\author{Yike Wang\thanks{Department of Economics, London School of Economics and Political
Science, Houghton Street, London, WC2A 2AE, U.K. Email: y.wang379@lse.ac.uk.} \and Chris Gu\thanks{Scheller College of Business, Georgia Institute of Technology, 800
W. Peachtree St. NW Atlanta, GA 30308, U.S. Email: chris.gu@scheller.gatech.edu.} \and Taisuke Otsu\thanks{Department of Economics, London School of Economics, Houghton Street,
London, WC2A 2AE, U.K. Email: t.otsu@lse.ac.uk.}}
\maketitle
\begin{abstract}
This paper presents a novel application of graph neural networks for
modeling and estimating network heterogeneity. Network heterogeneity
is characterized by variations in unit's decisions or outcomes that
depend not only on its own attributes but also on the conditions of
its surrounding neighborhood. We delineate the convergence rate of
the graph neural networks estimator, as well as its applicability
in semiparametric causal inference with heterogeneous treatment effects.
The finite-sample performance of our estimator is evaluated through
Monte Carlo simulations. In an empirical setting related to microfinance
program participation, we apply the new estimator to examine the average
treatment effects and outcomes of counterfactual policies, and to
propose an enhanced strategy for selecting the initial recipients
of program information in social networks.
\end{abstract}
\begin{description}
\item [{{\small{}Keywords:}}] {\small{}Artificial neural nets, social networks,
causal inference, graph representation learning.}{\small\par}
\item [{{\small{}JEL}}] \textbf{\small{}Classification}{\small{}: C13,
C14, C45, C51\newpage}{\small\par}
\end{description}

\section{Introduction}

The world is inherently interconnected, whether it is the natural,
engineering, or social domain. For instance, fundamental particles
bond together to form larger molecules, which together form building
blocks of organic or inorganic objects. Human-engineered systems,
such as computers or the Internet, are made up of large amounts of
small components, connected together in well-designed manners in order
to perform more complex functionalities. Social animals, including
humans, form social ties and their decisions are influenced by who
they are connected with and the attributes of their peers. It is not
only the individual entities that are important, but the networks
surrounding them hold equal significance. As we collect ever larger
amounts of multi-faceted data, the opportunity and necessity to understand
the world in a way that respects such interconnection becomes increasingly
vital for expanding our understanding of these various fields.

In economics and social sciences, studying individual heterogeneity
is an essential topic for empirical research and policy making. The
inherent differences among individuals, which encompass a wide range
of factors such as personal characteristics, socioeconomic backgrounds,
and life circumstances, can significantly impact how individuals respond
to economic policies and play a fundamental role for policy design.
The literature has a long history of recognizing and accounting for
individual heterogeneity, with approaches such as the demand estimation
method for differentiated products (\citealp{berry1995automobile}),
the individual fixed-effects panel-data method (\citealp{wooldridge2010econometric}),
and the potential outcomes framework for heterogeneous treatment effects
(\citealp{rubin1974estimating}), among others. A more recent stream
of literature has started utilizing machine learning techniques to
incorporate individual heterogeneity. Examples include \citet{belloni2014inference}
using the Lasso, \citet{bonhomme2015grouped} using k-means, and \citet{wager2018estimation}
using random forests. In this trend, the study most closely related
to our work is \citet{farrell2021deep} who have demonstrated a promising
potential of using artificial neural networks to capture individual
heterogeneity.

The Multi-layer Perceptron (MLP) architecture, analyzed in \citet{farrell2021deep},
is a potent tool to flexibly represent the dependence of individual
heterogeneity on a large number of observed characteristics. However,
in various empirical settings, we often encounter rich network data
in conjunction with these observed characteristics. Adapting the canonical
neural network model to effectively capture the variations embedded
in network information across observations, which we refer to as network
heterogeneity, could be a compelling topic to explore in economic
research.

With this goal in mind, it is inspiring to reflect on the success
of Convolutional Neural Networks (CNNs) in the deep learning literature
(\citealp{lecun1998gradient}) and connect that to our goal of incorporating
network heterogeneity. CNNs have achieved significant success in computer
vision, with notable applications such as image classification, object
detection, and semantic segmentation. For example, a remarkable application
of CNNs is the renowned artificial intelligence program, AlphaGo,
where board positions in the game of Go are represented as visual
patterns and CNNs effectively detect and classify these patterns.
A prominent feature in the domain of computer vision is the importance
of small patterns within an image, which often matter more than the
whole image in many applications. And these small patterns often appear
repeatedly within and across images. The convolutional layer in CNNs
consists of filters that parsimoniously extracts only the local information
around a focal pixel. And the parameter sharing among these filters,
meaning the same parameters are used across filters, allows for the
detection of repeated patterns, contributing to the success of the
CNN architecture. 

Analogously, in the context of network heterogeneity, it is intuitive
that peers closely connected via networks would provide more useful
information than distant nodes. Also, the same aggregation function
could prove beneficial for extracting local information across different
nodes. Hence, it might be worthwhile to borrow insights from CNNs
when studying network heterogeneity. However, a direct application
is not viable, as traditional CNNs are designed to work on data residing
in regular, grid-like structures, rather than on graphs.

Graph Neural Networks (GNNs) offer a general modeling framework for
building neural networks on graphs, a topic thoroughly overviewed
by \citet{hamilton2020graph}. Intuitively, the network heterogeneity
of a focal node depends on the conditions of their connected peers,
which in turn depend on the conditions of their peers, and so on.
GNNs provide a flexible framework for incorporating peer information,
taking into account network topology. The canonical MLP is less suited
to this task, as each focal node has a different number of peers with
varying distances, while the MLP can only incorporate a fixed number
of inputs. GNNs already have many impressive real-world applications.
For instance, \citet{stokes2020deep} modeled chemical molecule structures
as graphs, based on which they predict their pathogen inhibitory properties
using GNNs. Their study successfully identified a new molecule named
halicin against Acinetobacter baumanni, which is one of the highest
priority pathogens the World Health Organization urgently seeks new
antibiotics for. Despite the impactful applications of GNNs in many
fields, to the best of our knowledge, no study has yet employed GNNs
to quantify network heterogeneity for causal inference and policy
recommendations in economics. Our study aims to contribute to this
area.

The primary theoretical contribution of our study lies in providing
the convergence rate of the GNN estimator. The novelty of our theoretical
development resides in bounding the complexity measure and approximation
error, both of which are specific to GNNs, and in adapting the localization
analysis (\citet{bartlett2005local}) to incorporate dependent data
using a dependency graph. These elements are crucial in establishing
the convergence rate of the GNN estimator.

We then demonstrate the application of the GNN estimator in semiparametric
causal inference. The primitive inputs for conducting robust inference
of causal effects are the conditional expectations of potential outcomes
and the propensity score. The GNN estimator, by effectively integrating
local neighborhood information, offers flexible estimations for both
the conditional expectations of potential outcomes and the propensity
score, thereby facilitating causal inference. The convergence rate
result of the GNN estimator elucidates its applicability in semiparametric
causal inference. In broader terms, our theoretical analysis sheds
light on a general procedure that future researchers could adopt to
study other artificial neural network architectures tailored to their
empirical needs, and subsequently conduct causal inference.

In our empirical application, we utilize the data from \citet{BanerjeeChandrasekharDufloJackson2013}
to analyze individual decisions on microfinance participation given
their neighborhood surroundings. Our treatment effects estimates,
based on first-stage GNN estimates, indicate that well-connected households,
as measured by network centrality measures, are less inclined to borrow
via microfinance. Given the empirical evidence that participating
households have a much higher chance to disseminate information to
their neighbors than non-participating households, our findings highlight
a tradeoff for information diffusion: while better-connected households
have more avenues to spread information, their lower participation
rates curb this potential. Using the participation probabilities predicted
by the GNN estimates, we find considerable potential for improvement
in microfinance information targeting by achieving a balance between
the participation rate and network centrality of targeted households.
This refined approach can be leveraged to facilitate information diffusion
through social networks.

We organize the paper as follows. In Section \ref{sec: Graph-neural-networks},
we introduce the modeling setup and GNN estimator. We then present
the rate of convergence for the GNN estimator in Section \ref{sec: Theoretical-property},
followed by semiparametric causal inference for treatment effects
estimators in Section \ref{sec:Inference}. Monte Carlo simulations
and empirical applications are provided in Sections \ref{sec:Monte-Carlo}
and \ref{sec:Empirical-Application}, respectively. Finally, we conclude
in Section \ref{sec:Conclusion}. All the theoretical proofs are contained
in the appendix.

We use the following notations throughout the paper. A bold capital
letter (e.g., $\mathbf{A}$) represents a matrix, a bold lowercase
letter (e.g., $\mathbf{a}$) signifies a vector or vector-valued function,
and an unbold letter (e.g., $A$ or $a$) denotes a scalar, unless
stated otherwise. For a real number $\alpha,$ $\text{sgn}\left(\alpha\right)$
takes the value $1$ if $\alpha\geq0$ and $0$ otherwise. $\log$
refers to the natural logarithm. For vectors $\mathbf{a}$ and $\boldsymbol{b}$,
$\mathbf{a}\cdot\boldsymbol{b}$ symbolizes the inner product. The
letter $C$ is designated to denote a fixed finite positive constant,
which does not depend on the sample size $n$. For concise presentation,
the specific value of $C$ may vary across contexts, even from line
to line throughout the proof. For any non-random non-negative sequences
$a_{n}$ and $b_{n}$, $a_{n}=O\left(b_{n}\right)$ or $a_{n}\lesssim b_{n}$
means there exists a fixed finite constant $C$ such that $a_{n}\leq Cb_{n}$
for all $n$, and $a_{n}\asymp b_{n}$ means $a_{n}=O\left(b_{n}\right)$
and $b_{n}=O\left(a_{n}\right)$. $\mathbb{Z}_{+}$ denotes the set
of positive integers. For any two real numbers $a$ and $b$, $a\vee b=\max\left\{ a,b\right\} $.
For any positive integer $n$, $\left[n\right]=\left\{ 1,...,n\right\} $.
And for any function $f$, $\left\Vert f\right\Vert _{\infty,\mathcal{A}}=\max_{x\in\mathcal{A}}\left|f\left(x\right)\right|$. 

\section{Graph neural networks \label{sec: Graph-neural-networks}}

In this section, we first introduce the definition of network heterogeneity,
and then present the GNN estimator for estimating the network heterogeneity.

\subsection{Network heterogeneity \label{subsec:Network-heterogeneity}}

For each node $i\in\left[n\right]$, let $y_{i}\in\mathbb{R}$ denote
an outcome variable and $\boldsymbol{x}_{i}\in\mathbb{R}^{d}$ a finite-dimensional
vector of covariates. The adjacency matrix, denoted as $\boldsymbol{D}=\left(d_{ij}\right)$,
is an $n\times n$ matrix indicating the connections between the nodes.
In this paper, we focus on binary adjacency matrices, which are applicable
to both undirected and directed graphs. For an undirected graph, the
adjacency matrix is symmetric, where $d_{ij}=d_{ji}=1$ if there is
an edge between nodes $i$ and $j$, and $d_{ij}=d_{ji}=0$ otherwise.
In contrast, for a directed graph, the adjacency matrix $\boldsymbol{D}$
is normally asymmetric, where $d_{ij}=1$ if there is a directed edge
from node $i$ to node $j$, and $d_{ij}=0$ if there is no such edge.
We can also extend the framework to non-binary adjacency matrices
to accommodate more complex information such as edge weights and labels,
albeit at the expense of additional notation. Using the binary adjacency
matrix, denote the set of $i$'s adjacent neighbors as $\mathcal{N}\left(i\right)=\left\{ j\in\left[n\right]:d_{ij}=1\right\} $.
Also, let $\left|\mathcal{N}\left(i\right)\right|$ be the number
of elements in set $\mathcal{N}\left(i\right)$, and $\left|\mathcal{N}\left(i\right)\right|=0$
if $\mathcal{N}\left(i\right)$ is empty. The researcher observes
$\left\{ y_{i}\right\} _{i\in\left[n\right]}$, $\left\{ \boldsymbol{x}_{i}\right\} _{i\in\left[n\right]}$,
and $\boldsymbol{D}$ for a realized sample of size $n$.

In the following, we first introduce the intuition of one-layer GNNs,
and then progress to multi-layer GNNs. The type of network heterogeneity
that one-layer GNNs aim to encapsulate can be expressed as

{\small{}
\[
z_{*i}\left(f_{*}\right)=f_{*}\left(\boldsymbol{x}_{i},\frac{1}{\left|\mathcal{N}\left(i\right)\right|}\sum_{j\in\mathcal{N}\left(i\right)}\boldsymbol{x}_{j}\right),
\]
}wherein the network heterogeneity for node $i$ depends on its own
attributes, $\boldsymbol{x}_{i}$, and the average of its neighbors'
attributes. The function $f_{*}$ represents an unrestricted mapping.
Hence, the popular linear-in-means model, in which the outcome variable
is a linear function of $\boldsymbol{x}_{i}$ and the average covariates
of neighbors, can be viewed as a specific example of this representation.
While our primary focus in this study is on the setup using neighbors'
average, alternative aggregation methods to combine neighbors' information,
such as taking the maximum, minimum, sum, or weighted sum given the
availability of weights, can be easily adopted to replace the function
of averaging. In this paper, we assume one of these simple aggregation
functions is valid, which is a plausible assumption in many empirical
contexts, and the exploration of more complex aggregation methods
is deferred to future research.

While the one-layer model is straightforward, a significant limitation
lies in the possible inadequacy of using the adjacent neighbors' attributes
alone to summarize a node's distinctiveness. Therefore, it could be
more fitting to consider a two-layer extension in which node $i$'s
network heterogeneity relies on some latent embeddings of itself and
its neighbors:

{\small{}
\[
z_{*i}\left(f_{*}\right)=f_{*}^{(2)}\left(\boldsymbol{h}_{*i},\frac{1}{\left|\mathcal{N}\left(i\right)\right|}\sum_{j\in\mathcal{N}\left(i\right)}\boldsymbol{h}_{*j}\right),
\]
}where $\boldsymbol{h}_{*i}\in\mathbb{R}^{d_{h*}}$ represent some
unobserved measures of node $i$'s uniqueness with an unknown dimension
$d_{h*}\in\mathbb{Z}_{+}$. This concept is depicted in Figure \ref{fig: GNN 2-layers architecture }.
The hidden embeddings of node $i$ consequently depend on the observed
covariates of both itself and its neighbors:

{\small{}
\[
\boldsymbol{h}_{*i}=\boldsymbol{f}_{*}^{(1)}\left(\boldsymbol{x}_{i},\frac{1}{\left|\mathcal{N}\left(i\right)\right|}\sum_{j\in\mathcal{N}\left(i\right)}\boldsymbol{x}_{j}\right).
\]
}{\small\par}

\begin{figure}[h]
\begin{centering}
\includegraphics[width=0.95\columnwidth]{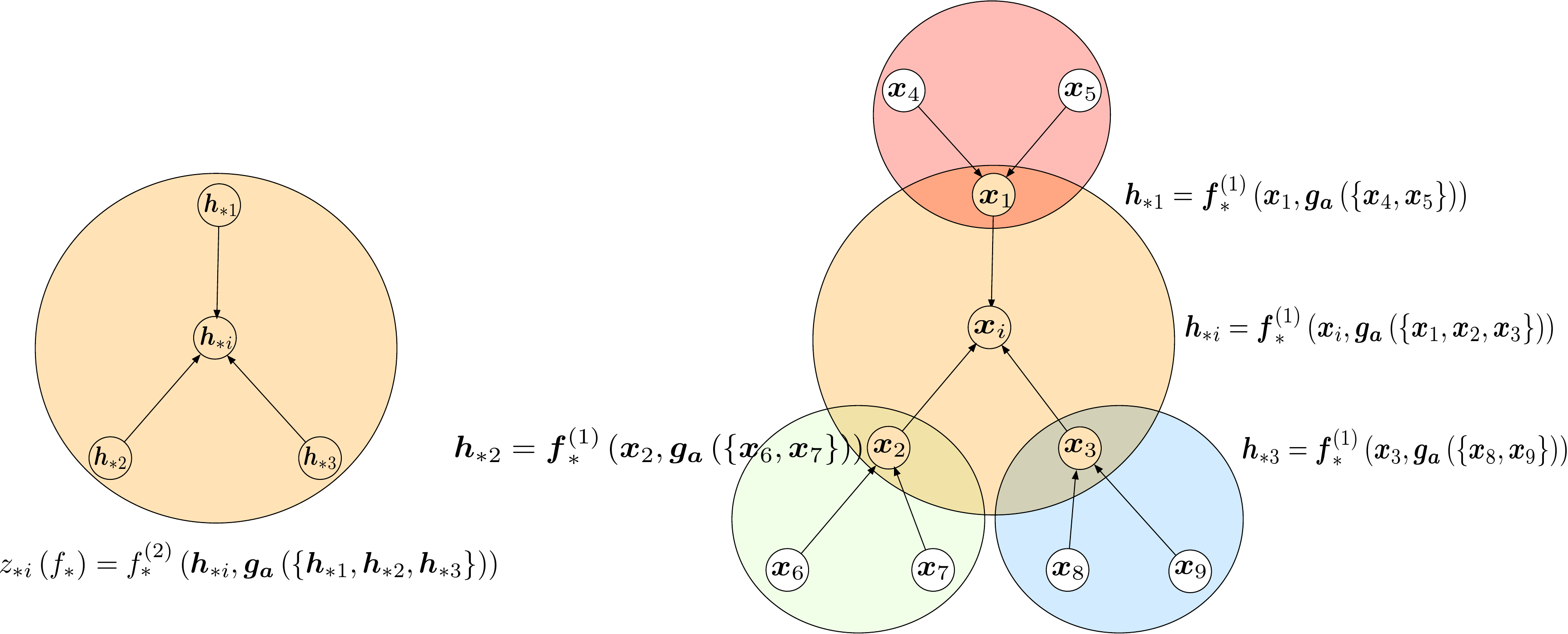}
\par\end{centering}
\caption{Network heterogeneity with two-layer GNNs.\label{fig: GNN 2-layers architecture }}

\medskip{}

\raggedright{}%
\noindent\begin{minipage}[t]{1\columnwidth}%
\begin{singlespace}
\begin{flushleft}
{\footnotesize{}Note: This figure provides a visual illustration of
the network heterogeneity of node $i$ using a two-layer GNN architecture.
In this example, node $i$ has three neighbors, namely nodes 1, 2
and 3. So the network heterogeneity of node $i$ is a flexible representation
that combines $i$'s own latent embeddings $\boldsymbol{h}_{*i}\in\mathbb{R}^{d_{h*}}$
and an aggregation of $i$'s neighbors' embeddings $\mathbf{g}_{a}\left(\left\{ \boldsymbol{h}_{*1},\boldsymbol{h}_{*2},\boldsymbol{h}_{*3}\right\} \right)\in\mathbb{R}^{d_{h*}}$
with a known aggregation function $\mathbf{g}_{a}\left(\cdot\right)$
(such as mean, max, min, etc). The latent embeddings of each node
in turn depend on their respective observed covariates and those of
their neighbors. For instance, node $1$ has two neighbors, nodes
4 and 5, so the latent embeddings of node $1$ depend flexibly on
$1$'s attributes, $\boldsymbol{x}_{1}$, and an aggregation of the
attributes of $1$'s neighbors, $\mathbf{g}_{a}\left(\left\{ \boldsymbol{x}_{4},\boldsymbol{x}_{5}\right\} \right)$.
With this two-layer GNN architecture, the network heterogeneity of
node $i$ depends on its own characteristics, those of its neighbors,
and the characteristics of its neighbors' neighbors, respecting the
topology of local networks. This concept can be expanded to describe
multi-layer GNN architectures.}
\par\end{flushleft}
\end{singlespace}
\end{minipage}
\end{figure}

In this two-layer model, the network heterogeneity of node $i$, denoted
as $z_{*i}\left(f_{*}\right)$, depends on its own attributes, its
neighbors' attributes, and the attributes of its neighbors' neighbors,
utilizing the local network structure. This is a natural extension
of the one-layer model setup. In many empirical contexts, an individual's
decision (for example, taking a loan) depends not only on the economic
status of her immediate connections, as their willingness to lend
may also depend on who they are connected with and could borrow from.
This concept can intuitively be extended to setups with multiple layers.
For instance, in a three-layer model, the third-degree connections
(i.e., neighbors of neighbors' neighbors) can also be relevant in
an individual's decision-making process.

To elucidate the idea more formally, we define an $L$-layer GNNs
model iteratively as $\forall l\in\left[L\right]$,{\small{}
\[
\boldsymbol{h}_{*i}^{\left(l\right)}\left(\boldsymbol{f}\right)=\boldsymbol{f}^{\left(l\right)}\left(\boldsymbol{h}_{*i}^{\left(l-1\right)}\left(\boldsymbol{f}\right),\overline{\boldsymbol{h}}_{*\mathcal{N}\left(i\right)}^{\left(l-1\right)}\left(\boldsymbol{f}\right)\right)\in\mathbb{R}^{d_{h*}^{\left(l\right)}},
\]
}where $\boldsymbol{h}_{*i}^{\left(0\right)}\left(\boldsymbol{f}\right)=\boldsymbol{x}_{i}$,
$\overline{\boldsymbol{h}}_{*\mathcal{N}\left(i\right)}^{\left(l-1\right)}\left(\boldsymbol{f}\right)=\frac{1}{\left|\mathcal{N}\left(i\right)\right|}\sum_{j\in\mathcal{N}\left(i\right)}\boldsymbol{h}_{*j}^{\left(l-1\right)}\left(\boldsymbol{f}\right)$,
and $\overline{\boldsymbol{h}}_{*\mathcal{N}\left(i\right)}^{\left(l-1\right)}\left(\boldsymbol{f}\right)=\boldsymbol{0}$
if $\mathcal{N}\left(i\right)$ is empty, with $\boldsymbol{f}$ being
the set of functions constructing the final layer output, i.e., $\boldsymbol{f}=\left\{ \boldsymbol{f}^{\left(l\right)}\right\} {}_{l\in\left[L\right]}$
and $\boldsymbol{f}^{\left(l\right)}=\left\{ f_{j}^{\left(l\right)}\right\} {}_{j\in\left[d_{h*}^{\left(l\right)}\right]}$.
The latent embedding dimension in each intermediate layer $l\in\left[L-1\right]$
is an unknown finite fixed integer $d_{h*}^{\left(l\right)}\in\mathbb{Z}_{+}$,
and the final layer output is a scalar ($d_{h*}^{\left(L\right)}=1$).
We denote the final layer output of the $L$-layer GNNs model as $z_{*i}\left(\boldsymbol{f}\right)\coloneqq\boldsymbol{h}_{*i}^{\left(L\right)}\left(\boldsymbol{f}\right)\in\mathbb{R}$.
Hence, the formulation of $z_{*i}\left(\boldsymbol{f}\right)$ depends
on the number of layers, $L$, and the dimensions of latent embeddings
in each intermediate layer, $\left\{ d_{h*}^{\left(l\right)}\right\} _{l\in\left[L-1\right]}$.
To simplify notation, we adopt the convention in this paper to omit
the reference to $L$ and $\left\{ d_{h*}^{\left(l\right)}\right\} _{l\in\left[L-1\right]}$
in $z_{*i}\left(\boldsymbol{f}\right)$. Also, define $\boldsymbol{f}_{*}$
as a set of functions yielding the smallest average loss function
value,

{\small{}
\begin{equation}
\boldsymbol{f}_{*}\in\arg\min_{\boldsymbol{f}}\mathbb{E}\left[\ell\left(y_{i},z_{*i}\left(\boldsymbol{f}\right)\right)\right].\label{eq: define_f_star}
\end{equation}
}Common examples of the loss function $\ell$ include the least squares
loss $\ell\left(y,z\right)=\frac{1}{2}\left(y-z\right)^{2}$ for real-valued
outcome variables, and the negative log-likelihood of logistic regression
$\ell\left(y,z\right)=-yz+\log\left(1+\exp\left(z\right)\right)$
for binary outcome variables.

We call the final layer output associated with an optimal $\boldsymbol{f}_{*}$,
namely $z_{*i}\left(\boldsymbol{f}_{*}\right)$, the network heterogeneity
of node $i$. On average, the network heterogeneity yields the smallest
loss function value for predicting the outcome variable $y_{i}$ given
the local neighborhood information. Standard derivations show that
$z_{*i}\left(\boldsymbol{f}_{*}\right)=\mathbb{E}\left[y_{i}\mid\boldsymbol{h}_{*i}^{\left(L-1\right)}\left(\boldsymbol{f}_{*}\right),\overline{\boldsymbol{h}}_{*\mathcal{N}\left(i\right)}^{\left(L-1\right)}\left(\boldsymbol{f}_{*}\right)\right]$
for the least squares loss, and $\frac{\exp\left(z_{*i}\left(\boldsymbol{f}_{*}\right)\right)}{1+\exp\left(z_{*i}\left(\boldsymbol{f}_{*}\right)\right)}=\mathbb{E}\left[y_{i}\mid\boldsymbol{h}_{*i}^{\left(L-1\right)}\left(\boldsymbol{f}_{*}\right),\overline{\boldsymbol{h}}_{*\mathcal{N}\left(i\right)}^{\left(L-1\right)}\left(\boldsymbol{f}_{*}\right)\right]$
for the logistic loss. In other words, the network heterogeneity
determines the conditional expectation of the outcome variable given
the local neighborhood information as summarized in the penultimate
layer latent embeddings. Without further restrictions, $z_{*i}\left(\boldsymbol{f}_{*}\right)$
may not be unique, while our main theoretical result in Theorem \ref{thm: rate_of_convergence}
is valid for any $z_{*i}\left(\boldsymbol{f}_{*}\right)$ that corresponds
to such an optimal forecasting rule.

The concept of network heterogeneity, represented as $z_{*i}\left(\boldsymbol{f}_{*}\right)$,
holds potential for a wide range of applications in empirical research.
This paper focuses on a significant application area: the semiparametric
causal inference of treatment effects. Central to identifying and
robustly inferring causal effects are the conditional expectations
of potential outcomes and the propensity score, both could depend
on local network surroundings. As detailed in Section \ref{sec:Inference},
under appropriate assumptions, the conditional expectation of the
potential outcome $y_{i}\left(t\right)$ and the propensity score
are known functions of respective network heterogeneity variables,
$z_{*i}\left(\boldsymbol{f}_{*}^{t}\right)$ and $z_{*i}\left(\boldsymbol{f}_{*}^{p}\right)$.
These network heterogeneities enable flexible characterization of
dependencies on local network surroundings. Utilizing GNN estimators,
we effectively estimate these network heterogeneities, leading to
accurate estimates for both the conditional expectations of potential
outcomes and the propensity score. These estimates are crucial for
conducting robust inference on various causal effects. In Section
\ref{sec:Inference}, we focus on a specific causal effect parameter:
the average effect of a counterfactual policy $\pi\left(s\right)=\mathbb{E}\left[s\left(\boldsymbol{\xi}_{i}\right)y_{i}\left(1\right)+\left(1-s\left(\boldsymbol{\xi}_{i}\right)\right)y_{i}\left(0\right)\right]$,
which plays an integral role in our empirical application.

Beyond the estimation of treatment effects, another noteworthy application
area of the concept of network heterogeneity, $z_{*i}\left(\boldsymbol{f}_{*}\right)$,
lies in improving structural estimations. Building upon and extending
the research by \citet{farrell2021individual}, the observation-specific
coefficients in structural models can be formulated as functions of
the observation-specific local network data via the concept of network
heterogeneity. This flexibility allows, for instance, the individual-specific
consumer demand elasticities to vary over consumers' social network
scenarios in demand estimations. As a result, the network heterogeneity
enriched structural models can be powerful to study key economic parameters,
such as elasticity and surplus, and to answer policy questions, such
as optimal pricing and targeting. The empirical exploration of these
diverse applications presents an opportunity for future research.

We now introduce some additional notations to describe network topology.
First, it would be convenient to transform the set $\mathcal{N}\left(i\right)$
into a tuple (i.e., an ordered list which may contain multiple occurrences
of the same element). Without loss of generality, we could fix the
labels of nodes in a sample using natural numbers. Then, let $\mathcal{D}\left(i\right)$
denote the tuple of nodes in the set $\mathcal{N}\left(i\right)$
arranged according to the order of natural numbers. For instance,
set $\mathcal{D}\left(i\right)=\left(1,2,3\right)$ if $\mathcal{N}\left(i\right)=\left\{ 1,2,3\right\} $
and $\mathcal{D}\left(i\right)=\emptyset$ if $\mathcal{N}\left(i\right)=\emptyset$.
Also, denote the concatenation of two tuples as $\left(a_{1},...,a_{m}\right)\oplus\left(b_{1},...,b_{n}\right)=\left(a_{1},...,a_{m},b_{1},...,b_{n}\right)$.\footnote{If there are empty tuples, set $\left(a_{1},...,a_{m}\right)\oplus\emptyset=\left(a_{1},...,a_{m}\right)$,
$\emptyset\oplus\left(b_{1},...,b_{n}\right)=\left(b_{1},...,b_{n}\right)$,
$\emptyset\oplus\emptyset=\emptyset$, and $\oplus_{j\in\emptyset}\mathcal{D}\left(j\right)=\emptyset$.} Then, let $\mathcal{D}_{l}\left(i\right)$ denote the tuple of nodes
whose distance from node $i$ is $l$ for $l\in\left\{ 0,1,...,L\right\} $.
In particular, define $\mathcal{D}_{l}\left(i\right)$ recursively
such that $\mathcal{D}_{0}\left(i\right)=\left(i\right)$ and $\mathcal{D}_{l}\left(i\right)=\oplus_{j\in\mathcal{\mathcal{D}}_{l-1}\left(i\right)}\mathcal{D}\left(j\right)$.
Using the example in Figure \ref{fig: GNN 2-layers architecture }
as an illustration, we have that $\mathcal{D}_{0}\left(i\right)=\left(i\right)$,
which includes node $i$ itself; $\mathcal{D}_{1}\left(i\right)=\mathcal{D}\left(i\right)=\left(1,2,3\right)$,
containing the immediate neighbors of node $i$; and $\mathcal{D}_{2}\left(i\right)=\oplus_{j\in\left(1,2,3\right)}\mathcal{D}\left(j\right)=\left(4,5,6,7,8,9\right)$,
including the neighbors of node $i$'s neighbors.

Moreover, let $\boldsymbol{\xi}_{i,l}$ denote the local network information
for node $i$ up to distance $l$ for $l\in\left\{ 0,1,...,L\right\} $.
Specifically, define $\boldsymbol{\xi}_{i,l}$ recursively such that
$\boldsymbol{\xi}_{i,0}=\left(\boldsymbol{x}_{i}\right)$ and $\boldsymbol{\xi}_{i,l}=\left(\boldsymbol{\xi}_{i,l-1},\left(\left(\boldsymbol{x}_{k}\right)_{k\in\mathcal{D}\left(j\right)}\right)_{j\in\mathcal{D}_{l-1}\left(i\right)}\right)$.\footnote{When encountering the empty tuple $\emptyset$, set $\left(\boldsymbol{x}_{k}\right)_{k\in\emptyset}=\emptyset$
and $\left(\left(\boldsymbol{x}_{k}\right)_{k\in\mathcal{D}\left(j\right)}\right)_{j\in\emptyset}=\left(\emptyset\right)$.} In the example in Figure \ref{fig: GNN 2-layers architecture },
it is easy to find that $\boldsymbol{\xi}_{i,0}=\left(\boldsymbol{x}_{i}\right)$,
$\boldsymbol{\xi}_{i,1}=\left(\boldsymbol{\xi}_{i,0},\left(\left(\boldsymbol{x}_{1},\boldsymbol{x}_{2},\boldsymbol{x}_{3}\right)\right)\right),$
and $\boldsymbol{\xi}_{i,2}=\left(\boldsymbol{\xi}_{i,1},\left(\left(\boldsymbol{x}_{4},\boldsymbol{x}_{5}\right),\left(\boldsymbol{x}_{6},\boldsymbol{x}_{7}\right),\left(\boldsymbol{x}_{8},\boldsymbol{x}_{9}\right)\right)\right).$
This definition of $\boldsymbol{\xi}_{i,l}$ can accommodate scenarios
where nodes share common friends and networks have cycles (see Figure
\ref{fig:Construction-of-xi_i} for illustrations). For simplicity
in notation, we denote the $L$-hop network information from the
perspective of node $i$ as $\boldsymbol{\xi}_{i}\coloneqq\boldsymbol{\xi}_{i,L}$
(omitting reference to $L$). With this definition of $\boldsymbol{\xi}_{i}$,
it is evident that $\boldsymbol{\xi}_{i}$ encapsulates all the needed
input data to formulate the network heterogeneity variable $z_{*i}\left(\boldsymbol{f}_{*}\right)$
for node $i$.

\begin{figure}
\begin{centering}
\includegraphics[scale=0.5]{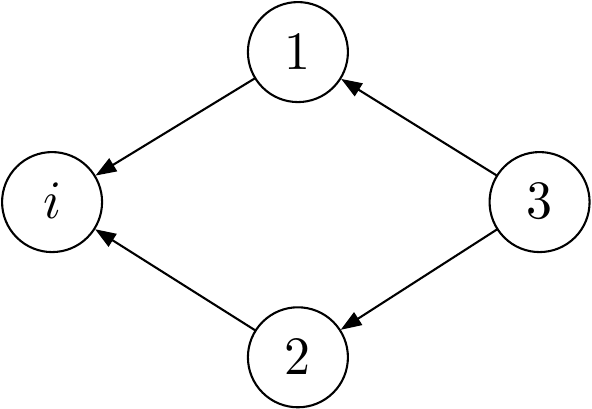} ~~~~~~~~~\includegraphics[scale=0.5]{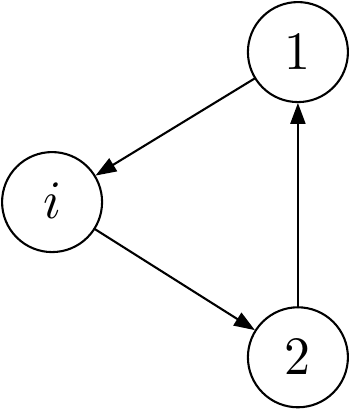}
\par\end{centering}
\caption{Construction of the local network data using the definition of $\boldsymbol{\xi}_{i,l}$\label{fig:Construction-of-xi_i}}
\medskip{}

\raggedright{}%
\noindent\begin{minipage}[t]{1\columnwidth}%
\begin{singlespace}
\begin{flushleft}
{\footnotesize{}Note: Figure \ref{fig:Construction-of-xi_i} illustrates
the construction of $\boldsymbol{\xi}_{i,l}$ when nodes share common
friends and networks have cycles. The figure on the left depicts a
local network around node $i$ which involves common friends (node
$3$ is the common friend of both nodes $1$ and $2$). In this case,
the definition of $\mathcal{D}_{l}\left(i\right)$ implies that $\mathcal{D}_{0}\left(i\right)=\left(i\right)$,
$\mathcal{D}_{1}\left(i\right)=\left(1,2\right)$, $\mathcal{D}_{2}\left(i\right)=\left(3,3\right)$,
and $\mathcal{D}_{3}\left(i\right)=\emptyset$. Also, the definition
$\boldsymbol{\xi}_{i,l}$ provides that $\boldsymbol{\xi}_{i,0}=\left(\boldsymbol{x}_{i}\right)$,
$\boldsymbol{\xi}_{i,1}=\left(\boldsymbol{\xi}_{i,0},\left(\left(\boldsymbol{x}_{1},\boldsymbol{x}_{2}\right)\right)\right)$,
$\boldsymbol{\xi}_{i,2}=\left(\boldsymbol{\xi}_{i,1},\left(\left(\boldsymbol{x}_{3}\right),\left(\boldsymbol{x}_{3}\right)\right)\right)$,
and $\boldsymbol{\xi}_{i,3}=\left(\boldsymbol{\xi}_{i,2},\left(\emptyset,\emptyset\right)\right)$.
In addition, the figure on the right depicts a local network with
a cycle. In this case, $\mathcal{D}_{0}\left(i\right)=\left(i\right)$,
$\mathcal{D}_{1}\left(i\right)=\left(1\right)$, $\mathcal{D}_{2}\left(i\right)=\left(2\right)$,
and $\mathcal{D}_{3}\left(i\right)=\left(i\right)$. Also, $\boldsymbol{\xi}_{i,0}=\left(\boldsymbol{x}_{i}\right)$,
$\boldsymbol{\xi}_{i,1}=\left(\boldsymbol{\xi}_{i,0},\left(\left(\boldsymbol{x}_{1}\right)\right)\right)$,
$\boldsymbol{\xi}_{i,2}=\left(\boldsymbol{\xi}_{i,1},\left(\left(\boldsymbol{x}_{2}\right)\right)\right)$,
and $\boldsymbol{\xi}_{i,3}=\left(\boldsymbol{\xi}_{i,2},\left(\left(\boldsymbol{x}_{i}\right)\right)\right)$.
The construction of $\boldsymbol{\xi}_{i,l}$ for $l>3$ can be obtained
analogously. These two examples demonstrate that the definition of
$\boldsymbol{\xi}_{i,l}$ can incorporate the settings when nodes
have mutual friends and networks have cycles.}
\par\end{flushleft}
\end{singlespace}
\end{minipage}
\end{figure}

\subsection{GNN estimator}

We now introduce the GNN estimator, which is utilized to estimate
the network heterogeneity variable $z_{*i}\left(\boldsymbol{f}_{*}\right)$.
In particular, for each layer $l\in\left[L\right]$, the latent feature
embeddings are updated recursively using shallow neural networks as
{\small{}
\[
\boldsymbol{h}_{i}^{\left(l\right)}=\mathbf{\bm{\sigma}}\left(\boldsymbol{A}^{\left(l\right)}\boldsymbol{h}_{i}^{\left(l-1\right)}+\boldsymbol{A}_{\mathcal{N}}^{\left(l\right)}\overline{\boldsymbol{h}}_{\mathcal{N}\left(i\right)}^{\left(l-1\right)}+\boldsymbol{b}^{\left(l\right)}\right)\in\mathbb{R}^{d_{h}^{\left(l\right)}},
\]
}where $\bm{\sigma}$ is an activation function applied element-wise,
and $\left(\boldsymbol{A}^{\left(l\right)},\boldsymbol{A}_{\mathcal{N}}^{\left(l\right)},\boldsymbol{b}^{\left(l\right)}\right)$
are parameters to be estimated. The initial feature embeddings are
defined by $\boldsymbol{h}_{i}^{(0)}=\boldsymbol{x}_{i}$, and the
mean feature embeddings of neighboring nodes are $\overline{\boldsymbol{h}}_{\mathcal{N}\left(i\right)}^{\left(l-1\right)}=\frac{1}{\left|\mathcal{N}\left(i\right)\right|}\sum_{j\in\mathcal{N}\left(i\right)}\boldsymbol{h}_{j}^{\left(l-1\right)}$
(if $\mathcal{N}\left(i\right)$ is empty, set $\overline{\boldsymbol{h}}_{\mathcal{N}\left(i\right)}^{\left(l-1\right)}=\boldsymbol{0}$).
Then, the feature of node $i$ is obtained through the transformation

{\small{}
\begin{equation}
z_{i}\left(\boldsymbol{\theta}\right)=\boldsymbol{a}\cdot\boldsymbol{h}_{i}^{\left(L\right)}+b\in\mathbb{R},\label{eq: z_i_theta}
\end{equation}
}where $\left(\boldsymbol{a},b\right)$ are additional parameters,
and $\boldsymbol{\theta}$ is the vectorization of all parameters
appeared in the construction of $z_{i}\left(\boldsymbol{\theta}\right)$.
Note that, given $\boldsymbol{\theta}$, $z_{i}\left(\boldsymbol{\theta}\right)$
can be fully constructed using the input data $\boldsymbol{\xi}_{i}$.

Denote the parameter space of $\boldsymbol{\theta}$ by $\Theta_{d_{h}}$,
which is indexed by the dimensions of the embeddings for each layer,
i.e., $d_{h}=\left\{ d_{h}^{\left(l\right)}\right\} _{l\in\left[L\right]}$.
In particular,{\small{}
\[
\Theta_{d_{h}}=\Theta_{\left\{ d_{h}^{\left(1\right)},...,d_{h}^{\left(L\right)}\right\} }=\left\{ \boldsymbol{a}\in\mathbb{R}^{d_{h}^{\left(L\right)}},b\in\mathbb{R},\left\{ \boldsymbol{A}^{\left(l\right)},\boldsymbol{A}_{\mathcal{N}}^{\left(l\right)}\in\mathbb{R}^{d_{h}^{\left(l\right)}\times d_{h}^{\left(l-1\right)}},\boldsymbol{b}^{\left(l\right)}\in\mathbb{R}^{d_{h}^{\left(l\right)}}\right\} _{l\in\left[L\right]}\right\} ,
\]
}with $d_{h}^{\left(0\right)}=d$ being the number of covariates.

Moreover, denote $\left\Vert z_{i}\left(\boldsymbol{\theta}\right)\right\Vert _{\infty}=\underset{\boldsymbol{\xi}_{i}}{\text{sup}}\left|z_{i}\left(\boldsymbol{\theta}\right)\right|$,
which, given $\boldsymbol{\theta}$, is the largest value of $\left|z_{i}\left(\boldsymbol{\theta}\right)\right|$
over all possible $L$-hop local networks $\boldsymbol{\xi}_{i}$.
The value of $\left\Vert z_{i}\left(\boldsymbol{\theta}\right)\right\Vert _{\infty}$
depends on $\boldsymbol{\theta}$ and is common across all nodes.
We further refine the set $\Theta_{d_{h}}$ so that $\left\Vert z_{i}\left(\boldsymbol{\theta}\right)\right\Vert _{\infty}$
is bounded from above, which results in a new set defined as{\small{}
\[
\Theta_{d_{h},\bar{z}}=\left\{ \boldsymbol{\theta}\in\Theta_{d_{h}}:\left\Vert z_{i}\left(\boldsymbol{\theta}\right)\right\Vert _{\infty}\leq\bar{z}\right\} ,
\]
}where $\bar{z}$ is a fixed constant used throughout the analysis.

To estimate the parameters $\boldsymbol{\theta}$, we consider the
optimization problem{\small{}
\begin{equation}
\hat{\boldsymbol{\theta}}\in\underset{\begin{array}{c}
\boldsymbol{\theta}\in\Theta_{d_{h},\bar{z}}\end{array}}{\arg\min}\frac{1}{n}\sum_{i=1}^{n}\ell\left(y_{i},z_{i}\left(\boldsymbol{\theta}\right)\right),\label{eq:loss}
\end{equation}
}with $\ell$ being the same loss function used in the formulation
of $z_{*i}\left(\boldsymbol{f}_{*}\right)$ in (\ref{eq: define_f_star}).
We refer to $z_{i}\left(\hat{\boldsymbol{\theta}}\right)$ as the
GNN estimator of the network heterogeneity $z_{*i}\left(\boldsymbol{f}_{*}\right)$.
The minimizer $\hat{\boldsymbol{\theta}}$ and hence $z_{i}\left(\hat{\boldsymbol{\theta}}\right)$
may not be unique, while our main theoretical result in Theorem \ref{thm: rate_of_convergence}
is valid for any minimizer $\hat{\boldsymbol{\theta}}$ that solves
the problem in (\ref{eq:loss}). Without loss of generality, we let
$\bar{z}>1.1M$ with $M$ being the upper bound of $\left|z_{*i}\left(\boldsymbol{f}_{*}\right)\right|$
as imposed in Assumption I 1 below (e.g., we could set $\bar{z}=2M$
as in \citealp{farrell2021deep}). In the following, we present the
theoretical properties of the GNN estimator $z_{i}\left(\hat{\boldsymbol{\theta}}\right)$
to estimate the target object $z_{*i}\left(\boldsymbol{f}_{*}\right)$.

\section{Theoretical property \label{sec: Theoretical-property}}

In this section, we present the convergence rate of the GNN estimator.
We first define the Sobolev space {\small{}
\[
\mathcal{W}_{\eta}^{\beta,\infty}\left(\Omega\right)=\left\{ f:\sum_{0\leq|\boldsymbol{k}|\leq\beta}\underset{\begin{array}{c}
\boldsymbol{x}\in\Omega\end{array}}{\ensuremath{\text{ess}\ \text{sup}}}\left|\frac{\partial^{|\mathbf{k}|}f}{\partial x_{1}^{k_{1}}...\partial x_{p}^{k_{p}}}\right|\leq\eta\right\} ,
\]
}where $\Omega\subseteq\mathbb{R}^{p}$, $\boldsymbol{k}\coloneqq\left(k_{1},...,k_{p}\right)$
with $k_{1},...,k_{p}$ being nonnegative integers, $\left|\boldsymbol{k}\right|=k_{1}+...+k_{p}$,
and $\boldsymbol{x}=\left(x_{1},...,x_{p}\right)$ being an argument
of $f$. Also, we define{\small{}
\[
\boldsymbol{\theta}_{*}\in\underset{\begin{array}{c}
\boldsymbol{\theta}\in\Theta_{d_{h},\bar{z}}\end{array}}{\ensuremath{\text{arg}\min}}\mathbb{E}\left[\ell\left(y_{i},z_{i}\left(\boldsymbol{\theta}\right)\right)\right].
\]
}{\small\par}

We now introduce the assumptions to demonstrate the convergence rate.

\subsection{Assumption I}
\begin{enumerate}
\item $f_{*}^{\left(L\right)}:\left[-1,1\right]^{2d_{h*}^{\left(L-1\right)}}\rightarrow\left[-M,M\right]$
belongs to the Sobolev space $\mathcal{W}_{\eta}^{\beta,\infty}\left(\left[-1,1\right]^{2d_{h*}^{\left(L-1\right)}}\right)$
and if $L\geq2$, $f_{*j_{l}}^{\left(l\right)}:\left[-1,1\right]^{2d_{h*}^{\left(l-1\right)}}\rightarrow\left[-1,1\right]$
belongs to the Sobolev space $\mathcal{W}_{\eta}^{\beta,\infty}\left(\left[-1,1\right]^{2d_{h*}^{\left(l-1\right)}}\right)$
for each $l\in\left[L-1\right]$ and $j_{l}\in\left[d_{h*}^{\left(l\right)}\right]$.
The smoothness parameter $\beta\ge1$ is a fixed positive integer.
The upper bound $\eta$, number of layers $L$, and number of observed
and hidden features $d,d_{h*}^{(1)},...,d_{h*}^{(L-1)}$ are finite
fixed constants. $\boldsymbol{x}_{i}\in\left[-1,1\right]^{d}$ for
every $i\in\left[n\right]$. 
\item The activation function, $\sigma:\mathbb{R}\rightarrow\mathbb{R}$,
is infinitely differentiable and non-polynomial, and this function
can be computed using a finite number of operations listed in Lemma
\ref{lem: Anthony and Bartlett Theorems 8.4 and 8.14}.
\item There exist finite positive fixed constants $c_{1}$, $c_{2}$, $c_{3}$,
and $c_{\ell}$ such that{\small{}
\begin{align}
c_{1}\mathbb{E}\left[\left(z_{i}\left(\boldsymbol{\theta}\right)-z_{*i}\left(\boldsymbol{f}_{*}\right)\right)^{2}\right] & \le\mathbb{E}\left[\ell\left(y_{i},z_{i}\left(\boldsymbol{\theta}\right)\right)-\ell\left(y_{i},z_{*i}\left(\boldsymbol{f}_{*}\right)\right)\right]\le c_{2}\mathbb{E}\left[\left(z_{i}\left(\boldsymbol{\theta}\right)-z_{*i}\left(\boldsymbol{f}_{*}\right)\right)^{2}\right],\label{eq: A1_3_curvature_1}\\
c_{3}\mathbb{E}\left[\left(z_{i}\left(\boldsymbol{\theta}\right)-z_{i}\left(\boldsymbol{\theta}_{*}\right)\right)^{2}\right] & \leq\mathbb{E}\left[\ell\left(y_{i},z_{i}\left(\boldsymbol{\theta}\right)\right)-\ell\left(y_{i},z_{i}\left(\boldsymbol{\theta}_{*}\right)\right)\right],\label{eq: A1_3_curvature_2}\\
\left|\ell\left(y,z_{1}\right)-\ell\left(y,z_{2}\right)\right| & \le c_{\ell}\left|z_{1}-z_{2}\right|,\label{eq: A1_3_lipschitz}
\end{align}
}for each $\boldsymbol{\theta}\in\Theta_{d_{h},\bar{z}}$, $y\in\mathcal{Y}$,
and $z_{1},z_{2}\in\left[-\bar{z},\bar{z}\right]$, where $\mathcal{Y}$
is the support of the outcome variable.
\item $\left(y_{i},\boldsymbol{\xi}_{i}\right)$ is identically distributed
over $i\in\left[n\right]$. There exists a sequence of constants $c_{n}$
such that $\max_{i\in\left[n\right]}\left|\mathcal{N}\left(i\right)\right|\leq c_{n}$
a.s.
\end{enumerate}
Assumption I 1 postulates that the functions composing network heterogeneity
are smooth with bounded input and output spaces. Given the bounded
nature of these input and output spaces, it is innocuous to normalize
the input and output spaces of each internal layer to be unit cubes.
Our theory can be adapted to allow the dimensions of hidden embeddings,
$d_{h*}^{\left(1\right)},...,d_{h*}^{\left(L-1\right)}$, to grow
slowly with the sample size, but we omit that complexity for concise
presentation. 

Assumption I 2 accommodates common activation functions such as the
sigmoid, $\sigma\left(x\right)=\frac{1}{1+\exp\left(-x\right)}$,
which takes four operations, and the tanh, $\sigma\left(x\right)=\frac{\exp\left(x\right)-\exp\left(-x\right)}{\exp\left(x\right)+\exp\left(-x\right)}$,
which takes six operations. As this paper focuses on the most prevalently
used GNN architectures, where each layer comprises a shallow neural
network, our setup does not account for activation functions that
are particularly effective for deep neural networks, such as the ReLU,
which is non-differentiable at zero. 

Assumption I 3 posits that the curvature of the loss function is bounded
from below and above at $z_{*i}\left(\boldsymbol{f}_{*}\right)$,
and bounded from below at $z_{i}\left(\boldsymbol{\theta}_{*}\right)$.
Furthermore, the loss function is Lipschitz continuous with respect
to the second argument. Regarding (\ref{eq: A1_3_curvature_1}) and
(\ref{eq: A1_3_lipschitz}), suppose $\mathbb{E}\left[y_{i}\mid\boldsymbol{h}_{*i}^{\left(L-1\right)}\left(\boldsymbol{f}_{*}\right),\overline{\boldsymbol{h}}_{*\mathcal{N}\left(i\right)}^{\left(L-1\right)}\left(\boldsymbol{f}_{*}\right)\right]=\mathbb{E}\left[y_{i}\mid\boldsymbol{\xi}_{i}\right]$,
which indicate that the penultimate layer latent embeddings sufficiently
summarize the local neighborhood information needed for the conditional
mean of the outcome variable. Then, \citet{farrell2021deep} shows
that (\ref{eq: A1_3_curvature_1}) and (\ref{eq: A1_3_lipschitz})
hold true for many commonly used loss functions, including the least
squares loss for bounded real-valued outcome variables, and logistic
loss for binary outcome variables. Condition (\ref{eq: A1_3_curvature_2})
is introduced in Section 5.2 of \citet{bartlett2005local} with valid
examples. If we could show that $\hat{\boldsymbol{\theta}}$ is a
consistent estimator of $\boldsymbol{\theta}_{*}$, we can relax this
condition by assuming it holds only for $\boldsymbol{\theta}\in\Theta_{d_{h},\bar{z}}$
in a small neighborhood around $\boldsymbol{\theta}_{*}$, and this
weaker condition can be easily verified for any loss function under
mild regularity conditions. Meanwhile, the consistency of the M-estimator
$\hat{\boldsymbol{\theta}}$ can be established using Theorem 5.7
in \Citet{Vaart1998}, where the uniform convergence of the criterion
function can be demonstrated. Yet, to simply the proof, we employ
the stronger condition in this paper as in \citet{bartlett2005local}.

Under Assumption I 4, $\left(y_{i},\boldsymbol{\xi}_{i}\right)$ is
identically distributed over $i\in\left[n\right]$, and each node
in the sample of size $n$ has at most $c_{n}$ neighbors. It would
be more precise to use the triangular array notation such that for
each $n$, $\left\{ \left(y_{n,i},\boldsymbol{\xi}_{n,i}\right)\right\} _{i\in\left[n\right]}$
denotes a sequence of identically distributed set of random variables
each distributed as $\left(y_{n,i},\boldsymbol{\xi}_{n,i}\right)\sim P_{n}$.
To streamline our notation, we suspend the use of subscript $n$ in
$\left(y_{n,i},\boldsymbol{\xi}_{n,i}\right)$ and in any object depending
on them ($c_{n}$ is an exception), although our analysis is valid
for the setting where the distribution of $\left(y_{n,i},\boldsymbol{\xi}_{n,i}\right)$
varies with $n$.

Assumption I 4 does not require the set of random variables $\left(y_{i},\boldsymbol{\xi}_{i}\right)$
to be independent across $i$. In our context, accounting for dependence
is important. Examples of the sources of dependence include (1) node
attributes $\boldsymbol{x}_{i}$ could be dependent across $i$, (2)
the $L$-hop local neighborhood $\boldsymbol{\xi}_{i}$ may overlap
across $i$, and (3) the edge formation process itself might induce
dependence. To account for dependence, we utilize the concepts of
\textit{dependency graph} and \textit{smallest proper cover}, as introduced
below. 
\begin{defn}
\label{def: Dependence-graph}{[}Dependency graph{]} A graph $G=\left(V,E\right)$
is the dependency graph associated to $\left\{ \left(y_{i},\boldsymbol{\xi}_{i}\right)\right\} _{i\in\left[n\right]}$
if (i) $V=\left[n\right]$ is the set of vertices, and (ii) $E\subseteq V\times V$
is the set of edges where $\left(i,j\right)\in E$ if and only if
$\left(y_{i},\boldsymbol{\xi}_{i}\right)$ and $\left(y_{j},\boldsymbol{\xi}_{j}\right)$
are dependent.
\end{defn}

\begin{defn}
\label{def: Smallest-proper-cover}{[}Smallest proper cover{]} Let
$G=\left(V,E\right)$ be a dependency graph. For some positive integer
$J$, $\mathcal{C}=\left\{ \mathcal{C}_{j}\right\} _{j\in\left[J\right]}$
is a proper cover of $G$ if (i) $\mathcal{C}_{j}$'s are disjoint
subsets of $V$ and $\cup_{j\in\left[J\right]}\mathcal{C}_{j}=V$,
and (ii) $\forall j\in\left[J\right]$, there are no connections based
on $E$ between vertices within $\mathcal{C}_{j}$ (i.e., the nodes
in $\mathcal{C}_{j}$ are independent). A smallest proper cover of
$G$ is a proper cover of $G$ with the smallest size $J$.
\end{defn}

Given the dependency graph of $\left\{ \left(y_{i},\boldsymbol{\xi}_{i}\right)\right\} _{i\in\left[n\right]}$,
we could construct the smallest proper cover based on the dependency
graph denoted by $\left\{ \mathcal{C}_{j}\right\} _{j\in\left[J\right]}$.\footnote{If the smallest proper cover is not unique, selecting any one of them
as $\left\{ \mathcal{C}_{j}\right\} _{j\in\left[J\right]}$ would
suffice.} By construction, $\left(y_{i},\boldsymbol{\xi}_{i}\right)$'s are
independent \textit{within} each cover $\mathcal{C}_{j}$ but can
be arbitrarily correlated \textit{across} covers. Both the number
of covers $J$ and the set of covers $\left\{ \mathcal{C}_{j}\right\} _{j\in\left[J\right]}$
depend on the sample size $n$, although we leave out the subscript
$n$ to simplify notations. In our analysis, we treat $\left\{ \mathcal{C}_{j}\right\} _{j\in\left[J\right]}$
as fixed for each sample size $n$, otherwise the analysis becomes
overly complicated.

The construction of the smallest proper cover based on dependency
graph can be better understood with the help of graph coloring, as
discussed in \citet{ralaivola2015entropy}. The idea of graph coloring
is to assign different colors to nodes in a graph so that no two adjacent
nodes share the same color. Using the \textit{dependency graph} for
graph coloring, nodes assigned to the same color are necessarily independent,
thereby enabling the construction of a proper cover in which each
cover $\mathcal{C}_{j}$ includes only the nodes of the same color
$j$. Note that this graph coloring is based on the \textit{dependency
graph}, rather than the observed network. Subsequently, a smallest
proper cover can be obtained through graph coloring using the \textit{minimal}
number of colors needed.\footnote{While the concept of graph coloring helps to understand the theoretical
results, in practice there is no necessity for empirical researchers
to construct the smallest proper cover for inference, as demonstrated
in Section \ref{sec:Empirical-Application} for the semiparametric
inference.}

\textbf{Remark} {[}Motivating example{]}: To illustrate, consider
the setting of our empirical application, where the dataset includes
many villages. In this setting, we suppose that households \textit{within}
the same village can form links and have correlated attributes and
outcome variables. However, households \textit{across} different villages
are unable to form links and their attributes and outcome variables
are considered independent. This leads to a specific pattern in our
dependency graph for the set $\left\{ \left(y_{i},\boldsymbol{\xi}_{i}\right)\right\} _{i\in\left[n\right]}$:
households within each village are interconnected, forming a complete
graph, while there are no connections between households from different
villages.

Given that households in the same village are completely connected
in the dependency graph, the graph coloring principle dictates that
each household in the same village must be assigned a unique color.
Hence, the \textit{number of covers} of a smallest proper cover ($J$),
which equates to the \textit{smallest number of colors} needed, corresponds
to the \textit{size of the largest village}. Moreover, the largest
cover size $\left(\max_{j\in\left[J\right]}\left|\mathcal{C}_{j}\right|\right)$
cannot exceed the number of villages. With appropriate coloring, we
could set $\left|\mathcal{C}_{j}\right|\asymp\frac{n}{J}$ for every
$j\in\left[J\right]$, ensuring that each cover is similar in size.

As discussed previously, our study treats the smallest proper cover
$\left\{ \mathcal{C}_{j}\right\} _{j\in\left[J\right]}$ as fixed
for each sample size $n$. Additionally, we assume that $\left(y_{i},\boldsymbol{\xi}_{i}\right)$
is identically distributed over $i\in\left[n\right]$, and the maximum
degree in $\boldsymbol{D}$ is bounded by $c_{n}$. Now we delve into
additional details to justify this modeling choice.

First, suppose that nature assigns each household to a village. Both
the number of villages and the specific assignment of households to
these villages are considered fixed for each sample size, conditional
on which we form our statistical analysis. Then, given the village
assignment, households randomly draw their vector of observed covariates
$\boldsymbol{x}_{i}$ and unobserved attributes $\boldsymbol{\epsilon}_{i}$
such that (1) $\left(\boldsymbol{x}_{i},\boldsymbol{\epsilon}_{i}\right)$'s
are identically distributed across $i\in\left[n\right]$, (2) $\left(\boldsymbol{x}_{i},\boldsymbol{\epsilon}_{i}\right)$'s
are independent for households in different villages, and (3) $\left(\boldsymbol{x}_{i},\boldsymbol{\epsilon}_{i}\right)$'s
are exchangeable over $i$ within the same village. Next, households
in the same village form network connections. This process could follow
various commonly used network formation models. Examples include the
Erd\H{o}s-R\'enyi model, in which each pair of nodes is connected
with a constant probability, the Barab\'asi-Albert model, which introduces
a preferential attachment process, and a generalized Erd\H{o}s-R\'enyi
model, where the probability of edge formation between two nodes $i$
and $j$ depends on their respective attributes $\left(\boldsymbol{x}_{i},\boldsymbol{\epsilon}_{i}\right)$
and $\left(\boldsymbol{x}_{j},\boldsymbol{\epsilon}_{j}\right)$,
and so on. An additional modification to the network formation model
can be applied to fulfill the requirement that each household has
at most $c_{n}$ connections, as long as this adjustment is identical
across $i\in\left[n\right]$. Finally, the outcome variable $y_{i}$
is formed based on $\boldsymbol{\xi}_{i}$, and possibly on the unobserved
attributes $\boldsymbol{\epsilon}_{i}$.

Following this procedure, if we are under the scenario where the sizes
of villages are either identical or approximately similar (to the
extent that the discrepancy in village size is negligible when $n$
is large), the assumption that $\left(y_{i},\boldsymbol{\xi}_{i}\right)$
is identically distributed over $i\in\left[n\right]$ holds true,
or at least serves as a reasonable approximation. We defer the relaxation
of the identical-distribution assumption to future research. $\square$

\subsection{Rate of convergence}

Utilizing the concept of the smallest proper cover $\left\{ \mathcal{C}_{j}\right\} _{j\in\left[J\right]}$,
we now present the population and empirical $L_{2}$ convergence rates
of the GNN estimator in Theorem \ref{thm: rate_of_convergence}. The
proof of this theorem can be found in Appendix \ref{sec:Proof-of-Theorem 1}.
\begin{thm}
\label{thm: rate_of_convergence}Under Assumptions I 1-4, for $k=2,4$
or $6$ and $s=L$ or $2L-1$ ($k$ and $s$ depend on the activation
function $\sigma\left(\cdot\right)$ and the number of layers $L$),
with probability at least $1-\text{\ensuremath{\exp\left(-\rho\right),}}${\small{}
\[
\mathbb{E}\left[\left(z_{i}\left(\hat{\boldsymbol{\theta}}\right)-z_{*i}\left(\boldsymbol{f}_{*}\right)\right)^{2}\right]\leq C\cdot\left(\left(\frac{1}{n}\sum_{j=1}^{J}\left(1+\log\left|\mathcal{C}_{j}\right|\right)\cdot\left(c_{n}\right)^{s}\right)^{\frac{\beta}{\beta+kd_{h*}}}+\frac{J\log J+J\rho}{n}\right),
\]
}and with probability at least $1-\text{\ensuremath{2\exp\left(-\rho\right)},}${\small{}
\[
\frac{1}{n}\sum_{i=1}^{n}\left(z_{i}\left(\hat{\boldsymbol{\theta}}\right)-z_{*i}\left(\boldsymbol{f}_{*}\right)\right)^{2}\leq C\left(\left(\frac{1}{n}\sum_{j=1}^{J}\left(1+\log\left|\mathcal{C}_{j}\right|\right)\cdot\left(c_{n}\right)^{s}\right)^{\frac{\beta}{\beta+kd_{h*}}}+\frac{J\log J+J\rho}{n}\right),
\]
}where $\beta$ is the smoothness parameter, $d_{h*}=d\vee d_{h*}^{(1)}\vee...\vee d_{h*}^{(L)}$
with $d$ being the number of covariates and $d_{h*}^{\left(l\right)}$
being the number of latent embeddings in each layer, $J$ is the number
of covers in the smallest proper cover as previously discussed, $\left|\mathcal{C}_{j}\right|$
is the size of each cover, and $c_{n}$ is the largest number of peers
a node could have. In particular, when $\sigma\left(\cdot\right)$
does not involve taking an exponential, we have $s=L$, and $k=2$
if $L=1$ and $k=4$ if $L\geq2$. When $\sigma\left(\cdot\right)$
involves taking an exponential, we have $s=2L-1$, and $k=4$ if $L=1$
and $k=6$ if $L\geq2$. The above results hold by setting $d_{h}^{\left(l\right)}\asymp\left(\frac{1}{n}\sum_{j=1}^{J}\left(1+\log\left|\mathcal{C}_{j}\right|\right)\cdot\left(c_{n}\right)^{s}\right)^{-\frac{d_{h*}}{\beta+kd_{h*}}}$
for every $l\in\left[L\right]$.
\end{thm}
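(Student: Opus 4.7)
The plan is to bound the excess population risk $\mathbb{E}[\ell(y_i,z_i(\hat{\boldsymbol{\theta}}))-\ell(y_i,z_{*i}(\boldsymbol{f}_*))]$ and then convert it to the population $L_2$ error via the two-sided curvature in Assumption I 3. I decompose the excess risk into an approximation piece $\mathbb{E}[\ell(y_i,z_i(\boldsymbol{\theta}_*))-\ell(y_i,z_{*i}(\boldsymbol{f}_*))]$ and an estimation piece $\mathbb{E}[\ell(y_i,z_i(\hat{\boldsymbol{\theta}}))-\ell(y_i,z_i(\boldsymbol{\theta}_*))]$, bound each as a function of the common hidden width $d_h$, and balance them to obtain the optimal width.

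\textbf{Approximation error.} Since the loss is Lipschitz in its second argument (Assumption I 3), the approximation excess risk is controlled by the sup-norm distance from $z_{*i}(\boldsymbol{f}_*)$ to its best feasible approximant in $\Theta_{d_h,\bar z}$. Each $f_{*j}^{(l)}$ lies in a Sobolev ball of smoothness $\beta$ on a cube of dimension at most $2d_{h*}$, and $\sigma$ is smooth and non-polynomial (Assumptions I 1 and I 2), so classical shallow-network approximation in the spirit of Mhaskar---as extended to deep networks with smooth activations by \citet{farrell2021deep}---produces a width-$d_h^{(l)}$ approximant of $f_{*j}^{(l)}$ with sup-norm error of order $(d_h^{(l)})^{-\beta/d_{h*}}$ and with parameters controlled so that the resulting $z_i(\boldsymbol{\theta})$ stays within $[-\bar z,\bar z]$. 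I then propagate these layer-wise errors through the compositional GNN structure: the averaging aggregator is $1$-Lipschitz and each $f_*^{(l)}$ carries a Lipschitz constant bounded by its Sobolev norm $\eta$, so the accumulated sup-norm error across the $L$ layers remains of the same order as the per-layer error, up to fixed constants depending on $L$ and $\eta$. Combined with Lipschitz-$\ell$, the approximation contribution to the excess risk is of order $(d_h)^{-\beta/d_{h*}}$, with $d_{h*}=d\vee d_{h*}^{(1)}\vee\cdots\vee d_{h*}^{(L)}$.

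\textbf{Estimation error (main obstacle).} I adapt the localization analysis of \citet{bartlett2005local} to dependent data via the dependency graph. The key device is the smallest proper cover $\{\mathcal{C}_j\}_{j\in[J]}$: within each $\mathcal{C}_j$ the summands $\ell(y_i,z_i(\boldsymbol{\theta}))$ are independent, so I decompose $\frac{1}{n}\sum_{i=1}^n=\frac{1}{n}\sum_{j=1}^J\sum_{i\in\mathcal{C}_j}$ and treat each inner sum as an iid empirical process of effective size $|\mathcal{C}_j|$. A within-cover Talagrand-type concentration combined with symmetrization reduces the within-cover estimation error to a Rademacher complexity of the class $\{z_i(\boldsymbol{\theta}):\boldsymbol{\theta}\in\Theta_{d_h,\bar z}\}$, which I bound by a Dudley entropy integral using the pseudo-dimension estimates of Anthony and Bartlett invoked in Assumption I 2. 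This pseudo-dimension scales as $(d_h)^k$, with $k\in\{2,4,6\}$ dictated by whether $\sigma$ involves exponentiation and whether $L\ge 2$, which accounts for the two regimes in the statement. The factor $(c_n)^s$ enters when I bound the sup-norm and the Lipschitz constant of $z_i(\boldsymbol{\theta})$ in $\boldsymbol{\theta}$: the chain rule through the $L$ aggregator-activation layers, each averaging up to $c_n$ neighbor embeddings, inflates these constants by a polynomial of degree $s$ in $c_n$, which feeds into the covering-number bound. The localization peeling argument then delivers a within-cover fast rate of order $(1+\log|\mathcal{C}_j|)(d_h)^k(c_n)^s/|\mathcal{C}_j|$; weighting by $|\mathcal{C}_j|/n$ and summing over $j$ gives the additive form $\frac{(d_h)^k(c_n)^s}{n}\sum_{j=1}^J(1+\log|\mathcal{C}_j|)$, while the union bound across the $J$ covers and the peeling levels contributes the additive term $(J\log J+J\rho)/n$. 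Applying the lower-curvature side of Assumption I 3 converts the excess-risk bound into the population $L_2$ bound, and a further within-cover Bernstein concentration passes from $\frac{1}{n}\sum_i(z_i(\hat{\boldsymbol{\theta}})-z_{*i}(\boldsymbol{f}_*))^2$ to its expectation to produce the empirical $L_2$ bound at the cost of an additional $\exp(-\rho)$ probability allowance.

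\textbf{Balancing and obstacle.} Balancing the approximation contribution $(d_h)^{-\beta/d_{h*}}$ against the estimation contribution $(d_h)^k(c_n)^s\cdot\frac{1}{n}\sum_j(1+\log|\mathcal{C}_j|)$ over a common scaling $d_h^{(l)}\asymp d_h$ yields exactly the prescribed width $d_h^{(l)}\asymp\bigl(\frac{1}{n}\sum_j(1+\log|\mathcal{C}_j|)(c_n)^s\bigr)^{-d_{h*}/(\beta+kd_{h*})}$ and the stated convergence rate $\bigl(\frac{1}{n}\sum_j(1+\log|\mathcal{C}_j|)(c_n)^s\bigr)^{\beta/(\beta+kd_{h*})}$. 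The principal technical burden is the cover-wise localization: standard iid localization results do not apply directly, and jointly tracking the GNN's parameter sharing (which sets the $(d_h)^k$ pseudo-dimension), the influence of the maximum degree $c_n$ on the covering numbers through the $L$-fold chain rule, and the union bound across the $J$ covers---all while preserving the fast-rate dependence on $n$---is the main obstacle.
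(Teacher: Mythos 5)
Your proposal follows the same overall architecture as the paper's proof: split the excess risk into estimation and approximation pieces, localize per cover $\mathcal{C}_j$, control the local Rademacher complexity via Dudley's chaining and a pseudo-dimension bound, and balance against a shallow-network approximation rate. At that level the match is good. But two of the key mechanisms you cite are incorrect, and each matters.

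First, the approximation step. You assert that the per-layer sup-norm approximation error is $(d_h^{(l)})^{-\beta/d_{h*}}$ and that, combined with the Lipschitz property of $\ell$, the excess-risk contribution is $(d_h)^{-\beta/d_{h*}}$. Both steps are off by the same factor of two in the exponent, and they happen to cancel. The relevant approximation result (Lemma~\ref{lem: approximation_infinitely_differentiable}, a variant of Poggio et al.) for a Sobolev function on a cube of dimension $2d_{h*}^{(l-1)}$ gives sup-norm error $\lesssim r^{-\beta/(2d_{h*}^{(l-1)})}$ with $r$ neurons, not $r^{-\beta/d_{h*}}$. The paper then controls the approximation piece of the excess risk through the \emph{upper curvature} side of Assumption I~3, inequality~(\ref{eq: A1_3_curvature_1}): $\mathbb{E}[\ell(y_i,z_i(\boldsymbol{\theta}))-\ell(y_i,z_{*i}(\boldsymbol{f}_*))]\le c_2\,\mathbb{E}[(z_i(\boldsymbol{\theta})-z_{*i}(\boldsymbol{f}_*))^2]$, which is quadratic in the error, not the Lipschitz inequality~(\ref{eq: A1_3_lipschitz}). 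Quadratic control of the $(d_h)^{-\beta/(2d_{h*})}$ sup-norm gives $(d_h)^{-\beta/d_{h*}}$, the stated rate; Lipschitz control of the correct sup-norm would give the worse $(d_h)^{-\beta/(2d_{h*})}$ and destroy the balancing. If you try to make your approximation argument precise you will either discover the inconsistency or land on the wrong exponent, so this is a genuine gap rather than a harmless shortcut.

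Second, the source of the $(c_n)^s$ factor. You attribute it to the Lipschitz constant of $\boldsymbol{\theta}\mapsto z_i(\boldsymbol{\theta})$ inflated by the $L$-fold chain rule through the aggregators, feeding into a covering-number bound. That is the mechanism one would use in a parameter-discretization argument, but it is not what the pseudo-dimension route does, and it is not where the paper's $s$ comes from. In the proof, the covering number is bounded via $\text{Pdim}(\mathcal{G}_{GNN})$ and the Anthony--Bartlett VC-dimension theorems (Lemma~\ref{lem: Anthony and Bartlett Theorems 8.4 and 8.14}), which depend on the number of parameters $d_\theta$ and on the number $t$ of arithmetic operations (and $q$ of exponential evaluations) needed to compute a function in the class. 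Because evaluating $z_i(\boldsymbol{\theta})$ requires iterating over the $L$-hop neighborhood, whose size under Assumption I~4 is $O(\sum_{k\le L}c_n^k)$, both $t$ and $q$ scale as powers of $c_n$, and that is what produces $s=L$ (no exponential) or $s=2L-1$ (with exponential), together with $k\in\{2,4,6\}$ from $d_\theta^2 q^2+d_\theta q t$ versus $d_\theta t$. A Lipschitz-constant argument would yield a different, almost certainly inferior, covering-number exponent, so simply invoking the pseudo-dimension estimates while describing a Lipschitz mechanism does not produce the stated $(c_n)^s(d_h)^k$ bound.

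Beyond these two points, your sketch is compatible with the paper's route: the cover-wise application of Bartlett's localization (via the sub-root function and its fixed point), the Bernstein-type relation $\text{Var}(g)\le T(g)\le B\,\mathbb{E}[g]$ guaranteed by the lower curvature condition~(\ref{eq: A1_3_curvature_2}), the $(J\log J + J\rho)/n$ term from the union bound over covers, the use of the lower curvature side of~(\ref{eq: A1_3_curvature_1}) to pass from excess risk to population $L_2$, and a second application of localization to obtain the empirical $L_2$ bound at the cost of one more $\exp(-\rho)$. Those parts are sound at the level of a proposal; the two gaps above are what would stop you from completing the argument.
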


Theorem \ref{thm: rate_of_convergence} implies that the GNN estimator
is consistent under relatively mild conditions that $J$ and $c_{n}$
do not grow too fast with $n$. In particular, if the growth rates
of $J$ and $c_{n}$ satisfy $J\log J=o\left(n\right)$ and $c_{n}=o\left(\left(\frac{n/J}{\log\left(n/J\right)}\right)^{1/s}\right)$,
then by Jensen's inequality the GNN estimator is consistent in terms
of the population and empirical $L_{2}$ norms.

Theorem \ref{thm: rate_of_convergence} also sheds light on the usage
of the GNN estimator for semiparametric inference. As will be shown
in Section \ref{sec:Empirical-Application}, the GNN estimator can
be used to conduct inference on treatment effects if its empirical
$L_{2}$ convergence rate is $o_{p}(n^{-\frac{1}{4}})$. To achieve
this rate, Theorem \ref{thm: rate_of_convergence} ensures that it
is sufficient if $J$ and $c_{n}$ do not grow too fast with $n$
such that $J\log J=o\left(n^{1/2}\right)$ and {\small{}
\begin{equation}
\frac{\log\left(n/J\right)}{n/J}\left(c_{n}\right)^{s}=o\left(n^{-\frac{\beta+kd_{h*}}{2\beta}}\right).\label{eq: comment_GNN_L2_bound}
\end{equation}
}Suppose that the growth rates of $J$ and $c_{n}$ satisfy $J=O\left(n^{\alpha}\right)$
and $c_{n}=O\left(n^{\gamma}\right)$. Then Condition (\ref{eq: comment_GNN_L2_bound})
is satisfied if $\frac{kd_{h*}}{\beta}<0.999-2\alpha-2\gamma s$,
where $k$ and $s$ are specified as in Theorem \ref{thm: rate_of_convergence}.
This condition is achievable if the growth rates of $J$ and $c_{n}$,
and the dimensions of the input variables and latent embeddings (as
captured by $d_{h*}$) are not too large, and the target function
is sufficiently smooth. 

It is worthwhile to compare the convergence rate of the GNN with that
of the MLP in \citet{farrell2021deep}. Given that the MLP approach
in \citet{farrell2021deep} is not designed for dependent data in
networks, it is appropriate to set $J=O\left(1\right)$ and $c_{n}=O\left(1\right)$
for comparison. As our setting allows observations to be arbitrarily
correlated across covers, setting $J=O\left(1\right)$ ensures that
the \textit{effective} number of independent observations in our setting
is proportional to the sample size $n$, rendering the GNN and MLP
setups comparable. Also, let $k=2$ to consider the GNN model with
one layer, where the activation function does not involve taking exponentials,
mirroring the setting in \citet{farrell2021deep}, and let $d_{h*}=d$
for simplicity. Then, Condition (\ref{eq: comment_GNN_L2_bound})
is satisfied if $\frac{2}{1-\delta}d<\beta$, where $\delta$ can
be any arbitrarily small constant. As a comparison, for the MLP with
$2d$ input variables, which include an observation's own characteristics
and the average characteristics of adjacent nodes, Theorem 3 in \citet{farrell2021deep}
requires that $\beta>2d$ for semiparametric inference. Hence, the
requirements on the smoothness parameter $\beta$ for the GNN and
MLP frameworks are comparable. Also, with more layers ($L\geq2$)
and other settings the same, the smoothness condition becomes $\frac{4}{1-\delta}d<\beta$,
which does not further change with $L$. This suggests that although
the GNN approach could incorporate the characteristics of neighbors
over multiple distances, it does not necessarily lead to a curse of
dimensionality issue with larger $L$ when $L\geq2$.

In our empirical setting with many villages, let $n_{v}$ denote the
number of observations in village $v$, and $\bar{v}$ the number
of villages. Suppose that the village sizes are uniformly proportional
to $J$ (i.e., $\min_{v\in\left[\bar{v}\right]}n_{v}\asymp J=\max_{v\in\left[\bar{v}\right]}n_{v}$),
then Condition (\ref{eq: comment_GNN_L2_bound}) is equivalent to
$\frac{\log\left(\bar{v}\right)}{\bar{v}}\left(c_{n}\right)^{s}=o\left(n^{-\frac{\beta+kd_{h*}}{2\beta}}\right)$.
If the growth rates of $\bar{v}$ and $c_{n}$ satisfy $n^{\eta}\lesssim\bar{v}$
and $c_{n}=O\left(n^{\gamma}\right)$, then Condition (\ref{eq: comment_GNN_L2_bound})
holds when $\frac{kd_{h*}}{\beta}<2\eta-2\gamma s-1.001$, which
is achievable if the number of villages grows sufficiently fast, both
the growth rate of $c_{n}$ and the dimensions of the input variables
and latent embeddings are not too large, and the target function is
sufficiently smooth.

\section{Semiparametric inference \label{sec:Inference}}

In this section, we show that the theoretical guarantee of the GNN
estimator in Theorem \ref{thm: rate_of_convergence} can be applied
to conduct causal inference on parameters with heterogeneous treatment
effects. This section closely follows \citet{farrell2021deep} and
hence our exposition is brief. More detailed explanations of the setup
can be found in \citet{farrell2021deep} and references therein. The
main distinction in this section is our need to adapt the inference
procedure to accommodate dependent data with networks.

\subsection{Setup}

Consider potential outcomes $\left(y_{i}\left(1\right),y_{i}\left(0\right)\right)$
for a binary treatment $t_{i}\in\left\{ 0,1\right\} $. Then the observable
outcome can be expressed as $y_{i}=t_{i}y_{i}\left(1\right)+\left(1-t_{i}\right)y_{i}\left(0\right)$.

To illustrate the idea, we follow \citet{farrell2021deep} and focus
on the \textit{average outcome for a counterfactual treatment policy}
$\pi\left(s\right)=\mathbb{E}\left[s\left(\boldsymbol{\xi}_{i}\right)y_{i}\left(1\right)+\left(1-s\left(\boldsymbol{\xi}_{i}\right)\right)y_{i}\left(0\right)\right]$,
where $s\left(\boldsymbol{\xi}_{i}\right)\in\left\{ 0,1\right\} $
is a known deterministic counterfactual policy that assigns treatment
status to each observation based on $\boldsymbol{\xi}_{i}$. For convenience,
we denote $s_{1}\left(\boldsymbol{\xi}_{i}\right)=s\left(\boldsymbol{\xi}_{i}\right)$
and $s_{0}\left(\boldsymbol{\xi}_{i}\right)=1-s\left(\boldsymbol{\xi}_{i}\right)$.
Similar analysis directly applies to other popular parameters of interest,
such as the average treatment effect (ATE) $\tau=\mathbb{E}\left[y_{i}\left(1\right)-y_{i}\left(0\right)\right]$
and the average treatment effect on the treated (ATT) $\mathbb{E}\left[y_{i}\left(1\right)-y_{i}\left(0\right)\mid t_{i}=1\right]$.

For $t\in\left\{ 0,1\right\} $, define 

{\small{}
\begin{align}
\boldsymbol{f}_{*}^{t} & \in\arg\min_{\boldsymbol{f}}\mathbb{E}\left[\ell_{y}\left(y_{i}\left(t\right),z_{*i}\left(\boldsymbol{f}\right)\right)\right],\ \boldsymbol{f}_{*}^{p}\in\arg\min_{\boldsymbol{f}}\mathbb{E}\left[\ell_{p}\left(t_{i},z_{*i}\left(\boldsymbol{f}\right)\right)\right],\label{eq: def_f_star_t_p}
\end{align}
}where the loss function is the least squares loss for real-valued
dependent variables and logistic loss for binary dependent variables.
Then, $z_{*i}\left(\boldsymbol{f}_{*}^{t}\right)$ and $z_{*i}\left(\boldsymbol{f}_{*}^{p}\right)$
are the individual network heterogeneities of node $i$ that affect
the potential outcomes and treatment assignment, respectively. The
construction of the network heterogeneities $z_{*i}\left(\boldsymbol{f}_{*}^{t}\right)$
and $z_{*i}\left(\boldsymbol{f}_{*}^{p}\right)$ allows their dependence
on node $i$'s local neighborhood surroundings. The optimization problems
in (\ref{eq: def_f_star_t_p}) are specific instances of the one in
(\ref{eq: define_f_star}), where we replace the generic outcome variable
by the potential outcomes and treatment variable.

Then, the optimization problems in (\ref{eq: def_f_star_t_p}) imply
that 
\begin{equation}
\mu_{t}\left(\boldsymbol{\xi}_{i}\right)\coloneqq\mathbb{E}\left[y_{i}\left(t\right)\mid\boldsymbol{h}_{*i,t}^{\left(L-1\right)}\left(\boldsymbol{f}_{*}^{t}\right),\overline{\boldsymbol{h}}_{*\mathcal{N}\left(i\right),t}^{\left(L-1\right)}\left(\boldsymbol{f}_{*}^{t}\right)\right]=\left\{ \begin{array}{c}
z_{*i}\left(\boldsymbol{f}_{*}^{t}\right)\ \text{if \ensuremath{y_{i}\left(t\right)\in\ }\ensuremath{\mathbb{R}}}\\
\frac{\exp\left(z_{*i}\left(\boldsymbol{f}_{*}^{t}\right)\right)}{1+\exp\left(z_{*i}\left(\boldsymbol{f}_{*}^{t}\right)\right)}\ \text{if \ensuremath{y_{i}\left(t\right)\in\left\{  0,1\right\} } }
\end{array}\right.\label{eq: def_mu_t}
\end{equation}
and 
\begin{equation}
p\left(\boldsymbol{\xi}_{i}\right)\coloneqq\Pr\left(t_{i}=1\mid\boldsymbol{h}_{*i,p}^{\left(L-1\right)}\left(\boldsymbol{f}_{*}^{p}\right),\overline{\boldsymbol{h}}_{*\mathcal{N}\left(i\right),p}^{\left(L-1\right)}\left(\boldsymbol{f}_{*}^{p}\right)\right)=\frac{\exp\left(z_{*i}\left(\boldsymbol{f}_{*}^{p}\right)\right)}{1+\exp\left(z_{*i}\left(\boldsymbol{f}_{*}^{p}\right)\right)},\label{eq: def_p_xi_i}
\end{equation}
where, as usual, $\boldsymbol{h}_{*i,t}^{\left(L-1\right)}\left(\boldsymbol{f}_{*}^{t}\right)$
and $\boldsymbol{h}_{*i,p}^{\left(L-1\right)}\left(\boldsymbol{f}_{*}^{p}\right)$
are the penultimate layer latent embeddings in the construction of
$z_{*i}\left(\boldsymbol{f}_{*}^{t}\right)$ and $z_{*i}\left(\boldsymbol{f}_{*}^{p}\right)$,
respectively, and $\overline{\boldsymbol{h}}_{*\mathcal{N}\left(i\right),t}^{\left(L-1\right)}\left(\boldsymbol{f}_{*}^{t}\right)$
and $\overline{\boldsymbol{h}}_{*\mathcal{N}\left(i\right),p}^{\left(L-1\right)}\left(\boldsymbol{f}_{*}^{p}\right)$
are the corresponding average latent embeddings of the adjacent neighbors.

For convenience, we denote the conditional expectation of the potential
outcome $y_{i}\left(t\right)$ and of the treatment status $t_{i}$
as $\mu_{t}\left(\boldsymbol{\xi}_{i}\right)$ and $p\left(\boldsymbol{\xi}_{i}\right),$
as indicated in (\ref{eq: def_mu_t}) and (\ref{eq: def_p_xi_i}).
As such, $\mu_{t}\left(\boldsymbol{\xi}_{i}\right)$ and $p\left(\boldsymbol{\xi}_{i}\right)$
can be expressed by the network heterogeneity variables, $z_{*i}\left(\boldsymbol{f}_{*}^{t}\right)$
and $z_{*i}\left(\boldsymbol{f}_{*}^{p}\right)$. For simplicity,
we denote $\mu_{t}\left(\boldsymbol{\xi}_{i}\right)=h_{\mu}\left(z_{*i}\left(\boldsymbol{f}_{*}^{t}\right)\right)$
and $p\left(\boldsymbol{\xi}_{i}\right)=h_{p}\left(z_{*i}\left(\boldsymbol{f}_{*}^{p}\right)\right)$,
where $h_{\mu}\left(\cdot\right)$ and $h_{p}\left(\cdot\right)$
are some known Lipschitz continuous functions with finite Lipschitz
constants. Also, we denote $p_{1}\left(\boldsymbol{\xi}_{i}\right)=p\left(\boldsymbol{\xi}_{i}\right)$
and $p_{0}\left(\boldsymbol{\xi}_{i}\right)=1-p\left(\boldsymbol{\xi}_{i}\right)$.

Given that $\boldsymbol{f}_{*}^{t}$ and $\boldsymbol{f}_{*}^{p}$
may not be uniquely determined in (\ref{eq: def_f_star_t_p}), the
definitions of $\mu_{t}\left(\boldsymbol{\xi}_{i}\right)$ and $p\left(\boldsymbol{\xi}_{i}\right)$
may also depend on the specific choice of $\boldsymbol{f}_{*}^{t}$
and $\boldsymbol{f}_{*}^{p}$. We omit such dependence in the notations
for simplicity, as only the optimal $\boldsymbol{f}_{*}^{t}$'s and
$\boldsymbol{f}_{*}^{p}$'s that satisfy Assumption II 6 below are
relevant for our discussion and proof.

Then, Theorem \ref{thm: rate_of_convergence}, coupled with the conditions
on $d_{h*}$, $\beta$, $J$ and $c_{n}$, implies that $\mu_{t}\left(\boldsymbol{\xi}_{i}\right)$
and $p\left(\boldsymbol{\xi}_{i}\right)$ can be estimated with sufficient
accuracy for inference. Denote $\hat{\mu}_{t}\left(\boldsymbol{\xi}_{i}\right)=h_{\mu}\left(z_{i}\left(\hat{\boldsymbol{\theta}}_{t}\right)\right)$,
$\hat{p}_{1}\left(\boldsymbol{\xi}_{i}\right)=h_{p}\left(z_{i}\left(\hat{\boldsymbol{\theta}}_{p}\right)\right)$,
and $\hat{p}_{0}\left(\boldsymbol{\xi}_{i}\right)=1-\hat{p}_{1}\left(\boldsymbol{\xi}_{i}\right)$
as the GNN estimators for $\mu_{t}\left(\boldsymbol{\xi}_{i}\right)$,
$p_{1}\left(\boldsymbol{\xi}_{i}\right)$, and $p_{0}\left(\boldsymbol{\xi}_{i}\right)$,
respectively, where $\hat{\boldsymbol{\theta}}_{t}$ and $\hat{\boldsymbol{\theta}}_{p}$
are estimated using the following GNN architectures

{\small{}
\begin{align}
\hat{\boldsymbol{\theta}}_{t} & \in\underset{\begin{array}{c}
\boldsymbol{\theta}\in\Theta_{d_{h},\bar{z}}\end{array}}{\arg\min}\sum_{i:t_{i}=t}\ell_{y}\left(y_{i},z_{i}\left(\boldsymbol{\theta}\right)\right),\ \hat{\boldsymbol{\theta}}_{p}\in\underset{\begin{array}{c}
\boldsymbol{\theta}\in\Theta_{d_{h},\bar{z}}\end{array}}{\arg\min}\sum_{i=1}^{n}\ell_{p}\left(t_{i},z_{i}\left(\boldsymbol{\theta}\right)\right).\label{eq: gnn_estimator_theta_t_p}
\end{align}
}The estimators $\hat{\boldsymbol{\theta}}_{t}$ and $\hat{\boldsymbol{\theta}}_{p}$
may not be unique, while our discussion holds for any $\hat{\boldsymbol{\theta}}_{t}$
and $\hat{\boldsymbol{\theta}}_{p}$ that solve the minimization problems
in (\ref{eq: gnn_estimator_theta_t_p}). Moreover, denote $\boldsymbol{\upsilon}{}_{i}=\left(y_{i},t_{i},\boldsymbol{\xi}_{i}\right)$,
and we further introduce the following notations{\small{}
\begin{eqnarray*}
 &  & \varphi_{t}\left(\boldsymbol{\upsilon}{}_{i}\right)=\frac{\mathbb{I}\left\{ t_{i}=t\right\} }{p_{t}\left(\boldsymbol{\xi}_{i}\right)}\left(y_{i}-\mu_{t}\left(\boldsymbol{\xi}_{i}\right)\right)+\mu_{t}\left(\boldsymbol{\xi}_{i}\right),\ \hat{\varphi}_{t}\left(\boldsymbol{\upsilon}{}_{i}\right)=\frac{\mathbb{I}\left\{ t_{i}=t\right\} }{\hat{p}_{t}\left(\boldsymbol{\xi}_{i}\right)}\left(y_{i}-\hat{\mu}_{t}\left(\boldsymbol{\xi}_{i}\right)\right)+\hat{\mu}_{t}\left(\boldsymbol{\xi}_{i}\right),\\
 &  & \zeta_{i}=s_{1}\left(\boldsymbol{\xi}_{i}\right)\varphi_{1}\left(\boldsymbol{\upsilon}{}_{i}\right)+s_{0}\left(\boldsymbol{\xi}_{i}\right)\varphi_{0}\left(\boldsymbol{\upsilon}{}_{i}\right),\ \hat{\zeta}_{i}=s_{1}\left(\boldsymbol{\xi}_{i}\right)\hat{\varphi}_{1}\left(\boldsymbol{\upsilon}{}_{i}\right)+s_{0}\left(\boldsymbol{\xi}_{i}\right)\hat{\varphi}_{0}\left(\boldsymbol{\upsilon}{}_{i}\right),\\
 &  & u_{i}\left(t\right)=y_{i}\left(t\right)-\mu_{t}\left(\boldsymbol{\xi}_{i}\right),\ \Sigma_{n}=\text{Var}\left(\frac{1}{\sqrt{n}}\sum_{i=1}^{n}\zeta_{i}\right).
\end{eqnarray*}
}Then the counterfactual policy effect can be identified as $\pi\left(s\right)=\mathbb{E}\left[\zeta_{i}\right]$
under proper conditions, and our GNN-based estimator of $\pi\left(s\right)$
is constructed as{\small{}
\[
\hat{\pi}\left(s\right)=\frac{1}{n}\sum_{i=1}^{n}\hat{\zeta}_{i}.
\]
}In the next subsections, we study the asymptotic distribution of
this estimator.

\textbf{Remark} {[}SUTVA assumption{]}: It is important to note that
the setup described above relies on the stable unit treatment value
assumption (SUTVA), which assumes that an individual's potential outcome
only depends on their own treatment assignment. However, a natural
extension of this setup allows an individual's potential outcome to
also depend on the treatment assignments of other people. This has
been extensively studied in the network interference literature (e.g.,
\citet{leung2022causal} and \citet{savje2021average}), where the
potential outcome is modeled as $y_{i}(\left\{ t_{j}\right\} _{j\in\left[n\right]})$,
instead of $y_{i}\left(t_{i}\right)$.

Although interference is prevalent in many contexts, studying network
heterogeneity under SUTVA also holds significant empirical value.
In particular, our empirical application studies a substantive policy
question about selecting a group of `seed' information recipients,
who will then diffuse information throughout social networks via word-of-mouth
(\citet{Rogers2003diffusion} and \citet{banerjee2019using}). To
address this question, it is useful for policymakers to understand
the average potential outcome \textit{among the chosen seed information
recipients} under different selection rules.\footnote{This average potential outcome can be learnt once $\pi\left(s\right)$
is known. Specifically, suppose the policymaker considers a selection
rule which picks observation $i$ as an information recipient (so
$s\left(\boldsymbol{\xi}_{i}\right)=1$) if their $\boldsymbol{\xi}_{i}$
meet certain criteria. Also, for simplicity, assume that $y_{i}\left(0\right)=0$
and the selection probability $Pr\left(s\left(\boldsymbol{\xi}_{i}\right)=1\right)$
is known, as in our empirical application. Then, it follows that $\mathbb{E}\left[y_{i}\left(1\right)\mid s\left(\boldsymbol{\xi}_{i}\right)=1\right]=\frac{\pi\left(s\right)}{Pr\left(s\left(\boldsymbol{\xi}_{i}\right)=1\right)}$,
showing that the estimation of $\pi\left(s\right)$ can help the policymaker
obtain the estimation of the average potential outcome among the seed
information recipients.} This average potential outcome substantially affects the success
of the information diffusion process. In the context of \citet{BanerjeeChandrasekharDufloJackson2013},
for instance, the average potential outcome equates to the probability
that a seed household participates in a microfinance program, which
largely affects the likelihood that the household subsequently informs
its peers about the program. Nevertheless, interference is not necessarily
an issue in addressing this policy question. This is because the seed
information recipients make their participation decisions \textit{before}
their peers are informed about the program (this is consistent with
the first-step model for estimating the characteristic coefficients
$\beta$ using the leaders' participation decisions in \citet{BanerjeeChandrasekharDufloJackson2013}).

Our GNN approach has the potential to contribute to the literature
of network interference as well. By including the treatment variable
$t_{i}$ as a component of the node attribute vector $\boldsymbol{x}_{i}$,
the GNN framework allows an individual's outcome to depend on the
treatment assignments of neighbors up to a distance of $L$. Nonetheless,
as the empirical focus of our paper is not on network interference,
we leave the detailed discussion to future research. $\square$

\subsection{Assumptions}

Assumption II provides the conditions for deriving the asymptotic
distribution of the estimator $\hat{\pi}\left(s\right)$ for the average
effect of a counterfactual policy $\pi\left(s\right)$.

\subsubsection{Assumption II}
\begin{enumerate}
\item $\left\{ \left(y_{i}\left(1\right),y_{i}\left(0\right),t_{i},\boldsymbol{\xi}_{i}\right)\right\} _{i\in\left[n\right]}$
is a sequence of identically distributed set of random variables.
\item $\max_{i}\left|z_{*i}\left(\boldsymbol{f}_{*}^{t}\right)\right|<c_{z}<\infty$
a.s. with some fixed constant $c_{z}$ for $t\in\left\{ 0,1\right\} $.
\item The first-stage GNN estimators $\hat{\mu}_{t}\left(\boldsymbol{\xi}_{i}\right)$
and $\hat{p}_{t}\left(\boldsymbol{\xi}_{i}\right)$ satisfy that for
$t\in\left\{ 0,1\right\} $, 
\begin{enumerate}
\item $\frac{1}{n}\sum_{i=1}^{n}\left(\hat{p}_{t}\left(\boldsymbol{\xi}_{i}\right)-p_{t}\left(\boldsymbol{\xi}_{i}\right)\right)^{2}=o_{p}\left(1\right)$
and $\frac{1}{n}\sum_{i=1}^{n}\left(\hat{\mu}_{t}\left(\boldsymbol{\xi}_{i}\right)-\mu_{t}\left(\boldsymbol{\xi}_{i}\right)\right)^{2}=o_{p}\left(1\right)$,
\item $\left(\frac{1}{n}\sum_{i=1}^{n}\left(\hat{\mu}_{t}\left(\boldsymbol{\xi}_{i}\right)-\mu_{t}\left(\boldsymbol{\xi}_{i}\right)\right)^{2}\right)^{1/2}\left(\frac{1}{n}\sum_{i=1}^{n}\left(\hat{p}_{t}\left(\boldsymbol{\xi}_{i}\right)-p_{t}\left(\boldsymbol{\xi}_{i}\right)\right)^{2}\right)^{1/2}=o_{p}\left(n^{-1/2}\right)$,
\item $\frac{1}{n}\sum_{i=1}^{n}s_{t}\left(\boldsymbol{\xi}_{i}\right)\left(\hat{\mu}_{t}\left(\boldsymbol{\xi}_{i}\right)-\mu_{t}\left(\boldsymbol{\xi}_{i}\right)\right)\left(1-\frac{\mathbb{I}\left\{ t_{i}=t\right\} }{p_{t}\left(\boldsymbol{\xi}_{i}\right)}\right)=o_{p}\left(n^{-1/2}\right).$
\end{enumerate}
\item $p_{\min}\leq p_{1}\left(\boldsymbol{\xi}_{i}\right)\leq1-p_{\min}$
and $p_{\min}\leq\hat{p}_{1}\left(\boldsymbol{\xi}_{i}\right)\leq1-p_{\min}$
a.s. for every $i$ and some fixed constant $0<p_{\min}<\frac{1}{2}$. 
\item $\mathbb{E}\left[y_{i}\left(t\right)\mid\boldsymbol{\xi}_{i},t_{i}\right]=\mathbb{E}\left[y_{i}\left(t\right)\mid\boldsymbol{\xi}_{i}\right]$
a.s. for every $i$ and $t\in\left\{ 0,1\right\} $.
\item $\mathbb{E}\left[y_{i}\left(t\right)\mid\boldsymbol{h}_{*i,t}^{\left(L-1\right)}\left(\boldsymbol{f}_{*}^{t}\right),\overline{\boldsymbol{h}}_{*\mathcal{N}\left(i\right),t}^{\left(L-1\right)}\left(\boldsymbol{f}_{*}^{t}\right)\right]=\mathbb{E}\left[y_{i}\left(t\right)\mid\boldsymbol{\xi}_{i}\right]$
and\\
$\Pr\left(t_{i}=1\mid\boldsymbol{h}_{*i,p}^{\left(L-1\right)}\left(\boldsymbol{f}_{*}^{p}\right),\overline{\boldsymbol{h}}_{*\mathcal{N}\left(i\right),p}^{\left(L-1\right)}\left(\boldsymbol{f}_{*}^{p}\right)\right)=\Pr\left(t_{i}=1\mid\boldsymbol{\xi}_{i}\right)$
a.s. for every $i\in\left[n\right]$, $t\in\left\{ 0,1\right\} $,
and some $\boldsymbol{f}_{*}^{t}$ and $\boldsymbol{f}_{*}^{p}$.
\item $\mathbb{E}\left[u_{i}\left(t\right)u_{j}\left(t\right)\mid\left\{ \boldsymbol{\xi}_{i},t_{i}\right\} _{i=1,...,n}\right]=0$
a.s. for every $i,j\in\left[n\right]$ such that $i\neq j$, $t\in\left\{ 0,1\right\} $,
and all $n$.
\item $\max_{i\in\left[n\right]}\mathbb{E}\left[\left|y_{i}\left(t\right)\right|^{2}\mid\left\{ \boldsymbol{\xi}_{i},t_{i}\right\} _{i=1,...,n}\right]\leq c_{y}<\infty$
a.s. for $t\in\left\{ 0,1\right\} $ and all $n$ with some fixed
constant $c_{y}>0$. 
\item $\underset{n\rightarrow\infty}{\lim\inf}\ \Sigma_{n}>c_{\sigma}$
for some fixed constant $c_{\sigma}>0$. 
\item Let $\omega_{n}$ be the maximal degree of the dependency graph of
$\left\{ \boldsymbol{\upsilon}{}_{1},...,\boldsymbol{\upsilon}{}_{n}\right\} $,
and set $\omega_{n}=1$ if the dependency graph has no edges. $\frac{\omega_{n}\sum_{i=1}^{n}\mathbb{E}\left[\zeta_{i}^{2}\mathbb{I}\left\{ \left|\zeta_{i}\right|>a_{n}\right\} \right]}{n\Sigma_{n}}\rightarrow0$
and $\frac{\left(n\right)^{\frac{1}{m}}\left(\omega_{n}\right)^{\frac{m-1}{m}}a_{n}}{\sqrt{n\Sigma_{n}}}\rightarrow0$
for some sequence of real numbers $a_{n}$ and integer $m$. 
\end{enumerate}
In particular, the first condition assumes the data is identically
distributed, and the second condition requires the individual network
heterogeneity is uniformly bounded. Both conditions have been imposed
in the derivation of Theorem \ref{thm: rate_of_convergence}.

Conditions 3 (a)-(c) assume that the first-stage GNN estimators are
well-behaved. This condition can be justified by Theorem \ref{thm: rate_of_convergence}
assuming that the prerequisites are fulfilled to ensure the empirical
$L_{2}$ convergence rate of the GNN estimators is $o_{p}\left(n^{-1/4}\right)$,
as discussed following Theorem \ref{thm: rate_of_convergence}. More
details for verifying Condition 3 (c) are provided in Appendix \ref{sec: Verification-of-Assumption II3(c)}.

Conditions 4 and 5 include the unconfoundedness and overlap assumptions,
which are standard identification conditions in the treatment effects
literature. And the estimated treatment probability is assumed to
be bounded inside the interval $\left(0,1\right)$. We assume that
the number of layers $L$ in $\boldsymbol{\xi}_{i}$ to fulfill the
unconfoundedness condition is known to the researcher, and this $L$
is used throughout this section to construct $\boldsymbol{\xi}_{i}$,
to define the network heterogeneities $z_{*i}\left(\boldsymbol{f}_{*}^{t}\right)$
and $z_{*i}\left(\boldsymbol{f}_{*}^{p}\right)$, and to formulate
the GNN estimators $z_{i}(\hat{\boldsymbol{\theta}}_{t})$ and $z_{i}(\hat{\boldsymbol{\theta}}_{p})$.
We utilize the same $L$ throughout our analysis in this section to
simplify our presentation, which however is not a binding constraint.\footnote{We could achieve the same asymptotic distribution in Corollary \ref{cor: asymptotic_distribution}
by incorporating different numbers of layers across object constructions.
In particular, we could allow the counterfactual policy $s\left(\boldsymbol{\xi}_{i,L_{1}}\right)$
to depend on the $L_{1}$-hop neighborhood. And we could define the
network heterogeneities $z_{*i}\left(\boldsymbol{f}_{*,L_{2}^{1}}^{1}\right)$,
$z_{*i}\left(\boldsymbol{f}_{*,L_{2}^{0}}^{0}\right)$, and $z_{*i}\left(\boldsymbol{f}_{*,L_{3}}^{p}\right)$,
as well as the GNN estimators $z_{i}\left(\hat{\boldsymbol{\theta}}_{1,L_{2}^{1}}\right)$,
$z_{i}\left(\hat{\boldsymbol{\theta}}_{0,L_{2}^{0}}\right)$, and
$z_{i}\left(\hat{\boldsymbol{\theta}}_{p,L_{3}}\right)$, with each
depending on the respective $L_{2}^{1}$, $L_{2}^{0}$, and $L_{3}$-hop
neighborhoods. Moreover, we could introduce the assumptions that the
unconfoundedness condition holds when conditioning on at least the
$L_{4}^{t}$-hop neighborhood (i.e., $\mathbb{E}\left[y_{i}\left(t\right)\mid\boldsymbol{\xi}_{i,L},t_{i}\right]=\mathbb{E}\left[y_{i}\left(t\right)\mid\boldsymbol{\xi}_{i,L}\right]$
for $L\geq L_{4}^{t}$ and $t\in\left\{ 0,1\right\} $); the expected
potential outcomes depend on at most the $L_{2*}^{t}$-hop neighborhood
(i.e., $\mathbb{E}\left[y_{i}\left(t\right)\mid\boldsymbol{\xi}_{i,L}\right]=\mathbb{E}\left[y_{i}\left(t\right)\mid\boldsymbol{\xi}_{i,L_{2*}^{t}}\right]$
for $L\geq L_{2*}^{t}$ and $t\in\left\{ 0,1\right\} $); and the
treatment probability depends on at most the $L_{3*}$-hop neighborhood
(i.e., $\Pr\left(t_{i}=1\mid\boldsymbol{\xi}_{i,L}\right)=\Pr\left(t_{i}=1\mid\boldsymbol{\xi}_{i,L_{3*}}\right)$
for $L\geq L_{3*}$). Then, the result in Corollary \ref{cor: asymptotic_distribution}
can be established if the numbers of layers in the GNN estimations
satisfy $L_{2}^{t}\geq L_{2*}^{t}$ for $t\in\left\{ 0,1\right\} $
and $L_{3}\geq L_{3*}$. In practice, $L_{2*}^{t}$ and $L_{3*}$
may be unknown, while the authors are currently working on a follow-up
paper to formally study the procedure for selecting $L_{2}^{t}$ and
$L_{3}$.}

Condition 6 posits that the penultimate layer latent embeddings effectively
capture the relevant local neighborhood information in $\boldsymbol{\xi}_{i}$
required to determine the conditional expectations of potential outcomes
and the propensity score. Condition 7 assumes that the potential outcome
residuals are uncorrelated across observations, conditional on the
local neighborhoods and treatment decisions. Condition 8 is a mild
regularity condition, assuming that the potential outcomes have finite
second moments conditional on the local neighborhoods and treatment
decisions. 

Condition 9 assumes that $\Sigma_{n}$ is bounded away from zero in
the limit. If the data were i.i.d., this condition holds, as $\Sigma_{n}=\text{Var}\left(\zeta_{i}\right)>0$.
And $\Sigma_{n}$ would be larger when there is a positive correlation
among observations, which is commonly observed in network settings
where individuals share common friends. Hence, this condition is easily
satisfied. 

Condition 10 includes the primitive conditions needed to obtain the
central limit theorem for dependent data, which allows $\zeta_{i}$
to be unbounded. If we additionally assume $\zeta_{i}$ is uniformly
bounded (i.e., $\max_{i}\left|\zeta_{i}\right|<c_{\zeta}<\infty$
for some fixed constant $c_{\zeta}$), which naturally holds for binary
outcome variables, Condition 10 can be replaced with a simpler condition
that $\omega_{n}$ grows slower than $\sqrt{n}$ such that $\frac{\left(\omega_{n}\right)^{1-\frac{1}{m}}}{\left(n\right)^{\frac{1}{2}-\frac{1}{m}}}\to0$
for some integer $m$, as discussed in Theorem 2 of \citet{janson1988normal}.
Essentially, Condition 10 requires that the tails of the distribution
of $\zeta_{i}$ do not decay too slowly and that $\omega_{n}$ does
not grow too fast. In the context of clustered data which we will
discuss next, Condition 10 holds true under Assumptions II 2 and 4,
along with Assumptions III 2, 3 and 4. 

Under Assumption II, we can derive the asymptotic distribution of
the estimator $\hat{\pi}\left(s\right)$, which depends on the unknown
parameter $\Sigma_{n}$, as presented in Corollary \ref{cor: asymptotic_distribution}.
Then for feasible inference, we introduce some additional conditions,
which are suitable for many empirical settings using clustered data,
including our empirical context where the data is separated by villages.
These conditions are listed under Assumption III.

\subsubsection{Assumption III}
\begin{enumerate}
\item A random sample of size $n$, $\left\{ \zeta_{i}:i=1,\ldots,n\right\} $,
can be reorganized as clustered data $\left\{ \zeta_{cj}:j=1,...,n_{c}\right\} _{c\in\left[\bar{c}\right]}$,
where there are $\bar{c}$ mutually-exclusive clusters and $n_{c}$
observations in each cluster $c$. The $\zeta_{cj}$'s are independent
across clusters but can be arbitrarily correlated within clusters.
And $\bar{c}\rightarrow\infty$ as $n\rightarrow\infty$.
\item $\underset{c\in\left[\bar{c}\right]}{\max}\ n_{c}=O\left(n/\bar{c}\right)$
as $n\rightarrow\infty$.
\item $\underset{i}{\max}\ \mathbb{E}\left[\left|y_{i}\left(t\right)\right|^{2+\delta}\right]\leq c_{y}<\infty$
for $t\in\left\{ 0,1\right\} $ with some fixed constants $\delta>0$
and $c_{y}>0$. 
\item $\underset{n\rightarrow\infty}{\lim\inf}\ \frac{1}{n/\bar{c}}\Sigma_{n}>c_{\sigma}$
for some fixed constant $c_{\sigma}>0$. 
\end{enumerate}
Condition 1 assumes a clustered structure on the data. We treat the
clusters as fixed for each sample size, otherwise the analysis would
be overly complicated. So the number of clusters $\bar{c}$ and cluster
sizes $\left\{ n_{c}\right\} _{c\in\left[\bar{c}\right]}$ depend
on $n$ deterministically. Condition 1 requires that the number of
clusters grows with the sample size. This condition suits many empirical
datasets with inherent grouping structures, such as students in different
schools, employees in various companies, or patients in different
hospitals, among others. In our empirical setting, where households
are separated by villages, it is also reasonable to allow for free
intra-village dependencies while assuming away the possibility of
inter-village dependencies. For inference, we apply the same partitions
that are used to cluster $\left\{ \zeta_{i}:i=1,\ldots,n\right\} $
to partition their estimators $\left\{ \hat{\zeta}_{i}:i=1,\ldots,n\right\} $,
which yields $\left\{ \hat{\zeta}_{cj}:j=1,...,n_{c}\right\} _{c\in\left[\bar{c}\right]}$
used in Corollary \ref{cor: asymptotic_distribution}.

Condition 2 assumes that each cluster size grows uniformly no faster
than the average cluster size. This condition allows the cluster sizes
to stay finite. Condition 3 is another primitive condition, which
is slightly stronger than assuming that the potential outcomes uniformly
have finite second moments. Condition 4 is equivalent to the assumption
that $\underset{n\rightarrow\infty}{\lim\inf}\ensuremath{\frac{1}{\bar{c}}}\sum_{c\in\left[\bar{c}\right]}\ensuremath{\left(\frac{n_{c}}{n/\bar{c}}\right)^{2}\mathbb{E}\left[\left(\bar{\zeta}_{c}-\mu\right)^{2}\right]>c_{\sigma}>0}$,
where $\bar{\zeta}_{c}\coloneqq\frac{1}{n_{c}}\sum_{j\in\left[n_{c}\right]}\zeta_{cj}$
and $\mu\coloneqq\mathbb{E}\left[\zeta_{i}\right]$, which essentially
requires that there is no weak dependence among observations within
each cluster.

\subsection{Asymptotic distribution}

In Corollary \ref{cor: asymptotic_distribution}, we present the asymptotic
distribution of the counterfactual policy effect estimator $\hat{\pi}\left(s\right)$
and a feasible estimator of the asymptotic variance for valid inference.
The proof of the corollary can be found in Appendix \ref{sec:Proof-of-Corollary}.
\begin{cor}
\label{cor: asymptotic_distribution}Under Assumption II, as $n\rightarrow\infty$,{\small{}
\[
\left(\Sigma_{n}\right)^{-1/2}\sqrt{n}\left(\hat{\pi}\left(s\right)-\pi\left(s\right)\right)\overset{d}{\to}N\left(0,1\right).
\]
}Define $\hat{\Sigma}_{n}=\ensuremath{\frac{1}{\bar{c}}}\ensuremath{\sum_{c\in\left[\bar{c}\right]}}\left(n_{c}^{2}\frac{\bar{c}}{n}\right)\left(\bar{\hat{\zeta}}_{c}-\bar{\hat{\zeta}}\right)^{2}$,
where $\bar{\hat{\zeta}}=\frac{1}{n}\sum_{i=1}^{n}\hat{\zeta}_{i}$
and $\bar{\hat{\zeta}}_{c}=\frac{1}{n_{c}}\sum_{j\in\left[n_{c}\right]}\hat{\zeta}_{cj}$.
Under Assumptions II and III, as $n\rightarrow\infty$,{\small{}
\[
\left(\hat{\Sigma}_{n}\right)^{-1/2}\sqrt{n}\left(\hat{\pi}\left(s\right)-\pi\left(s\right)\right)\overset{d}{\to}N\left(0,1\right).
\]
}{\small\par}
\end{cor}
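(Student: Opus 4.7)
The plan is to follow the standard semiparametric influence-function recipe, but with the central limit step carried out for data whose dependence is controlled by the dependency graph of $\{\boldsymbol{\upsilon}_i\}$, and with the variance-estimation step tailored to the clustering in Assumption III.

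First, I would linearize the estimator as
\[
\sqrt{n}\bigl(\hat{\pi}(s)-\pi(s)\bigr)=\frac{1}{\sqrt{n}}\sum_{i=1}^{n}\bigl(\zeta_{i}-\pi(s)\bigr)+\frac{1}{\sqrt{n}}\sum_{i=1}^{n}\bigl(\hat{\zeta}_{i}-\zeta_{i}\bigr),
\]
where $\pi(s)=\mathbb{E}[\zeta_{i}]$ by identification under Assumptions II.4--II.6. For each $t\in\{0,1\}$, I would decompose $\hat{\varphi}_{t}(\boldsymbol{\upsilon}_{i})-\varphi_{t}(\boldsymbol{\upsilon}_{i})$ into three pieces: (i) a term proportional to $(\hat{\mu}_{t}-\mu_{t})(1-\mathbb{I}\{t_{i}=t\}/p_{t})$, (ii) a product term proportional to $(\hat{\mu}_{t}-\mu_{t})(\hat{p}_{t}-p_{t})/\hat{p}_{t}$ multiplied by $\mathbb{I}\{t_{i}=t\}/p_{t}$, and (iii) a term proportional to $(\hat{p}_{t}-p_{t})/(\hat{p}_{t} p_{t})$ times $\mathbb{I}\{t_{i}=t\}(y_{i}-\mu_{t})$. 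Cauchy--Schwarz together with Assumption II.3(b) (and the overlap bound II.4) kills piece (ii) at rate $o_{p}(n^{-1/2})$; piece (iii) is handled by noting that $\mathbb{E}[\mathbb{I}\{t_{i}=t\}(y_{i}-\mu_{t})\mid \boldsymbol{\xi}_{i}]=0$ under II.5 and applying II.3(a) together with II.7--II.8, which controls variance through the dependency graph; piece (i) vanishes at rate $o_{p}(n^{-1/2})$ by direct assumption II.3(c). So the remainder term is $o_{p}(1)$.

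Next, I would apply a central limit theorem for sums indexed by a dependency graph (Janson 1988, or Chen--Shao Stein coupling) to the oracle sum $\frac{1}{\sqrt{n}}\sum_{i}(\zeta_{i}-\pi(s))$. Assumptions II.1--II.2 and II.4 make $\zeta_{i}$ identically distributed with bounded conditional second moments (from II.8), Assumption II.9 keeps the asymptotic variance away from zero, and Assumption II.10 supplies the truncation/moment bound and the degree condition $\omega_{n}$ that the dependency-graph CLT requires. Normalizing by $\Sigma_{n}^{-1/2}$ and invoking Slutsky with the remainder bound gives the first conclusion.

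For the feasible result, under Assumption III the dependency graph decomposes into independent cluster blocks, so
\[
\Sigma_{n}=\frac{1}{n}\sum_{c=1}^{\bar{c}}n_{c}^{2}\,\mathrm{Var}(\bar{\zeta}_{c}).
\]
I would first show that replacing $\hat{\zeta}_{cj}$ with $\zeta_{cj}$ in $\hat{\Sigma}_{n}$ introduces only an $o_{p}(1)$ error relative to $\Sigma_{n}/(n/\bar{c})$; the key inputs are the empirical $L_{2}$ bounds in Assumption II.3(a), the bounded propensity (II.4), identical distribution (II.1), and Condition III.2 which makes the cluster weights $n_{c}^{2}\bar{c}/n$ uniformly bounded. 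Once $\hat{\zeta}_{cj}$ is replaced, $\hat{\Sigma}_{n}$ becomes a sample average across independent clusters, and a standard LLN across clusters (using III.1 and the moment bound III.3, which upgrades II.8 to a uniform $2+\delta$ moment needed for UI) yields $\hat{\Sigma}_{n}/(\Sigma_{n}/(n/\bar{c}))\to_{p}1$; since the normalizations $n/\bar{c}$ cancel inside the ratio $\sqrt{n}(\hat{\pi}-\pi)/\sqrt{\hat{\Sigma}_{n}}$, Slutsky with the first conclusion delivers the standard-normal limit. Condition III.4 ensures $\Sigma_{n}$ is of the correct order $n/\bar{c}$ so that the ratio is well-defined in the limit.

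The main technical obstacle is the remainder step, specifically controlling the term (iii) above: because the data are network-dependent rather than i.i.d., its variance is not immediately $O(n^{-1})$, and bounding it requires exploiting the conditional mean-zero structure in Assumption II.7 together with the dependency-graph sparsity $\omega_{n}$. A secondary subtlety is verifying that replacing $\hat{\zeta}_{cj}$ by $\zeta_{cj}$ inside $\hat{\Sigma}_{n}$ is uniformly valid across clusters; this uses Assumption II.3 applied blockwise along with III.2 to prevent any single large cluster from dominating.
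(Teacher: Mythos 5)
Your proof is essentially correct and follows the same approach as the paper's: the same three-piece remainder decomposition (the paper calls them $R_{21}$, $R_{22}$, and $R_1$), the same dependency-graph CLT (Janson 1988, the paper's Lemma \ref{lem: Janson Theorem 2}) for the oracle sum, and the same two-step variance-consistency argument via an infeasible cluster-level estimator $\tilde{\Sigma}_n$ followed by a triangular-array LLN (Lemma \ref{lem: Hansen 2.6.2}). One minor correction: you claim that bounding piece (iii) (the paper's $R_1$) requires exploiting the dependency-graph sparsity $\omega_n$, but it does not --- Assumption II.7 asserts $\mathbb{E}[u_i(t)u_j(t)\mid\{\boldsymbol{\xi}_i,t_i\}_{i\in[n]}]=0$ for \emph{every} pair $i\neq j$ regardless of adjacency in the dependency graph, so every off-diagonal term in $\mathbb{E}[R_1^2\mid\{\boldsymbol{\xi}_i,t_i\}_{i\in[n]}]$ vanishes outright, and the diagonal sum is $o_p(1)$ directly from Assumptions II.2, 4, 8, and II.3(a); the degree $\omega_n$ enters only through the CLT via Condition II.10, not through the remainder control.
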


Based on the result above, we could construct confidence intervals
and conduct hypothesis testing as usual.

\section{Monte Carlo simulations \label{sec:Monte-Carlo}}

In this section, we examine the finite sample performance of the average
treatment effect ($\tau$) estimator using simulations. The exercise
on $\pi\left(s\right)$, the average effect of a counterfactual policy,
is very similar and hence omitted in the presentation. In particular,
we generate network data, and simulate treatment and outcome variables
on the network data. Then, we estimate the GNN models, construct the
ATE estimator, and evaluate the coverage of the 95\% confidence interval
of the ATE estimator. The simulation setup is designed to reflect
the conditions of our empirical application.

Specifically, we simulate $\bar{v}=50$ separated networks that we
refer to as `villages', with each village containing $n_{v}=200$
observations. For any given node $i$ in a village, the probability
of forming a directed edge from another node $j$ within the same
village to node $i$ is denoted as $p_{e}$ (i.e., $\text{Pr}\left(d_{ij}=1\right)=p_{e}$).
This probability is set to maintain an average of $5$ adjacent neighbors
per node, computed as $p_{e}\times(n_{v}-1)=5$. After the initial
edge formation, we impose a limit, $c_{n}=10$, on the maximum number
of adjacent neighbors a node can have. If a node has more than $c_{n}$
incoming edges (i.e., $\left|\mathcal{N}\left(i\right)\right|\geq c_{n}$),
only $c_{n}$ of them will be retained at random.\footnote{We have examined other settings, including those where each node has
on average $10$ adjacent neighbors but no more than $40$, which
provide similar results and hence are omitted in the presentation.} No edge can be formed across villages. In addition, the covariates,
$\boldsymbol{x}_{i}$, are generated i.i.d. uniformly from the $d$-dimensional
cube $\left[-1,1\right]^{d}$, where $d$, the number of covariates,
is set to $4$. As before, we use $\boldsymbol{\xi}_{i}$ to denote
the $L$-hop local neighborhood around node $i$.

We then construct a treatment model and an outcome model. In our empirical
application, the treatment decision pertains to whether a household
is informed by the experimenter about the microfinance loan, and the
outcome decision corresponds to whether a household chooses to participate
in the microfinance program if being informed. To emulate the conditions
of our empirical application, both the treatment and outcome decisions
are binary variables in our simulations.

For the outcome variable, we consider the setting $L=2$, indicating
that an observation's decision depends on the covariates of their
neighbors and neighbors' neighbors. In particular, we have{\small{}
\begin{align}
\boldsymbol{h}_{*i} & =\boldsymbol{f}_{*}^{(1)}\left(\boldsymbol{x}_{i},\overline{\boldsymbol{x}}_{*\mathcal{N}(i)}\right)\in\mathbb{R}^{d_{h*}},\label{eq: monte-carlo-model1}\\
z_{*i}\left(\boldsymbol{f}_{*}\right) & =f_{*}^{(2)}\left(\boldsymbol{h}_{*i},\overline{\boldsymbol{h}}_{*\mathcal{N}(i)}\right)\in\mathbb{R},\label{eq: monte-carlo-model2}
\end{align}
}where we set $d_{h*}=d=4$, $\overline{\boldsymbol{x}}_{*\mathcal{N}\left(i\right)}=\frac{1}{\left|\mathcal{N}\left(i\right)\right|}\sum_{j\in\mathcal{N}\left(i\right)}\boldsymbol{x}_{j}$,
and $\overline{\boldsymbol{h}}_{*\mathcal{N}\left(i\right)}=\frac{1}{\left|\mathcal{N}\left(i\right)\right|}\sum_{j\in\mathcal{N}\left(i\right)}\boldsymbol{h}_{*j}$
unless $\mathcal{N}\left(i\right)=\textrm{Ø}$, otherwise we set $\overline{\boldsymbol{x}}_{*\mathcal{N}\left(i\right)}=\overline{\boldsymbol{h}}_{*\mathcal{N}\left(i\right)}=\mathbf{0}$.
We use quadratic models to construct $\boldsymbol{f}_{*}^{(1)}$ and
$f_{*}^{(2)}$. Hence, for $l\in\left\{ 1,2\right\} $, we express{\small{}
\begin{equation}
\boldsymbol{f}_{*}^{(l)}\left(\boldsymbol{x},\boldsymbol{y}\right)=\boldsymbol{A}^{(l)}\boldsymbol{x}+\boldsymbol{B}^{(l)}\boldsymbol{y}+\boldsymbol{C}^{(l)}\tilde{\boldsymbol{x}}+\boldsymbol{D}^{(l)}\tilde{\boldsymbol{y}}+\boldsymbol{c}^{(l)},\label{eq: monte-carlo-model3}
\end{equation}
}with $\tilde{\boldsymbol{x}}\in\mathbb{R}^{d^{2}}$ and $\tilde{\boldsymbol{y}}\in\mathbb{R}^{d^{2}}$being
the second order interactions of $\boldsymbol{x}$ and $\boldsymbol{y}$,
respectively. The outcome model depends on the parameters for the
linear terms $\boldsymbol{A}^{(1)},\boldsymbol{B}^{(1)}\in\mathbb{R}^{4\times4}$
and $\boldsymbol{A}^{(2)},\boldsymbol{B}^{(2)}\in\mathbb{R}^{1\times4}$,
for which we randomly draw each element from the uniform distribution
$\mathcal{U}\left[0.3,0.7\right]$. Also, we randomly draw each element
from $\mathcal{U}\left[-0.1,0.1\right]$ for the quadratic-term parameters
$\boldsymbol{C}^{(1)},\boldsymbol{D}^{(1)}\in\mathbb{R}^{4\times16}$
and $\boldsymbol{C}^{(2)},\boldsymbol{D}^{(2)}\in\mathbb{R}^{1\times16}$,
and from $\mathcal{U}\left[0.3,0.7\right]$ for the constant-term
parameters $\boldsymbol{c}^{(1)}\in\mathbb{R}^{4}$, and we set $c^{(2)}=-0.1$.
Then, the outcome decision if being treated is generated as $y_{i}\left(1\right)=1\left[z_{*i}\left(\boldsymbol{f}_{*}\right)>\epsilon_{1,i}\right]$,
where $\epsilon_{1,i}\sim Logistic\left(0,1\right)$. The outcome
decision if not being treated is set as $y_{i}\left(0\right)=0$,
which aligns with our empirical application where a household cannot
take the microfinance loan if not being informed. 

For the treatment model, we consider two scenarios. The first scenario
is about random treatment with a constant treatment probability, where
the propensity score is set to be $\Pr\left(t_{i}=1\mid\boldsymbol{\xi}_{i}\right)=0.5$,
and the treatment decision is generated as $t_{i}=1\left[\mathcal{U}\left[0,1\right]>0.5\right]$.
In the second scenario, the treatment decision depends on neighbors
up to distance $L=2$, using the same models as for the outcome decision
(as shown in (\ref{eq: monte-carlo-model1})-(\ref{eq: monte-carlo-model3})),
except that the parameters are different. Specifically, we randomly
draw each element of the parameters for linear components, $\boldsymbol{A}^{(l)},\boldsymbol{B}^{(l)}$
for $l\in\left\{ 1,2\right\} $, from $\mathcal{U}\left[0.1,0.2\right]$,
for quadratic components, $\boldsymbol{C}^{(l)},\boldsymbol{D}^{(l)}$
for $l\in\left\{ 1,2\right\} $, from $\mathcal{U}\left[-0.1,0.1\right]$,
and for constant terms $\boldsymbol{c}^{(1)}$ from $\mathcal{U}\left[0.1,0.2\right]$,
and we set $c^{(2)}=-0.1$. Then, the treatment decision is determined
as $t_{i}=1\left[z_{*i}\left(\boldsymbol{f}_{*}^{p}\right)>\epsilon_{2,i}\right],$
where $\epsilon_{2,i}\sim Logistic\left(0,1\right)$. After generating
the parameters for the treatment and outcome models, these parameters
are held fixed across all simulation replications, while the networks,
covariates, outcomes, and treatment decisions are regenerated for
each simulation replication.

In the simulations, our parameter of interest is the average treatment
effect \\
$\tau=\mathbb{E}\left[y_{i}\left(1\right)-y_{i}\left(0\right)\right]=\mathbb{E}\left[y_{i}\left(1\right)\right]$.
With the simulated data, we construct the ATE estimator as follows.
For the outcome model, we estimate the model with a two-layer GNN
using treated individuals only, which provides $\hat{\mu}_{1}\left(\boldsymbol{\xi}_{i}\right)$.
And for the treatment model, we use the sample mean of $t_{i}$ for
the random scenario, and a two-layer GNN with all individuals for
the non-random scenario, which gives $\hat{p}_{1}\left(\boldsymbol{\xi}_{i}\right)$.
Then, we compute the ATE estimator as $\hat{\tau}_{n}=\frac{1}{n}\sum_{i\in\left[n\right]}\hat{\zeta}_{i}$,
where $\hat{\zeta}_{i}=\frac{\mathbb{I}\left\{ t_{i}=1\right\} }{\hat{p}_{1}\left(\boldsymbol{\xi}_{i}\right)}\left(y_{i}-\hat{\mu}_{1}\left(\boldsymbol{\xi}_{i}\right)\right)+\hat{\mu}_{1}\left(\boldsymbol{\xi}_{i}\right)$,
and the estimator of its variance as $\hat{\Sigma}_{n}=\frac{1}{\bar{v}}\ensuremath{\sum_{v\in\left[\bar{v}\right]}}n_{v}(\bar{\hat{\zeta}}_{v}-\bar{\hat{\zeta}})^{2}$,
with $\bar{\hat{\zeta}}_{v}$ being the sample mean of $\hat{\zeta}_{i}$
in each village $v$ and $\bar{\hat{\zeta}}$ being the sample mean
of $\hat{\zeta}_{i}$ over the full sample.

In terms of the GNN estimation for the outcome and treatment models,
the number of hidden neurons for each layer is set to be 8, 16, or
32, which results in nine different architectures for a two-layer
GNN estimator. We set the learning rate to be 0.001 and batch size
5. To keep the presentation succinct, we only present results using
the same GNN architecture for both the outcome and treatment models.
However, we have also experimented with various combinations of architectures,
different learning rates (such as 0.005 and 0.01), and different batch
sizes (such as 10). These additional results yield very similar findings
and hence are omitted from the presentation.

Based on Corollary \ref{cor: asymptotic_distribution}, we could construct
the 95\% confidence interval for each simulated sample (i.e., $\hat{\tau}\pm\frac{1.96}{\sqrt{n}}(\hat{\Sigma}_{n})^{1/2}$)
and examine if $\tau$ falls within it. We repeat this process for
1000 replications, and report the biases and coverage probabilities
of the estimator in Table \ref{tab:simulation_bias_coverage}. Our
findings indicate that the ATE estimator exhibits very small biases,
and the coverage probabilities are close to the nominal level, for
both random and non-random treatment scenarios and across all GNN
architectures. The simulation results suggest that the asymptotic
distribution in Corollary \ref{cor: asymptotic_distribution} is a
reasonable approximation to the finite sample dispersion of the estimator.

{\small{}}
\begin{table}[h]
\begin{singlespace}
\begin{centering}
{\footnotesize{}}%
\begin{tabular}{cccccc}
\toprule 
 & \multicolumn{2}{c}{{\footnotesize{}Random Treatment}} &  & \multicolumn{2}{c}{{\footnotesize{}GNN Treatment}}\tabularnewline
\cmidrule{2-3} \cmidrule{3-3} \cmidrule{5-6} \cmidrule{6-6} 
{\footnotesize{}Architecture ($d_{h}^{\left(1\right)},d_{h}^{\left(2\right)}$)} & {\footnotesize{}Bias} & {\footnotesize{}Coverage} &  & {\footnotesize{}Bias} & {\footnotesize{}Coverage}\tabularnewline
\midrule
\midrule 
{\footnotesize{}(8, 8)} & {\footnotesize{}0.000108} & {\footnotesize{}0.946} &  & {\footnotesize{}0.000075} & {\footnotesize{}0.940}\tabularnewline
\midrule 
{\footnotesize{}(8, 16)} & {\footnotesize{}0.000116} & {\footnotesize{}0.943} &  & {\footnotesize{}0.000089} & {\footnotesize{}0.943}\tabularnewline
\midrule 
{\footnotesize{}(8, 32)} & {\footnotesize{}0.000106} & {\footnotesize{}0.944} &  & {\footnotesize{}0.000111} & {\footnotesize{}0.942}\tabularnewline
\midrule 
{\footnotesize{}(16, 8)} & {\footnotesize{}0.000106} & {\footnotesize{}0.944} &  & {\footnotesize{}0.000129} & {\footnotesize{}0.945}\tabularnewline
\midrule 
{\footnotesize{}(16, 16)} & {\footnotesize{}0.000106} & {\footnotesize{}0.946} &  & {\footnotesize{}0.000070} & {\footnotesize{}0.941}\tabularnewline
\midrule 
{\footnotesize{}(16, 32)} & {\footnotesize{}0.000119} & {\footnotesize{}0.946} &  & {\footnotesize{}0.000106} & {\footnotesize{}0.942}\tabularnewline
\midrule 
{\footnotesize{}(32, 8)} & {\footnotesize{}0.000104} & {\footnotesize{}0.942} &  & {\footnotesize{}0.000064} & {\footnotesize{}0.942}\tabularnewline
\midrule 
{\footnotesize{}(32, 16)} & {\footnotesize{}0.000113} & {\footnotesize{}0.945} &  & {\footnotesize{}0.000092} & {\footnotesize{}0.942}\tabularnewline
\midrule 
{\footnotesize{}(32, 32)} & {\footnotesize{}0.000118} & {\footnotesize{}0.943} &  & {\footnotesize{}0.000043} & {\footnotesize{}0.943}\tabularnewline
\bottomrule
\end{tabular}{\footnotesize\par}
\par\end{centering}
\end{singlespace}
\begin{centering}
{\small{}\caption{Biases and coverage probabilities for the ATE estimator\label{tab:simulation_bias_coverage}}
\medskip{}
}{\small\par}
\par\end{centering}
\raggedright{}{\small{}}%
\noindent\begin{minipage}[t]{1\columnwidth}%
\begin{singlespace}
\begin{flushleft}
{\footnotesize{}Note: This table presents the biases and coverage
probabilities of the ATE estimator, based on $1000$ replications.
For random treatment, the treatment probability is constant and we
estimate it using the sample counterpart. For GNN treatment, the treatment
assignment depends on neighbors up to two hops away, and we estimate
the treatment model using a two-layer GNN model. The outcome model
is always estimated using a two-layer GNN model. In the GNN estimation,
the number of hidden neurons for each layer is configured to be $8$,
$16$, or $32$, resulting in a total of nine distinct architectures.
The learning rate and batch size are set at $0.001$ and $5$, respectively.
We use the same GNN architecture for both the outcome and treatment
models. The results obtained using other learning rates, batch sizes,
and combinations of architectures are similar and hence omitted from
the table. The simulation results suggest that the ATE estimator has
small biases, and the coverage probabilities are close to the nominal
level.}
\par\end{flushleft}
\end{singlespace}
\end{minipage}{\small\par}
\end{table}
{\small\par}

\section{Empirical application \label{sec:Empirical-Application}}

The estimation of heterogeneous treatment effects is of essential
policy relevance. In many empirical contexts, an individual's response
to treatment depends not only on their own characteristics but also
on their social ties and the characteristics of their neighbors. In
this section, we study the empirical context of seeking external financing,
which is an important life decision that significantly affects an
individual's well-being. In particular, we investigate whether people
choose to borrow through a microfinance program when they are informed
about this opportunity. From the perspective of policymakers, understanding
the probabilities of individual borrowing based on neighborhood information
can assist in determining to whom the relevant information should
be sent when network data is available.

We utilize the dataset from the paper titled \textquotedbl The Diffusion
of Microfinance\textquotedbl{} by \citet{BanerjeeChandrasekharDufloJackson2013}.
The researchers in this study observe households in 49 isolated Indian
villages, with an average village size of 216.69 households. The dataset
includes information on household demographics, social networks, and
household microfinance participation decisions. In their study, information
on microfinance is initially shared with a subset of households in
each village, referred to as `leaders'. In total, there are 1262 leaders,
with an average of 25.75 leaders per village. The researchers then
observe the borrowing decisions made by these leaders after receiving
the information. We utilize this dataset to study the average treatment
effect of providing loan information, as well as the average outcome
of counterfactual policies. 

In the realm of microfinance, leaders' borrowing decisions can be
influenced not only by their own economic conditions but also by the
availability of alternative borrowing options, such as borrowing from
their neighbors. Meanwhile, the lending decisions of these neighbors
may hinge not only on their own economic conditions, but also on the
conditions of others they can in turn borrow from (i.e., the conditions
of leaders' neighbors' neighbors). Hence, when modeling leaders' \textit{borrowing}
decisions, we assume that these decisions depend on social ties extending
two hops away, and we adopt a two-layer GNN model for estimation.
In this estimation, interference is not a primary concern, as leaders
are the first to be informed about the loan when others do not have
access to such information (this aligns with the model in \citet{BanerjeeChandrasekharDufloJackson2013},
as we discussed earlier). 

For the \textsl{treatment} decisions of the experimenter on disseminating
microfinance information, we consider two possibilities: a random
assignment with a constant treatment probability and a decision that
depends on neighbors up to two hops away, where for the latter we
again use a two-layer GNN model for estimation. The authors are working
on a subsequent paper that aims to formally determine the relevant
neighborhood size for GNN estimations.

In our empirical estimation, we employ the `union relation' from the
dataset to represent network edge, which encompasses any social interaction
between two households, including activities like visiting each other's
homes, going to a temple together, among others. Additionally, we
follow the study by \citet{BanerjeeChandrasekharDufloJackson2013}
and use the following four variables as covariates: number of rooms,
number of beds, whether electricity is privately or publicly supplied,
and the presence or absence of a latrine. We also include closeness
centrality and betweenness centrality as covariates. Our estimation
results are presented in the following subsection.

\subsection{Inference of treatment effects}

In Table \ref{tab:Inference-results}, we present the estimation of
the average treatment effect along with the 95\% confidence intervals.
As can be seen from the table, the average borrowing probability if
being informed about the microfinance program is approximately 24\%
to 25\%, with a relatively tight confidence interval. The ATE estimates
are similar across GNN architectures, as well as between the random
treatment estimation and the GNN estimation for the treatment decision.

Beyond the ATEs, policymakers may also be interested in understanding
the average outcomes of disseminating information to villagers with
different network statistics. For example, they may ask: what would
the average participation rate be if information is sent to those
whose network centrality measure ranks in the top 25th percentile,
versus those in the bottom 25th percentile? Specifically, by including
household $i$'s centrality measure in $\boldsymbol{\xi}_{i}$, policymakers
may compare $\pi\left(s\right)$ where $s\left(\boldsymbol{\xi}_{i}\right)=1$
if $i$'s centrality is above the 75th percentile, with $\pi\left(s\right)$
where $s\left(\boldsymbol{\xi}_{i}\right)=1$ if $i$'s centrality
is below the 25th percentile. We focus on two specific centrality
measures: closeness centrality and betweenness centrality. Closeness
centrality measures the reciprocal of the average of the shortest
path distances from one node to all other nodes within the same village,
effectively reflecting how quickly a node can reach others. Betweenness
centrality, on the other hand, quantifies the number of shortest paths
between all pairs of nodes within the same village that pass through
a specific node, indicating its role as a `bridge' in the network.
Both of these measures are valuable in assessing a node's importance
in terms of information dissemination within network structures.

As shown in Table \ref{tab:Inference-results}, households with higher
centrality tend to exhibit a lower borrowing probability compared
to those with lower centrality. The average borrowing probability
is approximately $28.0\%$ ($0.070\times4$) to $31.6\%$ ($0.079\times4$)
among households within the lowest 25th percentile of closeness centrality,
compared to $19.6\%$ ($0.049\times4$) to $20.0\%$ ($0.050\times4$)
among households within the top 25th percentile of the same measure.
Similar patterns are found for betweenness centrality. We have also
experimented with degree centrality and eigenvector centrality, both
of which produced similar findings and are hence omitted for brevity.
A plausible explanation for these findings could be that households
with higher centrality tend to have more avenues for borrowing within
their social networks, thereby reducing their need to participate
in the microfinance program.

According to the study by \citet{BanerjeeChandrasekharDufloJackson2013},
a selected group of households (i.e., leaders) is informed about the
microfinance program and subsequently decides whether to participate.
These leaders then have a certain probability of informing their acquaintances
about the program, thereby facilitating the spread of knowledge about
the microfinance program. Importantly, the likelihood of leaders disseminating
this information depends on their own participation decisions. As
demonstrated empirically in Table 1 of the paper by \citet{BanerjeeChandrasekharDufloJackson2013},
the probability of a \textit{participating} leader disseminating the
information is 7-10 times higher than that of a \textit{non-participating}
leader.

Given budget constraints, policymakers often face the challenge of
deciding who should initially receive the program information, as
it would be prohibitively costly to inform everyone. The aim is to
distribute this information as quickly and as widely as possible through
social networks to enhance overall program participation. However,
solely targeting households based on their centrality measure might
not be the most effective approach, as the participation probability
of the targeted households also plays a significant role in information
diffusion. This consideration prompts our exploration of more balanced
targeting strategies in the following subsection, aiming to identify
and prioritize individuals who exhibit not only high centrality but
also a strong willingness to participate. 
\begin{center}
\begin{sidewaystable}
\begin{centering}
{\scriptsize{}}%
\begin{tabular}{ccccccccccccccccc}
\toprule 
 &  &  & \multicolumn{2}{c}{{\scriptsize{}Average Treatment Effect}} &  & \multicolumn{2}{c}{{\scriptsize{}Closeness Low}} &  & \multicolumn{2}{c}{{\scriptsize{}Closeness High}} &  & \multicolumn{2}{c}{{\scriptsize{}Betweenness Low}} &  & \multicolumn{2}{c}{{\scriptsize{}Betweenness High}}\tabularnewline
\cmidrule{4-5} \cmidrule{5-5} \cmidrule{7-8} \cmidrule{8-8} \cmidrule{10-11} \cmidrule{11-11} \cmidrule{13-14} \cmidrule{14-14} \cmidrule{16-17} \cmidrule{17-17} 
 & {\scriptsize{}($d_{h}^{(1)}$,$d_{h}^{(2)}$)} &  & {\scriptsize{}$\hat{\tau}$} & {\scriptsize{}95\% CI} &  & {\scriptsize{}$\hat{\pi}\left(s\right)$} & {\scriptsize{}95\% CI} &  & {\scriptsize{}$\hat{\pi}\left(s\right)$} & {\scriptsize{}95\% CI} &  & {\scriptsize{}$\hat{\pi}\left(s\right)$} & {\scriptsize{}95\% CI} &  & {\scriptsize{}$\hat{\pi}\left(s\right)$} & {\scriptsize{}95\% CI}\tabularnewline
\midrule
\midrule 
\multirow{9}{*}{{\scriptsize{}Random Treatment }} & {\scriptsize{}(6, 6)} &  & {\scriptsize{}0.239} & {\scriptsize{}{[}0.206, 0.273{]}} &  & {\scriptsize{}0.070} & {\scriptsize{}{[}0.054, 0.086{]}} &  & {\scriptsize{}0.048} & {\scriptsize{}{[}0.027,0.069{]}} &  & {\scriptsize{}0.066} & {\scriptsize{}{[}0.058, 0.074{]}} &  & {\scriptsize{}0.051} & {\scriptsize{}{[}0.036, 0.066{]}}\tabularnewline
\cmidrule{2-17} \cmidrule{3-17} \cmidrule{4-17} \cmidrule{5-17} \cmidrule{6-17} \cmidrule{7-17} \cmidrule{8-17} \cmidrule{9-17} \cmidrule{10-17} \cmidrule{11-17} \cmidrule{12-17} \cmidrule{13-17} \cmidrule{14-17} \cmidrule{15-17} \cmidrule{16-17} \cmidrule{17-17} 
 & {\scriptsize{}(6, 12)} &  & {\scriptsize{}0.238} & {\scriptsize{}{[}0.204, 0.271{]}} &  & {\scriptsize{}0.070} & {\scriptsize{}{[}0.054, 0.086{]}} &  & {\scriptsize{}0.048} & {\scriptsize{}{[}0.027,0.069{]}} &  & {\scriptsize{}0.066} & {\scriptsize{}{[}0.057, 0.074{]}} &  & {\scriptsize{}0.051} & {\scriptsize{}{[}0.036, 0.066{]}}\tabularnewline
\cmidrule{2-17} \cmidrule{3-17} \cmidrule{4-17} \cmidrule{5-17} \cmidrule{6-17} \cmidrule{7-17} \cmidrule{8-17} \cmidrule{9-17} \cmidrule{10-17} \cmidrule{11-17} \cmidrule{12-17} \cmidrule{13-17} \cmidrule{14-17} \cmidrule{15-17} \cmidrule{16-17} \cmidrule{17-17} 
 & {\scriptsize{}(6, 24)} &  & {\scriptsize{}0.242} & {\scriptsize{}{[}0.209, 0.276{]}} &  & {\scriptsize{}0.071} & {\scriptsize{}{[}0.055, 0.087{]}} &  & {\scriptsize{}0.049} & {\scriptsize{}{[}0.028,0.070{]}} &  & {\scriptsize{}0.067} & {\scriptsize{}{[}0.059, 0.075{]}} &  & {\scriptsize{}0.052} & {\scriptsize{}{[}0.037, 0.068{]}}\tabularnewline
\cmidrule{2-17} \cmidrule{3-17} \cmidrule{4-17} \cmidrule{5-17} \cmidrule{6-17} \cmidrule{7-17} \cmidrule{8-17} \cmidrule{9-17} \cmidrule{10-17} \cmidrule{11-17} \cmidrule{12-17} \cmidrule{13-17} \cmidrule{14-17} \cmidrule{15-17} \cmidrule{16-17} \cmidrule{17-17} 
 & {\scriptsize{}(12, 6)} &  & {\scriptsize{}0.244} & {\scriptsize{}{[}0.210, 0.277{]}} &  & {\scriptsize{}0.071} & {\scriptsize{}{[}0.055, 0.088{]}} &  & {\scriptsize{}0.050} & {\scriptsize{}{[}0.029,0.071{]}} &  & {\scriptsize{}0.067} & {\scriptsize{}{[}0.059, 0.076{]}} &  & {\scriptsize{}0.052} & {\scriptsize{}{[}0.037, 0.068{]}}\tabularnewline
\cmidrule{2-17} \cmidrule{3-17} \cmidrule{4-17} \cmidrule{5-17} \cmidrule{6-17} \cmidrule{7-17} \cmidrule{8-17} \cmidrule{9-17} \cmidrule{10-17} \cmidrule{11-17} \cmidrule{12-17} \cmidrule{13-17} \cmidrule{14-17} \cmidrule{15-17} \cmidrule{16-17} \cmidrule{17-17} 
 & {\scriptsize{}(12, 12)} &  & {\scriptsize{}0.242} & {\scriptsize{}{[}0.209, 0.276{]}} &  & {\scriptsize{}0.071} & {\scriptsize{}{[}0.055, 0.087{]}} &  & {\scriptsize{}0.049} & {\scriptsize{}{[}0.028,0.070{]}} &  & {\scriptsize{}0.067} & {\scriptsize{}{[}0.059, 0.076{]}} &  & {\scriptsize{}0.052} & {\scriptsize{}{[}0.036, 0.067{]}}\tabularnewline
\cmidrule{2-17} \cmidrule{3-17} \cmidrule{4-17} \cmidrule{5-17} \cmidrule{6-17} \cmidrule{7-17} \cmidrule{8-17} \cmidrule{9-17} \cmidrule{10-17} \cmidrule{11-17} \cmidrule{12-17} \cmidrule{13-17} \cmidrule{14-17} \cmidrule{15-17} \cmidrule{16-17} \cmidrule{17-17} 
 & {\scriptsize{}(12, 24)} &  & {\scriptsize{}0.243} & {\scriptsize{}{[}0.209, 0.276{]}} &  & {\scriptsize{}0.071} & {\scriptsize{}{[}0.055, 0.087{]}} &  & {\scriptsize{}0.050} & {\scriptsize{}{[}0.029,0.071{]}} &  & {\scriptsize{}0.067} & {\scriptsize{}{[}0.058, 0.075{]}} &  & {\scriptsize{}0.052} & {\scriptsize{}{[}0.037, 0.068{]}}\tabularnewline
\cmidrule{2-17} \cmidrule{3-17} \cmidrule{4-17} \cmidrule{5-17} \cmidrule{6-17} \cmidrule{7-17} \cmidrule{8-17} \cmidrule{9-17} \cmidrule{10-17} \cmidrule{11-17} \cmidrule{12-17} \cmidrule{13-17} \cmidrule{14-17} \cmidrule{15-17} \cmidrule{16-17} \cmidrule{17-17} 
 & {\scriptsize{}(24, 6)} &  & {\scriptsize{}0.242} & {\scriptsize{}{[}0.208, 0.275{]}} &  & {\scriptsize{}0.072} & {\scriptsize{}{[}0.056, 0.088{]}} &  & {\scriptsize{}0.049} & {\scriptsize{}{[}0.028,0.070{]}} &  & {\scriptsize{}0.068} & {\scriptsize{}{[}0.059, 0.076{]}} &  & {\scriptsize{}0.052} & {\scriptsize{}{[}0.036, 0.067{]}}\tabularnewline
\cmidrule{2-17} \cmidrule{3-17} \cmidrule{4-17} \cmidrule{5-17} \cmidrule{6-17} \cmidrule{7-17} \cmidrule{8-17} \cmidrule{9-17} \cmidrule{10-17} \cmidrule{11-17} \cmidrule{12-17} \cmidrule{13-17} \cmidrule{14-17} \cmidrule{15-17} \cmidrule{16-17} \cmidrule{17-17} 
 & {\scriptsize{}(24, 12)} &  & {\scriptsize{}0.242} & {\scriptsize{}{[}0.209, 0.276{]}} &  & {\scriptsize{}0.071} & {\scriptsize{}{[}0.055, 0.088{]}} &  & {\scriptsize{}0.049} & {\scriptsize{}{[}0.028,0.070{]}} &  & {\scriptsize{}0.067} & {\scriptsize{}{[}0.059, 0.076{]}} &  & {\scriptsize{}0.052} & {\scriptsize{}{[}0.037, 0.067{]}}\tabularnewline
\cmidrule{2-17} \cmidrule{3-17} \cmidrule{4-17} \cmidrule{5-17} \cmidrule{6-17} \cmidrule{7-17} \cmidrule{8-17} \cmidrule{9-17} \cmidrule{10-17} \cmidrule{11-17} \cmidrule{12-17} \cmidrule{13-17} \cmidrule{14-17} \cmidrule{15-17} \cmidrule{16-17} \cmidrule{17-17} 
 & {\scriptsize{}(24, 24)} &  & {\scriptsize{}0.242} & {\scriptsize{}{[}0.208, 0.276{]}} &  & {\scriptsize{}0.072} & {\scriptsize{}{[}0.055, 0.088{]}} &  & {\scriptsize{}0.049} & {\scriptsize{}{[}0.028,0.070{]}} &  & {\scriptsize{}0.067} & {\scriptsize{}{[}0.059, 0.076{]}} &  & {\scriptsize{}0.052} & {\scriptsize{}{[}0.037, 0.067{]}}\tabularnewline
\midrule 
\multirow{9}{*}{{\scriptsize{}GNN Treatment }} & {\scriptsize{}(6, 6)} &  & {\scriptsize{}0.253} & {\scriptsize{}{[}0.217, 0.289{]}} &  & {\scriptsize{}0.079} & {\scriptsize{}{[}0.056, 0.102{]}} &  & {\scriptsize{}0.050} & {\scriptsize{}{[}0.031, 0.069{]}} &  & {\scriptsize{}0.072} & {\scriptsize{}{[}0.056, 0.087{]}} &  & {\scriptsize{}0.053} & {\scriptsize{}{[}0.042, 0.065{]}}\tabularnewline
\cmidrule{2-17} \cmidrule{3-17} \cmidrule{4-17} \cmidrule{5-17} \cmidrule{6-17} \cmidrule{7-17} \cmidrule{8-17} \cmidrule{9-17} \cmidrule{10-17} \cmidrule{11-17} \cmidrule{12-17} \cmidrule{13-17} \cmidrule{14-17} \cmidrule{15-17} \cmidrule{16-17} \cmidrule{17-17} 
 & {\scriptsize{}(6, 12)} &  & {\scriptsize{}0.249} & {\scriptsize{}{[}0.213, 0.285{]}} &  & {\scriptsize{}0.077} & {\scriptsize{}{[}0.055, 0.100{]}} &  & {\scriptsize{}0.050} & {\scriptsize{}{[}0.031, 0.069{]}} &  & {\scriptsize{}0.069} & {\scriptsize{}{[}0.055, 0.084{]}} &  & {\scriptsize{}0.054} & {\scriptsize{}{[}0.042, 0.066{]}}\tabularnewline
\cmidrule{2-17} \cmidrule{3-17} \cmidrule{4-17} \cmidrule{5-17} \cmidrule{6-17} \cmidrule{7-17} \cmidrule{8-17} \cmidrule{9-17} \cmidrule{10-17} \cmidrule{11-17} \cmidrule{12-17} \cmidrule{13-17} \cmidrule{14-17} \cmidrule{15-17} \cmidrule{16-17} \cmidrule{17-17} 
 & {\scriptsize{}(6, 24)} &  & {\scriptsize{}0.253} & {\scriptsize{}{[}0.218, 0.289{]}} &  & {\scriptsize{}0.078} & {\scriptsize{}{[}0.056, 0.099{]}} &  & {\scriptsize{}0.050} & {\scriptsize{}{[}0.030, 0.069{]}} &  & {\scriptsize{}0.071} & {\scriptsize{}{[}0.057, 0.084{]}} &  & {\scriptsize{}0.054} & {\scriptsize{}{[}0.041, 0.067{]}}\tabularnewline
\cmidrule{2-17} \cmidrule{3-17} \cmidrule{4-17} \cmidrule{5-17} \cmidrule{6-17} \cmidrule{7-17} \cmidrule{8-17} \cmidrule{9-17} \cmidrule{10-17} \cmidrule{11-17} \cmidrule{12-17} \cmidrule{13-17} \cmidrule{14-17} \cmidrule{15-17} \cmidrule{16-17} \cmidrule{17-17} 
 & {\scriptsize{}(12, 6)} &  & {\scriptsize{}0.252} & {\scriptsize{}{[}0.215, 0.289{]}} &  & {\scriptsize{}0.077} & {\scriptsize{}{[}0.055, 0.100{]}} &  & {\scriptsize{}0.050} & {\scriptsize{}{[}0.030, 0.069{]}} &  & {\scriptsize{}0.072} & {\scriptsize{}{[}0.056, 0.087{]}} &  & {\scriptsize{}0.053} & {\scriptsize{}{[}0.041, 0.065{]}}\tabularnewline
\cmidrule{2-17} \cmidrule{3-17} \cmidrule{4-17} \cmidrule{5-17} \cmidrule{6-17} \cmidrule{7-17} \cmidrule{8-17} \cmidrule{9-17} \cmidrule{10-17} \cmidrule{11-17} \cmidrule{12-17} \cmidrule{13-17} \cmidrule{14-17} \cmidrule{15-17} \cmidrule{16-17} \cmidrule{17-17} 
 & {\scriptsize{}(12, 12)} &  & {\scriptsize{}0.249} & {\scriptsize{}{[}0.212, 0.286{]}} &  & {\scriptsize{}0.076} & {\scriptsize{}{[}0.054, 0.097{]}} &  & {\scriptsize{}0.050} & {\scriptsize{}{[}0.030, 0.070{]}} &  & {\scriptsize{}0.070} & {\scriptsize{}{[}0.055, 0.085{]}} &  & {\scriptsize{}0.053} & {\scriptsize{}{[}0.041, 0.065{]}}\tabularnewline
\cmidrule{2-17} \cmidrule{3-17} \cmidrule{4-17} \cmidrule{5-17} \cmidrule{6-17} \cmidrule{7-17} \cmidrule{8-17} \cmidrule{9-17} \cmidrule{10-17} \cmidrule{11-17} \cmidrule{12-17} \cmidrule{13-17} \cmidrule{14-17} \cmidrule{15-17} \cmidrule{16-17} \cmidrule{17-17} 
 & {\scriptsize{}(12, 24)} &  & {\scriptsize{}0.252} & {\scriptsize{}{[}0.216, 0.289{]}} &  & {\scriptsize{}0.079} & {\scriptsize{}{[}0.057, 0.101{]}} &  & {\scriptsize{}0.049} & {\scriptsize{}{[}0.030, 0.069{]}} &  & {\scriptsize{}0.072} & {\scriptsize{}{[}0.057, 0.088{]}} &  & {\scriptsize{}0.052} & {\scriptsize{}{[}0.040, 0.065{]}}\tabularnewline
\cmidrule{2-17} \cmidrule{3-17} \cmidrule{4-17} \cmidrule{5-17} \cmidrule{6-17} \cmidrule{7-17} \cmidrule{8-17} \cmidrule{9-17} \cmidrule{10-17} \cmidrule{11-17} \cmidrule{12-17} \cmidrule{13-17} \cmidrule{14-17} \cmidrule{15-17} \cmidrule{16-17} \cmidrule{17-17} 
 & {\scriptsize{}(24, 6)} &  & {\scriptsize{}0.251} & {\scriptsize{}{[}0.215, 0.286{]}} &  & {\scriptsize{}0.078} & {\scriptsize{}{[}0.057, 0.098{]}} &  & {\scriptsize{}0.050} & {\scriptsize{}{[}0.030, 0.069{]}} &  & {\scriptsize{}0.072} & {\scriptsize{}{[}0.058, 0.086{]}} &  & {\scriptsize{}0.053} & {\scriptsize{}{[}0.041, 0.066{]}}\tabularnewline
\cmidrule{2-17} \cmidrule{3-17} \cmidrule{4-17} \cmidrule{5-17} \cmidrule{6-17} \cmidrule{7-17} \cmidrule{8-17} \cmidrule{9-17} \cmidrule{10-17} \cmidrule{11-17} \cmidrule{12-17} \cmidrule{13-17} \cmidrule{14-17} \cmidrule{15-17} \cmidrule{16-17} \cmidrule{17-17} 
 & {\scriptsize{}(24, 12)} &  & {\scriptsize{}0.252} & {\scriptsize{}{[}0.215, 0.289{]}} &  & {\scriptsize{}0.078} & {\scriptsize{}{[}0.056, 0.101{]}} &  & {\scriptsize{}0.050} & {\scriptsize{}{[}0.030, 0.070{]}} &  & {\scriptsize{}0.072} & {\scriptsize{}{[}0.057, 0.087{]}} &  & {\scriptsize{}0.054} & {\scriptsize{}{[}0.041, 0.066{]}}\tabularnewline
\cmidrule{2-17} \cmidrule{3-17} \cmidrule{4-17} \cmidrule{5-17} \cmidrule{6-17} \cmidrule{7-17} \cmidrule{8-17} \cmidrule{9-17} \cmidrule{10-17} \cmidrule{11-17} \cmidrule{12-17} \cmidrule{13-17} \cmidrule{14-17} \cmidrule{15-17} \cmidrule{16-17} \cmidrule{17-17} 
 & {\scriptsize{}(24, 24)} &  & {\scriptsize{}0.253} & {\scriptsize{}{[}0.217, 0.289{]}} &  & {\scriptsize{}0.079} & {\scriptsize{}{[}0.057, 0.101{]}} &  & {\scriptsize{}0.050} & {\scriptsize{}{[}0.031, 0.069{]}} &  & {\scriptsize{}0.073} & {\scriptsize{}{[}0.058, 0.089{]}} &  & {\scriptsize{}0.054} & {\scriptsize{}{[}0.042, 0.066{]}}\tabularnewline
\bottomrule
\end{tabular}{\scriptsize\par}
\par\end{centering}
\caption{Inference on Average Treatment Effects and Average Outcomes of Counterfactual
Policies\label{tab:Inference-results}}
\medskip{}

\raggedright{}%
\noindent\begin{minipage}[t]{1\columnwidth}%
\begin{singlespace}
\begin{flushleft}
{\scriptsize{}Note: This table presents the estimations and confidence
intervals for the average treatment effects and outcomes of counterfactual
policies based on two centrality measures. For both closeness and
betweenness centrality, we analyze two counterfactual policies: one
assigning treatments to households within the top 25th percentile
of the centrality measure, and the other to those within the bottom
25th percentile. Leaders' participation decisions are estimated using
two-layer GNNs. For treatment decisions, we first assume a random
assignment with a constant treatment probability, for which the treatment
probability is estimated using the sample counterpart. We next allow
treatment decisions to depend on neighbors up to two hops away, for
which the treatment model is estimated using two-layer GNNs. The GNN
model is designed with three possible numbers of hidden neurons per
layer: $6$, $12$, and $24$, resulting in nine different architectures.
Data is divided into training and validation sets in an $80:20$ ratio,
preserving the network structure by splitting at the village level.
The Adam optimizer is used for model training with early stopping
based on validation loss. The results presented in this table are
obtained using a batch size of $5$ and a learning rate of $0.001$,
and the same architecture is used for the two GNN models, one for
treatment and one for participation. Our explorations with varying
batch sizes (such as $10$ or the entire data set) and learning rates
(such as $0.005$ or $0.01$), as well as different architecture combinations,
yield similar results. Hence, those specifics are omitted for brevity.}
\par\end{flushleft}
\end{singlespace}
\end{minipage}
\end{sidewaystable}
\par\end{center}

\subsection{Individual targeting on networks}

As discussed previously, the success of the program hinges on a judicious
selection of leaders. The process must balance the leaders' likelihood
of participating with the strength of their social ties. The program
is likely to flourish if the chosen leaders not only are inclined
to participate but also maintain strong connections with uninformed
households in their social networks. This is an essential crux of
network diffusion problems. To address this, we propose an improved
approach to leader selection that outperforms the current method.

We use the GNN estimation to predict the participation probabilities
of households based on their local neighborhoods if they were informed
about the program. We denote the predicted participation probability
for household $i$ as $\hat{p}_{i}$. To measure the strength of social
ties for each household $i$ in the network, denoted as $c_{i}$,
we use either closeness centrality or betweenness centrality separately.
We have also explored degree centrality and eigenvector centrality,
which yielded similar results and hence are omitted in the presentation.

To select a group of households that balance the participation probability
$\hat{p}_{i}$ and centrality $c_{i}$, we use a relative weight $\omega\in\left[0,1\right]$
to capture the balance between these two aspects, and we compute a
weighted score for each household as follows{\small{}
\[
s_{i}=\left(1-\omega\right)\hat{p}_{i}+\omega c_{i}.
\]
}As $\omega$ approaches 0, the score places more emphasis on participation
probability. Conversely, when $\omega$ approaches 1, the score assigns
greater emphasis to centrality. Fixing a relative weight $\omega$,
we can select the top $k$ households based on the weighted score.
This gives us a group of the $k$ best leaders that balance the leader
participation and centrality at this particular weight $\omega$.
For our exercises, we set $k$ to equal 1262, which is the actual
number of leaders selected by the researchers in \citet{BanerjeeChandrasekharDufloJackson2013}.

\begin{figure}[h]
\begin{centering}
\includegraphics[scale=0.5]{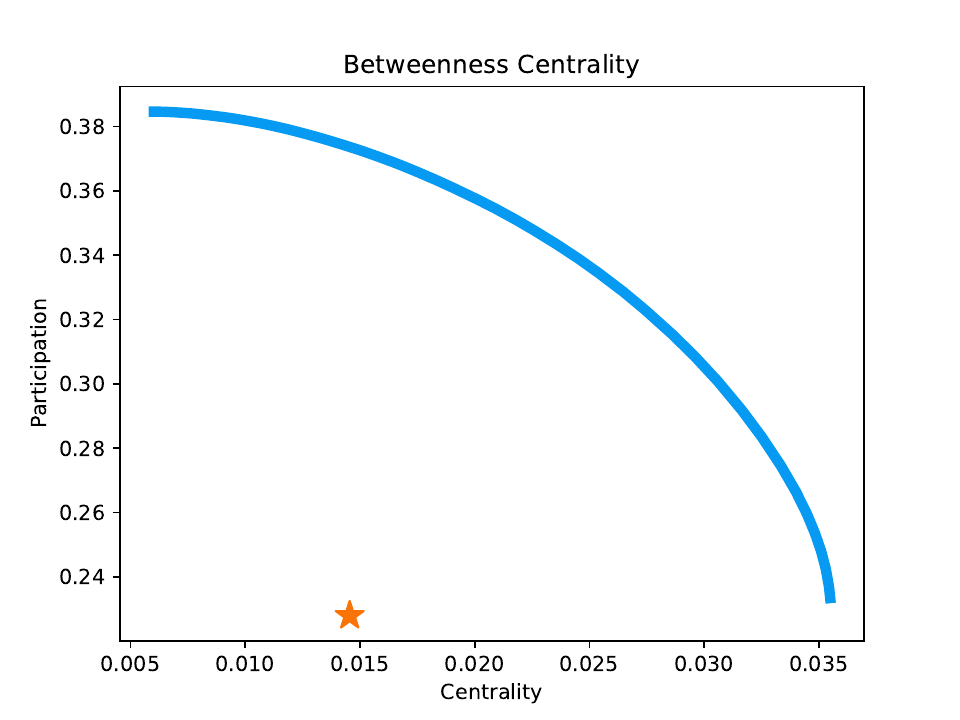}\includegraphics[scale=0.5]{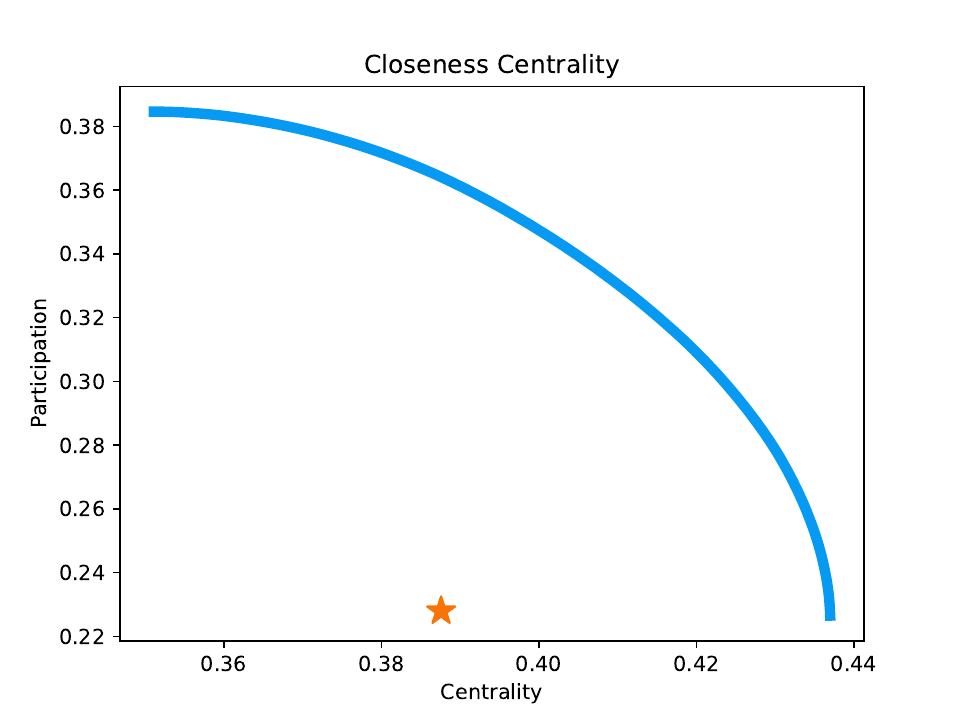}
\par\end{centering}
\caption{Pareto Frontiers between Participation Rates and Centrality Measures
in Leader Selection\label{fig:centrality-vs-participation}}
\medskip{}

\raggedright{}%
\noindent\begin{minipage}[t]{1\columnwidth}%
\begin{singlespace}
\begin{flushleft}
{\footnotesize{}Note: in the figure, the blue curves indicate the
Pareto frontiers between the average participation rates and average
centralities among selected leaders, while the orange star corresponds
to the selected leaders under the current strategy, which selects
teachers and shopkeepers, among others. We predict each household's
participation probability if being informed about the program using
the two-layer GNN estimates with the architecture of $(24,12)$-hidden
neurons. The figure suggests considerable potential for Pareto improvements
in leader selection, indicating options with both higher average participation
probabilities and greater centralities compared to the current leader
selection.}
\par\end{flushleft}
\end{singlespace}
\end{minipage}
\end{figure}

By varying $\omega$, we can find different groups of $k$ best leaders
and compute their average participation probabilities and centralities.
This enables us to trace out the Pareto frontier that the program
can achieve between leader participation rate and centrality, as indicated
by the blue curves in Figure \ref{fig:centrality-vs-participation}.
The Pareto frontier illustrates potential improvements compared to
the selected leaders under the current strategy, as marked by the
orange star in Figure \ref{fig:centrality-vs-participation}. Regarding
centrality and participation probability, it is possible to realize
improvements equivalent to 1.5 standard deviations in betweenness
centrality, 0.94 standard deviations in closeness centrality, or 1.83
standard deviations in leader participation probability alone. Also,
there is a large room for joint Pareto improvements in both participation
probability and network centrality, relative to the current leader
selection rule. Our findings present valuable insights on the enhancement
of information diffusion through social networks.

\section{Conclusion \label{sec:Conclusion}}

In this paper, we present a novel application of graph neural networks
for modeling and estimating network heterogeneity, focusing on the
empirical context where individual outcomes or decisions depend on
local neighborhood surroundings. We provide theoretical justifications
for the GNN estimator and demonstrate its usage in causal inference
with heterogeneous treatment effects. Empirically, in the context
of a microfinance program, we show that the estimator can be applied
to evaluate various treatment effects and to enhance information targeting
on networks. We believe the GNN estimator holds substantial value
in economics and social sciences, leveraging the potency of versatile
neural network architectures and abundant network data. Adapting the
GNN model to accommodate diverse empirical needs in different economic
settings opens up an exciting avenue for future research.

\appendix
\newpage{}

\section*{Appendix}

\section{Definitions and supporting lemmas \label{sec:Definitions-and-Supporting-Lemmas}}

\begin{defn}
\label{def: sub-root function}{[}Sub-root function{]} A function
$\psi:[0,\infty)\rightarrow[0,\infty)$ is sub-root if it is nonnegative,
nondecreasing, and if $r\mapsto\psi\left(r\right)/\sqrt{r}$ is nonincreasing
for $r>0$.
\end{defn}

\begin{lem}
\label{lem: Bartlett Lemma 3.2}{[}\citet[Lemma 3.2]{bartlett2005local}{]}
If $\psi:\left[0,\infty\right)\rightarrow\left[0,\infty\right)$ is
a nontrivial sub-root function (a nontrivial sub-root function is
not the constant function $\psi\coloneqq0$), then the equation $\psi\left(r\right)=r$
has a unique positive solution. We call the unique positive solution
of $\psi\left(r\right)=r$ the fixed point of $\psi$.
\end{lem}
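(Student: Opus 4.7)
The plan is to work with the auxiliary function $\phi(r) = \psi(r)/\sqrt{r}$, which is nonincreasing on $(0,\infty)$ by the sub-root hypothesis. Observe that $\psi(r) = r$ if and only if $\phi(r) = \sqrt{r}$, so fixed points of $\psi$ are precisely the crossing points of the nonincreasing function $\phi$ with the strictly increasing function $\sqrt{r}$. Uniqueness is then immediate: if $0 < r_1 < r_2$ were both fixed points, then $\sqrt{r_1} = \phi(r_1) \ge \phi(r_2) = \sqrt{r_2}$, forcing $r_1 \ge r_2$, a contradiction.

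For existence I plan to invoke the intermediate value theorem applied to $g(r) = \psi(r) - r$, which requires continuity of $\psi$. The key observation is that continuity of $\psi$ on $(0,\infty)$ is actually forced by the two monotonicity hypotheses acting against each other. At any $r_0 > 0$, the one-sided limits $\psi(r_0^-)$ and $\psi(r_0^+)$ exist (by monotonicity), and dividing by $\sqrt{r_0}$ I would compare them via both constraints: $\psi$ nondecreasing gives $\phi(r_0^-) \le \phi(r_0) \le \phi(r_0^+)$, whereas $\phi$ nonincreasing gives $\phi(r_0^-) \ge \phi(r_0) \ge \phi(r_0^+)$. Combining these squeezes both inequalities into equalities, so $\psi(r_0^-) = \psi(r_0) = \psi(r_0^+)$, i.e., $\psi$ is continuous at $r_0$.

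With continuity in hand, I would establish the sign change of $g$ at the two extremes. For the right end, since $\phi$ is nonincreasing and $\phi(1) = \psi(1) < \infty$, for $r \ge 1$ one has $\psi(r) \le \psi(1)\sqrt{r}$, so $g(r) \le \psi(1)\sqrt{r} - r < 0$ once $r > \psi(1)^2$. For the left end, nontriviality of $\psi$ combined with the monotonicity conditions gives $\phi(r) \ge \phi(r_*) =: c > 0$ for all $r \le r_*$ where $r_*$ is any point with $\psi(r_*) > 0$ (such a point exists, for otherwise $\psi \equiv 0$ on $(0,1]$ would force $\phi \equiv 0$ on $[1,\infty)$ and hence $\psi \equiv 0$, contradicting nontriviality). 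Thus $\psi(r) \ge c\sqrt{r}$ for small $r$, giving $g(r) \ge c\sqrt{r} - r > 0$ when $r < c^2$. The intermediate value theorem then produces a fixed point in between.

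The main obstacle is the continuity argument, since the definition of sub-root function as written does not assume continuity a priori; the crux is recognizing that the two monotonicity conditions are in opposition at any potential jump and therefore preclude discontinuities. Once that is established the rest is routine: a short IVT argument for existence and a one-line monotonicity argument for uniqueness.
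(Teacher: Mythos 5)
Your proof is correct, and it follows essentially the same route as the source the paper cites (\citealp[Lemma 3.2]{bartlett2005local}): the two opposing monotonicity conditions in the definition of a sub-root function force continuity on $(0,\infty)$, after which uniqueness is a one-line consequence of $\phi(r)=\psi(r)/\sqrt{r}$ being nonincreasing, and existence follows from the intermediate value theorem once you show $\psi(r)-r$ is eventually negative (from $\psi(r)\le\psi(1)\sqrt{r}$ for $r\ge 1$) and positive near zero (from $\psi(r)\ge c\sqrt{r}$ for small $r$, with $c>0$ by nontriviality). The only cosmetic gap is that your lower bound $\psi(r)\ge c\sqrt{r}$ is justified for $r\le r_*$, so the sign condition $g(r)>0$ should be stated for $0<r<\min\{r_*,c^2\}$; this does not affect the argument.
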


\begin{defn}
\label{def: Rademacher variables}{[}Rademacher variables{]} $\eta_{1},...,\eta_{n}$
are $n$ independent Rademacher random variables if each $\eta_{i}$
is an i.i.d. draw taking on the value of $1$ or $-1$ with probability
$1/2$.
\end{defn}

\begin{lem}
\label{lem: Bartlett Theorem 3.3}{[}\citet[Theorem 3.3]{bartlett2005local}{]}
Let $\left(\mathcal{X},P\right)$ be a probability space. Let $\mathcal{G}$
be a class of measurable functions from $\mathcal{X}$ to $\left[a,b\right]$.
Let $\left\{ \boldsymbol{x}_{i}\right\} _{i\in\left[n\right]}$ be
independent random variables distributed according to $P$. Assume
that there are some functional $T:\mathcal{G}\rightarrow\mathbb{R}_{+}$
and some constant $B>0$ such that for every $g\in\mathcal{G}$, $\text{Var}\left(g\right)\leq T\left(g\right)\leq B\mathbb{E}\left[g\right]$.
Assume that $\psi$ satisfies, for every $r\geq r^{*}$,{\small{}
\[
\psi\left(r\right)\geq B\mathbb{E}\left[\sup_{g\in\mathcal{G}:T\left(g\right)\leq r}\frac{1}{n}\sum_{i=1}^{n}\eta_{i}g\left(\boldsymbol{x}_{i}\right)\right],
\]
}where $\left\{ \eta_{i}\right\} _{i\in\left[n\right]}$ are independent
Rademacher random variables, and the expectation is taken over the
randomness of $\left\{ \left(\eta_{i},\boldsymbol{x}_{i}\right)\right\} _{i\in\left[n\right]}.$

Then, for every $t>0$, with probability at least $1-\exp\left(-t\right)$,{\small{}
\[
\forall g\in\mathcal{G},\quad\mathbb{E}\left[g\left(\boldsymbol{x}_{i}\right)\right]\leq\max\left\{ \frac{1}{n}\sum_{i=1}^{n}g\left(\boldsymbol{x}_{i}\right),\frac{2}{n}\sum_{i=1}^{n}g\left(\boldsymbol{x}_{i}\right)\right\} +Cr^{*}+C\frac{t}{n},
\]
}and also with probability at least $1-\exp\left(-t\right)$,{\small{}
\[
\forall g\in\mathcal{G},\quad\frac{1}{n}\sum_{i=1}^{n}g\left(\boldsymbol{x}_{i}\right)\leq\frac{3}{2}\mathbb{E}\left[g\left(\boldsymbol{x}_{i}\right)\right]+Cr^{*}+C\frac{t}{n},
\]
}where $C$ is a fixed constant.
\end{lem}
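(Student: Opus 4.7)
The plan is to establish this as a standard local Rademacher complexity concentration result via three ingredients: (i) Talagrand's (Bousquet's) inequality for the supremum of a centred empirical process, (ii) symmetrization to pass from expected supremum to Rademacher average, and (iii) a peeling/weighting argument that exploits the variance-control condition $\mathrm{Var}(g)\le T(g)\le B\mathbb{E}[g]$ together with the sub-root assumption on $\psi$. The conclusion will follow from solving a self-referential inequality whose fixed point is $r^{*}$.

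First, I would set $V_{r} := \sup_{g\in\mathcal{G}:\,T(g)\le r}\bigl(\mathbb{E}[g]-\tfrac{1}{n}\sum_{i}g(x_i)\bigr)$. By Bousquet's refinement of Talagrand's inequality for $[a,b]$-bounded classes, with probability at least $1-e^{-t}$,
\[
V_{r} \;\le\; \mathbb{E}[V_{r}] \;+\; \sqrt{\frac{2t\,\sigma_{r}^{2}}{n}} \;+\; C\frac{t}{n},
\]
where $\sigma_{r}^{2}=\sup_{g:T(g)\le r}\mathrm{Var}(g)\le r$ by hypothesis. Standard symmetrization gives $\mathbb{E}[V_{r}]\le 2\,\mathbb{E}\bigl[\sup_{g:T(g)\le r}\tfrac{1}{n}\sum_{i}\eta_{i}g(x_{i})\bigr]\le 2\psi(r)/B$ for $r\ge r^{*}$, so combining and using $2ab\le a^{2}+b^{2}$ on the $\sqrt{r}$ term yields $V_{r}\le C\psi(r)/B + C\,r/\alpha + C t/n$ for any chosen $\alpha$; choosing $\alpha$ and using the sub-root property $\psi(r)/\sqrt{r}$ nonincreasing together with $\psi(r^{*})=r^{*}$ produces, at $r=r^{*}$, a clean bound $V_{r^{*}}\le C r^{*} + Ct/n$.

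Next comes the peeling step, which converts the ``localised'' deviation bound at level $r^{*}$ into a uniform bound over all of $\mathcal{G}$. For each $g\in\mathcal{G}$ consider the weighted deviation $\bigl(\mathbb{E}[g]-\tfrac{1}{n}\sum g(x_{i})\bigr)/\max\{T(g),r^{*}\}$. Partition $\mathcal{G}$ into shells $\mathcal{G}_{k}=\{g:T(g)\in(2^{k-1}r^{*},2^{k}r^{*}]\}$ for $k=1,2,\dots$ (plus the base shell $T(g)\le r^{*}$). Apply the previous deviation bound to each shell at level $r_{k}=2^{k}r^{*}$ with adjusted confidence budget $t+\log(k(k+1))$ (so that the total failure probability via a union bound stays $\le e^{-t}$), then use the sub-root property $\psi(r_{k})\le\sqrt{r_{k}/r^{*}}\cdot\psi(r^{*})$ to collapse all shell-wise bounds into a single inequality. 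This produces, with probability $\ge 1-e^{-t}$, a statement of the form
\[
\mathbb{E}[g] - \tfrac{1}{n}\sum_{i=1}^{n}g(x_{i}) \;\le\; \tfrac{1}{2}\,T(g) \;+\; C r^{*} \;+\; C\,t/n,
\qquad\forall g\in\mathcal{G}.
\]
Finally I would invoke $T(g)\le B\mathbb{E}[g]$ to absorb the $T(g)$ term into the left-hand side, giving $\mathbb{E}[g]\le (1+\tfrac{B}{2})\cdot \tfrac{1}{n}\sum g(x_{i}) + C r^{*} + Ct/n$, which is (after adjusting constants and absorbing $B$) the first assertion in the form stated with the $\max$ of the two empirical averages. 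The second assertion, bounding $\tfrac{1}{n}\sum g(x_{i})$ by $\tfrac{3}{2}\mathbb{E}[g]$, is proved by the exact same three steps applied to $-g$ in place of $g$ (symmetrization is insensitive to sign, and Talagrand applies equally to $\sup_{g}(\tfrac{1}{n}\sum g-\mathbb{E}g)$).

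The main obstacle is the peeling bookkeeping: one must simultaneously (a) choose the geometric shell ratio so that the sub-root scaling $\psi(2^{k}r^{*})\le 2^{k/2}\psi(r^{*})$ sums geometrically rather than harmonically, (b) distribute the confidence budget across shells so that the $\log$ cost per shell is absorbed into the universal constant $C$, and (c) calibrate the constant in the weighted deviation so that after applying $T(g)\le B\mathbb{E}[g]$ the coefficient in front of $\mathbb{E}[g]$ on the right-hand side is strictly less than one. Up to these constants, everything else is a mechanical application of Bousquet's inequality, symmetrization, and Lemma~\ref{lem: Bartlett Lemma 3.2} guaranteeing existence of the fixed point $r^{*}$.
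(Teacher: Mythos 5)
First, note that the paper does not prove this statement at all: it is imported verbatim as Theorem 3.3 of \citet{bartlett2005local}, so there is no in-paper argument for you to match. Judged against the original proof, your toolbox is the right one --- a Bousquet/Talagrand concentration step (essentially Lemma \ref{lem: Bartlett Theorem 2.1}), symmetrization, the variance link $\mathrm{Var}(g)\le T(g)\le B\mathbb{E}[g]$, and the sub-root fixed point from Lemma \ref{lem: Bartlett Lemma 3.2}; the second inequality via the same argument applied to $\sup_{g}\left(\frac{1}{n}\sum_{i}g(\boldsymbol{x}_{i})-\mathbb{E}[g]\right)$ is also how the original goes. Where you genuinely diverge is the localization device: Bartlett--Bousquet--Mendelson do not peel. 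They apply the concentration inequality \emph{once} to the reweighted class $\left\{\frac{r}{\max\{r,T(g)\}}\,g:g\in\mathcal{G}\right\}$ at $r\asymp r^{*}$, whose variance is at most $r$ and whose Rademacher average is controlled by $\psi(r)$ through the sub-root property; this yields the uniform weighted deviation bound in a single stroke, with no union bound over scales.

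That substitution is where your sketch is loose in two concrete places. (i) The shell-wise union bound carries an overhead of order $\log\bigl(k(k+1)\bigr)/n$ per shell (or $\log N/n$ with $N\asymp\log_2(Bb/r^{*})$ shells); on low shells this is absorbed only if $r^{*}\gtrsim 1/n$ or $t$ is bounded below, so for arbitrarily small $t$ and $r^{*}$ the peeling route does not literally land inside $Cr^{*}+Ct/n$ with a universal $C$ --- the reweighting trick exists precisely to avoid this. (ii) The absorption step as displayed is wrong: from $\mathbb{E}[g]-\frac{1}{n}\sum_{i}g(\boldsymbol{x}_{i})\le\frac{1}{2}T(g)+Cr^{*}+Ct/n$ and $T(g)\le B\mathbb{E}[g]$ you get $\left(1-\frac{B}{2}\right)\mathbb{E}[g]\le\frac{1}{n}\sum_{i}g(\boldsymbol{x}_{i})+\cdots$, not $\mathbb{E}[g]\le\left(1+\frac{B}{2}\right)\frac{1}{n}\sum_{i}g(\boldsymbol{x}_{i})+\cdots$, and it is vacuous once $B\ge 2$; you must calibrate the weighted deviation target to $\frac{\lambda}{B}T(g)$ with $\lambda<1$ (you gesture at this in point (c), but then the constants multiplying $r^{*}$ and $t/n$ depend on $B$ and $b-a$, exactly as in the original statement). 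Also recall the $\max$ of the two empirical averages in the conclusion is there to handle a possibly negative empirical mean, so it cannot be replaced by a single constant multiple of $\frac{1}{n}\sum_{i}g(\boldsymbol{x}_{i})$. With the reweighting argument in place of peeling, and the $B$-calibration done correctly, your outline becomes the standard proof.
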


\begin{lem}
\label{lem: contraction}{[}\citet[Lemma 5.7]{mohri2018foundations}{]}
Let $\mathcal{G}$ be a class of functions mapping $\mathcal{X}$
to $\mathbb{R}$, and let $\phi_{1},\ldots,\phi_{n}$ be $L$-Lipschitz
functions for some constant $L>0$. Then for any fixed sample of $n$
points $\boldsymbol{x}_{1},\ldots,\boldsymbol{x}_{n}\in\mathcal{X}$,
it holds{\small{}
\[
\frac{1}{n}\mathbb{E}_{\eta}\left[\sup_{g\in\mathcal{G}}\sum_{i=1}^{n}\eta_{i}\left(\phi_{i}\circ g\right)\left(\boldsymbol{x}_{i}\right)\right]\leq L\frac{1}{n}\mathbb{E}_{\eta}\left[\sup_{g\in\mathcal{G}}\sum_{i=1}^{n}\eta_{i}g\left(\boldsymbol{x}_{i}\right)\right],
\]
}where $\left\{ \eta_{i}\right\} _{i\in[n]}$ are independent Rademacher
random variables, and the expectation $\mathbb{E}_{\eta}\left[\cdot\right]$
is taken over the randomness of $\left\{ \eta_{i}\right\} _{i\in\left[n\right]}$.
\end{lem}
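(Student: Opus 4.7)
The plan is to establish the inequality by a peeling argument (equivalent to induction on $n$) that replaces each Lipschitz function $\phi_k$ with the scalar multiple $L$ times the identity, one coordinate $k$ at a time. Since the factor $1/n$ appears identically on both sides, it suffices to prove the corresponding unnormalized inequality. The algebraic backbone is the identity that for any function $F:\mathcal{G} \to \mathbb{R}$ independent of $\eta_k$ and any index $k$,
\[
\mathbb{E}_{\eta_k}\Bigl[\sup_{g \in \mathcal{G}} F(g) + \eta_k \phi_k(g(x_k))\Bigr]
= \tfrac{1}{2} \sup_{g_1, g_2 \in \mathcal{G}}\Bigl[F(g_1) + F(g_2) + \phi_k(g_1(x_k)) - \phi_k(g_2(x_k))\Bigr],
\]
obtained by averaging over $\eta_k \in \{+1,-1\}$ and merging the two resulting suprema into a joint supremum over the pair $(g_1, g_2)$.

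Next I would exploit the $L$-Lipschitz property of $\phi_k$. Bounding the increment $\phi_k(g_1(x_k)) - \phi_k(g_2(x_k))$ by $L\,|g_1(x_k) - g_2(x_k)|$ and resolving the absolute value by cases on the sign of $g_1(x_k) - g_2(x_k)$, the joint supremum is bounded by
\[
\tfrac{1}{2} \sup_{g_1, g_2 \in \mathcal{G}}\Bigl[F(g_1) + F(g_2) + L\bigl(g_1(x_k) - g_2(x_k)\bigr)\Bigr]
= \mathbb{E}_{\eta_k}\Bigl[\sup_{g \in \mathcal{G}} F(g) + L\eta_k\, g(x_k)\Bigr].
\]
Conditioning on all $\eta_i$ with $i \ne k$ (treated as fixed) and inserting this bound inside the outer expectation shows that any single Rademacher-weighted Lipschitz term $\eta_k \phi_k(g(x_k))$ can be upgraded to $L\eta_k g(x_k)$ without decreasing the overall Rademacher average.

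The proof then iterates this step for $k = 1, 2, \ldots, n$. After step $k$, the expectation reads $\mathbb{E}_\eta[\sup_{g} \sum_{i \leq k} L \eta_i g(x_i) + \sum_{i > k} \eta_i \phi_i(g(x_i))]$, with the leading $k$ coordinates already linearized; after the $n$-th step every $\phi_i$ has been replaced by $L$ times the identity. Pulling the positive constant $L$ out of the supremum by positive homogeneity and reinstating the $1/n$ normalization yields precisely the stated inequality.

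The main obstacle, and the only subtle point in a rigorous write-up, is the sign handling in the Lipschitz bound: a naive replacement of $\phi_k(g_1(x_k)) - \phi_k(g_2(x_k))$ by $L\,|g_1(x_k) - g_2(x_k)|$ leaves an absolute value that cannot be recombined into a Rademacher expectation. The resolution is that the joint supremum over the unordered pair $(g_1, g_2)$ is symmetric under the swap $g_1 \leftrightarrow g_2$, so one may legitimately drop the absolute value by choosing the orientation that gives the larger value of the supremum and relabeling the dummy variables. Verifying this symmetry and checking that each identity holds pointwise in the conditioning variables $\{\eta_i\}_{i \ne k}$ are the only delicate bookkeeping items; everything else follows mechanically.
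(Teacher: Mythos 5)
Your proposal is correct: the one-coordinate-at-a-time peeling, the averaging over $\eta_k$ to get a supremum over pairs $(g_1,g_2)$, and the use of the swap symmetry of $F(g_1)+F(g_2)$ to drop the absolute value in the Lipschitz bound is exactly the standard proof of this contraction lemma, i.e.\ the argument in the cited source (Mohri et al., Lemma 5.7). The paper itself offers no proof --- it states the lemma as a quoted auxiliary result --- so there is nothing to contrast beyond noting that your write-up correctly identifies the only delicate point (resolving $L\left|g_1(\boldsymbol{x}_k)-g_2(\boldsymbol{x}_k)\right|$ via relabeling, valid even when the suprema are not attained) and correctly observes that no normalization such as $\phi_i(0)=0$ is needed, since constants cancel in the difference $\phi_k(g_1(\boldsymbol{x}_k))-\phi_k(g_2(\boldsymbol{x}_k))$.
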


\begin{lem}
\label{lem: Bartlett Corollary 2.2}{[}\citet[Corollary 2.2]{bartlett2005local}{]}
Let $\mathcal{G}$ be a class of functions that map $\mathcal{X}$
to $\left[-b,b\right]$ with $b>0$, and $\boldsymbol{x}_{1},\ldots,\boldsymbol{x}_{n}\in\mathcal{X}$
be independent and identically distributed. For every constants $t>0$
and $r$ that satisfy{\small{}
\[
r\geq10b\mathbb{E}\left[\sup_{g\in\mathcal{G}:\mathbb{E}\left[g^{2}\right]\leq r}\frac{1}{n}\sum_{i=1}^{n}\eta_{i}g\left(\boldsymbol{x}_{i}\right)\right]+\frac{11b^{2}t}{n},
\]
}then with probability at least $1-\exp\left(-t\right)$, {\small{}
\[
\left\{ g\in\mathcal{G}:\mathbb{E}\left[g^{2}\right]\leq r\right\} \subseteq\left\{ g\in\mathcal{G}:\frac{1}{n}\sum_{i=1}^{n}g\left(\boldsymbol{x}_{i}\right)^{2}\leq2r\right\} .
\]
}{\small\par}
\end{lem}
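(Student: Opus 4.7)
The plan is to control the empirical second moment $n^{-1}\sum_i g(x_i)^2$ uniformly over the local subclass $\mathcal{G}_r \coloneqq \{g \in \mathcal{G} : \mathbb{E}[g^2]\leq r\}$ via three ingredients: a Talagrand--Bousquet concentration inequality applied to the squared class $\mathcal{F}_r \coloneqq \{g^2 : g \in \mathcal{G}_r\}$; the contraction principle of Lemma~\ref{lem: contraction} to transfer the Rademacher complexity of $\mathcal{F}_r$ back to that of $\mathcal{G}_r$; and an AM--GM trick to absorb a residual term of order $\sqrt{r\cdot b^2 t/n}$ into $r$ itself, leaving exactly the hypothesis as the condition that closes the argument.

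Concretely, since $\|g\|_\infty \leq b$, every $h = g^2 \in \mathcal{F}_r$ takes values in $[0,b^2]$ and satisfies $\mathrm{Var}(h) \leq \mathbb{E}[h^2] \leq b^2\,\mathbb{E}[h] \leq b^2 r$. Bousquet's refinement of Talagrand's inequality applied to $\mathcal{F}_r$ then yields, with probability at least $1-\exp(-t)$,
\[
\sup_{g \in \mathcal{G}_r}\!\Bigl(\tfrac{1}{n}\textstyle\sum_{i} g(x_i)^2 - \mathbb{E}[g^2]\Bigr) \;\leq\; 2\,\mathbb{E}\!\Bigl[\sup_{g\in\mathcal{G}_r}\tfrac{1}{n}\textstyle\sum_i \eta_i g(x_i)^2\Bigr] + \sqrt{\tfrac{2 b^2 r\, t}{n}} + \tfrac{c_0\, b^2 t}{n}
\]
for an explicit absolute constant $c_0$. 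Because $u \mapsto u^2$ is $2b$-Lipschitz on $[-b,b]$, the contraction inequality of Lemma~\ref{lem: contraction} bounds the leading Rademacher term by $4b\cdot\mathbb{E}[\sup_{g\in\mathcal{G}_r} n^{-1}\sum_i \eta_i g(x_i)]$. Adding the bound $\mathbb{E}[g^2]\leq r$ on both sides and invoking the elementary inequality $\sqrt{r\cdot(2b^2 t/n)}\leq r/2 + b^2 t/n$ gives
\[
\sup_{g \in \mathcal{G}_r}\tfrac{1}{n}\textstyle\sum_i g(x_i)^2 \;\leq\; \tfrac{3r}{2} + 4b\,\mathbb{E}\!\Bigl[\sup_{g\in\mathcal{G}_r}\tfrac{1}{n}\textstyle\sum_i \eta_i g(x_i)\Bigr] + \tfrac{C_0\, b^2 t}{n},
\]
and the hypothesis $r \geq 10 b\,\mathbb{E}[\sup\cdots] + 11 b^2 t/n$ is precisely calibrated to force the last two summands to sum to at most $r/2$, so the whole right-hand side is at most $2r$, yielding the claimed inclusion.

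The main technical obstacle is pinning down the exact numerical constants $10$ and $11$ in the hypothesis. These require invoking the sharp Bousquet form of Talagrand's inequality (with its leading constant $2$ and Bernstein-type fluctuation term $b^2 t/(3n)$) together with an economical use of $2\sqrt{xy}\leq x+y$; a cruder Massart-type concentration inequality would produce strictly worse constants. The overall architecture—square the class, concentrate, contract, and balance via AM--GM—is standard, and the fact that the Rademacher complexity of $\mathcal{G}_r$ appearing on the right-hand side of the hypothesis is monotone in $r$ is what ensures this self-bounding argument closes in a single step without any iterative refinement of the radius.
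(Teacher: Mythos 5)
The paper gives no proof of this lemma---it is imported verbatim from \citet[Corollary 2.2]{bartlett2005local}---and your argument reconstructs essentially the proof of the cited source: apply the Talagrand-type concentration inequality (Lemma \ref{lem: Bartlett Theorem 2.1}) to the squared class $\left\{ g^{2}:g\in\mathcal{G},\ \mathbb{E}\left[g^{2}\right]\leq r\right\}$, whose members take values in $\left[0,b^{2}\right]$ with $\text{Var}\left(g^{2}\right)\leq b^{2}r$, pass back to the Rademacher complexity of $\mathcal{G}$ by contraction (Lemma \ref{lem: contraction}, since $u\mapsto u^{2}$ is $2b$-Lipschitz on $\left[-b,b\right]$), and absorb $\sqrt{2b^{2}rt/n}\leq r/2+b^{2}t/n$. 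The one slip is the constant bookkeeping: a leading symmetrization constant of $2$ together with a small absolute Bernstein constant is not available, so your chain as written (needing $4b\,\mathbb{E}\left[\sup\cdot\right]+C_{0}b^{2}t/n\leq r/2$ with $C_{0}\leq1.1$) does not reach the stated $10$ and $11$; the standard calibration instead uses the free parameter in Lemma \ref{lem: Bartlett Theorem 2.1} with $\alpha=1/4$, giving a Rademacher coefficient $2\left(1+\alpha\right)\cdot2b=10b$ and an additive coefficient $2\left(\tfrac{4}{3}+\tfrac{1}{\alpha}\right)=\tfrac{32}{3}\leq11$ after the same AM--GM step, at which point the claimed inclusion follows exactly as you describe.
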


\begin{lem}
\label{lem: bounded expectation} Let $x_{1}$ and $x_{2}$ be two
random variables and $a,b>0$ be positive constants. Suppose $x_{1},x_{2}\in\left[0,a\right]$
and $\Pr\left(x_{1}\leq x_{2}\right)\geq1-b.$ Then {\small{}
\[
\mathbb{E}\left[x_{1}-x_{2}\right]\leq ab.
\]
}{\small\par}
\end{lem}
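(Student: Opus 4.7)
The plan is to split the expectation according to whether $x_1 \leq x_2$ holds or not, and then bound each piece separately using the boundedness of the random variables in $[0,a]$ together with the given probability estimate.

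First I would define the event $A = \{x_1 \leq x_2\}$ and write
\[
\mathbb{E}[x_1 - x_2] = \mathbb{E}[(x_1 - x_2)\mathbb{I}_A] + \mathbb{E}[(x_1 - x_2)\mathbb{I}_{A^c}].
\]
On $A$, the integrand is non-positive, so the first term is at most $0$. On $A^c$, I would use $x_1 - x_2 \leq x_1 \leq a$ (since $x_2 \geq 0$ and $x_1 \leq a$), so that the second term is bounded by $a \Pr(A^c) \leq ab$, where the last inequality uses the hypothesis $\Pr(A) \geq 1 - b$. Adding the two bounds yields the claim.

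The argument is essentially one short calculation and I do not expect any genuine obstacle; the only thing to be careful about is making sure the bound $x_1 - x_2 \leq a$ on $A^c$ is justified using both endpoints ($x_1 \leq a$ and $x_2 \geq 0$), rather than something tighter that is not available without further assumptions.
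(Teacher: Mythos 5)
Your proof is correct and takes essentially the same route as the paper: both decompose the expectation over the event $\{x_1 \leq x_2\}$ and its complement (you via indicators, the paper via conditional expectations), drop the non-positive piece, and bound the remainder by $a \cdot \Pr(x_1 > x_2) \leq ab$.
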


\begin{proof}
Note that{\small{}
\begin{align*}
\mathbb{E}\left[x_{1}-x_{2}\right] & =\mathbb{E}\left[x_{1}-x_{2}\mid x_{1}>x_{2}\right]\Pr\left(x_{1}>x_{2}\right)+\mathbb{E}\left[x_{1}-x_{2}\mid x_{1}\leq x_{2}\right]\Pr\left(x_{1}\leq x_{2}\right)\\
 & \leq\mathbb{E}\left[x_{1}-x_{2}\mid x_{1}>x_{2}\right]b\leq ab.
\end{align*}
}{\small\par}
\end{proof}
\begin{defn}
{[}Metric entropy{]} \label{def: metric entropy }Let $\mathcal{G}$
be a class of functions mapping $\mathcal{X}$ to $\mathbb{R}$ and
$\boldsymbol{x}_{1},\ldots,\boldsymbol{x}_{n}\in\mathcal{X}$ be a
fixed sample of $n$ points. $\mathcal{V}=\left\{ g_{1},...,g_{m}\right\} $
is an $\infty$-norm cover of $\mathcal{G}$ on $\left\{ \boldsymbol{x}_{i}\right\} _{i\in\left[n\right]}$
at scale $\epsilon$ if for every $g\in\mathcal{G}$, there exists
$g_{j}\in\mathcal{V}$ such that {\small{}
\[
\max_{i\in\left[n\right]}\left|g\left(\boldsymbol{x}_{i}\right)-g_{j}\left(\boldsymbol{x}_{i}\right)\right|\leq\epsilon.
\]
}Define the $\infty$-norm covering number of $\mathcal{G}$ on $\left\{ \boldsymbol{x}_{i}\right\} _{i\in\left[n\right]}$
with covering radius $\epsilon$ as the minimum size of $\mathcal{V}$:{\small{}
\[
\mathcal{N}_{\infty}\left(\epsilon,\mathcal{G},\left\{ \boldsymbol{x}_{i}\right\} _{i\in\left[n\right]}\right)=\min\left\{ m:\mathcal{V}=\left\{ g_{1},...,g_{m}\right\} \ \text{is an \ensuremath{\infty}-norm cover of \ensuremath{\mathcal{G}} on }\left\{ \boldsymbol{x}_{i}\right\} _{i\in\left[n\right]}\text{ at scale }\epsilon\right\} .
\]
}And the logarithm of the covering number, $\log\mathcal{N}_{\infty}\left(\epsilon,\mathcal{G},\left\{ \boldsymbol{x}_{i}\right\} _{i\in\left[n\right]}\right)$,
is called metric entropy.
\end{defn}

\begin{lem}
\label{lem: Dudley's chaining}{[}\citet[Lemma 3]{farrell2021deep}{]}
Let $\mathcal{G}$ be a class of functions mapping $\mathcal{X}$
to $\mathbb{R}$ and $\boldsymbol{x}_{1},\ldots,\boldsymbol{x}_{n}\in\mathcal{X}$
be a fixed sample of $n$ points. Then,

{\footnotesize{}
\[
\mathbb{E}_{\eta}\left[\sup_{g\in\mathcal{G}:\sqrt{\frac{1}{n}\sum_{i=1}^{n}g\left(\boldsymbol{x}_{i}\right)^{2}}\leq r}\frac{1}{n}\sum_{i=1}^{n}\eta_{i}g\left(\boldsymbol{x}_{i}\right)\right]\leq\inf_{0\leq\lambda\leq r}\left\{ 4\lambda+\frac{12}{\sqrt{n}}\int_{\lambda}^{r}\sqrt{\log\mathcal{N}_{\infty}\left(\epsilon,\mathcal{G},\left\{ \boldsymbol{x}_{i}\right\} _{i=1,...,n}\right)}d\epsilon\right\} ,
\]
}where $\log\mathcal{N}_{\infty}\left(\epsilon,\mathcal{G},\left\{ \boldsymbol{x}_{i}\right\} _{i\in\left[n\right]}\right)$
is the metric entropy as defined in Definition \ref{def: metric entropy },
$\left\{ \eta_{i}\right\} _{i\in\left[n\right]}$ are independent
Rademacher random variables, and the expectation $\mathbb{E}_{\eta}\left[\cdot\right]$
is taken over the randomness of $\left\{ \eta_{i}\right\} _{i\in\left[n\right]}$. 
\end{lem}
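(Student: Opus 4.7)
The plan is to apply the classical Dudley chaining argument with dyadic $L_\infty$-empirical covers, restricted to the empirical $L_2$-ball of radius $r$. Fix $\lambda \in (0, r]$ and choose $N$ to be the smallest positive integer such that $\epsilon_N := r \cdot 2^{-N} \leq 2\lambda$, so that $\epsilon_N \in (\lambda, 2\lambda]$ and $\epsilon_{N-1} \leq 4\lambda$. For each $k \in \{1, \ldots, N-1\}$, let $\mathcal{V}_k$ be a minimum-cardinality $\infty$-norm cover of $\mathcal{G}$ on $\{\boldsymbol{x}_i\}_{i=1}^n$ at scale $\epsilon_k$, of size $\mathcal{N}_\infty(\epsilon_k, \mathcal{G}, \{\boldsymbol{x}_i\})$, and for each $g$ in the supremum class pick $\pi_k g \in \mathcal{V}_k$ satisfying $\max_i |g(\boldsymbol{x}_i) - \pi_k g(\boldsymbol{x}_i)| \leq \epsilon_k$; set $\pi_0 g \equiv 0$. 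Using the zero function at the top of the chain (rather than an element of an $L_\infty$ cover at scale $r$) is legitimate because the supremum is restricted to $g$ with $\|g\|_{L_2(\mathrm{emp})} \leq r$, and this avoids a potentially large $\log \mathcal{N}_\infty(r)$ contribution.

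Next, I use the telescoping identity $g = (g - \pi_{N-1} g) + \sum_{k=1}^{N-1} (\pi_k g - \pi_{k-1} g)$, apply $n^{-1}\sum_i \eta_i (\cdot)(\boldsymbol{x}_i)$, and take supremum over $g$ followed by expectation over $\boldsymbol{\eta}$. The remainder is controlled pointwise by the $L_\infty$ approximation: $\sup_g \bigl|\tfrac{1}{n}\sum_i \eta_i (g - \pi_{N-1} g)(\boldsymbol{x}_i)\bigr| \leq \epsilon_{N-1} \leq 4\lambda$. For each $k \in \{1, \ldots, N-1\}$, the set of increment functions $\{\pi_k g - \pi_{k-1} g\}$ has cardinality at most $|\mathcal{V}_k| \cdot |\mathcal{V}_{k-1}|$ (with $|\mathcal{V}_0| = 1$), and each increment has empirical $L_2$ norm at most $3\epsilon_k$ by the triangle inequality (for $k=1$ this uses $\|g\|_{L_2(\mathrm{emp})} \leq r$ together with $\pi_0 g \equiv 0$). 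Massart's finite-maximum inequality for centered sub-Gaussian averages then gives
$$
\mathbb{E}_{\boldsymbol{\eta}}\!\left[\sup_g \frac{1}{n}\sum_{i=1}^{n} \eta_i (\pi_k g - \pi_{k-1} g)(\boldsymbol{x}_i)\right] \leq \frac{6\,\epsilon_k \sqrt{\log \mathcal{N}_{\infty}(\epsilon_k, \mathcal{G}, \{\boldsymbol{x}_i\})}}{\sqrt{n}},
$$
where the factor $6$ absorbs $\sqrt{2}$ together with the bound $\log(|\mathcal{V}_k|\cdot|\mathcal{V}_{k-1}|) \leq 2 \log |\mathcal{V}_k|$, which follows from the monotonicity of the covering number in $\epsilon$.

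Finally, I convert the increment sum into the Dudley integral. Using $\epsilon_k - \epsilon_{k+1} = \epsilon_k / 2$ and the fact that $\sqrt{\log \mathcal{N}_\infty(\epsilon)} \geq \sqrt{\log \mathcal{N}_\infty(\epsilon_k)}$ for $\epsilon \in [\epsilon_{k+1}, \epsilon_k]$,
$$
\sum_{k=1}^{N-1} \epsilon_k \sqrt{\log \mathcal{N}_{\infty}(\epsilon_k, \mathcal{G}, \{\boldsymbol{x}_i\})} \leq 2 \int_{\epsilon_N}^{r/2} \sqrt{\log \mathcal{N}_{\infty}(\epsilon, \mathcal{G}, \{\boldsymbol{x}_i\})}\,d\epsilon \leq 2 \int_\lambda^{r} \sqrt{\log \mathcal{N}_{\infty}(\epsilon, \mathcal{G}, \{\boldsymbol{x}_i\})}\,d\epsilon,
$$
where the last inequality uses $\epsilon_N > \lambda$. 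Combining the increment total with the factor $6$ and the remainder bound $4\lambda$ yields $4\lambda + (12/\sqrt{n}) \int_\lambda^r \sqrt{\log \mathcal{N}_\infty(\epsilon, \mathcal{G}, \{\boldsymbol{x}_i\})}\,d\epsilon$. The corner case $N = 1$ (i.e., $\lambda \geq r/4$) is handled by the trivial Cauchy--Schwarz bound $\bigl|\tfrac{1}{n}\sum_i \eta_i g(\boldsymbol{x}_i)\bigr| \leq \|g\|_{L_2(\mathrm{emp})} \leq r \leq 4\lambda$, and passing to the infimum over $\lambda \in [0, r]$ completes the proof.

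The main technical obstacle is the dyadic bookkeeping that produces exactly the constants $4$ and $12$: the truncation index $N$ must be chosen so that the remainder $\epsilon_{N-1}$ fits into $4\lambda$ while $\epsilon_N > \lambda$ keeps the resulting sum bounded by $\int_\lambda^r$. Beyond this constant-chasing and the triangle-inequality accounting at the top rung of the chain (where $\pi_0 g \equiv 0$ is not a genuine $L_\infty$ cover element), the argument relies only on Massart's sub-Gaussian finite-maximum inequality and the monotonicity of covering numbers, so no deeper technical ingredient is required.
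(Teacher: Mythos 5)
The paper itself does not prove this lemma; it is imported verbatim as Lemma 3 of \citet{farrell2021deep}, so there is no internal proof to compare against. Your blind proof is a correct, self-contained derivation via the standard dyadic Dudley chaining argument, and the constant bookkeeping checks out exactly as stated: choosing $N$ as the smallest integer with $r2^{-N}\le 2\lambda$ makes the top-of-chain remainder at most $\epsilon_{N-1}\le 4\lambda$; each increment class has empirical $L_{2}$ radius $3\epsilon_{k}$ (including the $k=1$ rung, where $\|\pi_{1}g\|\le \epsilon_{1}+r=3\epsilon_{1}$ uses the restriction to the ball of radius $r$) and log-cardinality at most $2\log\mathcal{N}_{\infty}(\epsilon_{k})$, so Massart's lemma yields the factor $6/\sqrt{n}$, and the dyadic-to-integral conversion (using $\epsilon_{k}=2(\epsilon_{k}-\epsilon_{k+1})$ and $\epsilon_{N}>\lambda$) contributes the additional factor $2$, giving $12/\sqrt{n}$; the corner case $\lambda\ge r/4$ is correctly dispatched by Cauchy--Schwarz. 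Two points you leave implicit are harmless: the endpoint $\lambda=0$ of the infimum is recovered by letting $\lambda\downarrow 0$ (or the bound is vacuous when the entropy integral diverges), and infinite covering numbers at the scales used make the right-hand side infinite, so the inequality holds trivially there as well.
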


\begin{defn}
{[}Pseudo-dimension{]}\label{def: pseudo-dimension} Let $\mathcal{G}$
be a set of functions mapping from $\mathcal{X}$ to $\mathbb{R}$
and suppose that $S=\left\{ \boldsymbol{x}_{1},\ldots,\boldsymbol{x}_{m}\right\} \subseteq\mathcal{X}$
is a set of $m$ points in the input space. Then, $S$ is pseudo-shattered
by $\mathcal{G}$ if there are real numbers $r_{1},...,r_{m}$ such
that for each $b\in\{0,1\}^{m}$ there is a function $g_{b}\in\mathcal{G}$
with $\text{sgn}\left(g_{b}\left(\boldsymbol{x}_{i}\right)-r_{i}\right)=b_{i}$
for $i\in\left[m\right]$. Then $\mathcal{G}$ has pseudo-dimension
$p$ if $p$ is the maximum cardinality of a subset $S$ of $\mathcal{X}$
that is pseudo-shattered by $\mathcal{G}$. The pseudo-dimension of
$\mathcal{G}$ is denoted by $\text{Pdim}\left(\mathcal{G}\right)$. 
\end{defn}

\begin{lem}
{[}\citet[Theorem 12.2]{anthony1999neural}{]}\label{lem: Anthony Theorem 12.2}
Let $\mathcal{G}$ be a set of functions mapping from $\mathcal{X}$
to $\left[-b,b\right]$. Let $\epsilon>0$ and suppose $\text{Pdim}\left(\mathcal{G}\right)\geq1$.
Then for any fixed sample of $n$ points $\boldsymbol{x}_{1},\ldots,\boldsymbol{x}_{n}\in\mathcal{X}$,
{\small{}
\[
\mathcal{\log N}_{\infty}\left(\epsilon,\mathcal{G},\left\{ \boldsymbol{x}_{i}\right\} _{i\in\left[n\right]}\right)\leq\text{Pdim}\left(\mathcal{G}\right)\cdot\log\left(\max\left\{ \frac{2b}{\epsilon},1\right\} \cdot e\cdot n\right),
\]
}where $\mathcal{\log N}_{\infty}\left(\epsilon,\mathcal{G},\left\{ \boldsymbol{x}_{i}\right\} _{i\in\left[n\right]}\right)$
is the metric entropy defined in Definition \ref{def: pseudo-dimension},
and $e$ is the exponential constant.
\end{lem}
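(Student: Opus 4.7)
The plan is to prove this classical covering-number bound by reducing it, via quantization, to a Sauer--Shelah-type counting argument on an associated binary-valued function class whose VC dimension equals $\mathrm{Pdim}(\mathcal{G})$.

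First I would set up a quantization of the range $[-b,b]$. Let $K = \lceil 2b/\epsilon \rceil$ and introduce the thresholds $r_j = -b + j\epsilon$ for $j \in \{0,1,\dots,K\}$. For every $g \in \mathcal{G}$, define the binary pattern
\[
B(g) \;=\; \bigl(\mathrm{sgn}(g(\boldsymbol{x}_i) - r_j)\bigr)_{i \in [n],\, j \in \{0,\dots,K\}} \;\in\; \{0,1\}^{n(K+1)}.
\]
The key observation is that if $B(g) = B(g')$, then on each $\boldsymbol{x}_i$ both $g(\boldsymbol{x}_i)$ and $g'(\boldsymbol{x}_i)$ lie in the same interval $[r_j, r_{j+1})$ of length $\epsilon$, so $|g(\boldsymbol{x}_i) - g'(\boldsymbol{x}_i)| \leq \epsilon$. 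Hence the number of distinct binary patterns $B(g)$ upper bounds the $\infty$-norm covering number $\mathcal{N}_\infty(\epsilon,\mathcal{G},\{\boldsymbol{x}_i\}_{i\in[n]})$.

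Second, I would count the distinct patterns using pseudo-dimension. Consider the dual binary class $\mathcal{H} = \{(x,r) \mapsto \mathrm{sgn}(g(x) - r) : g \in \mathcal{G}\}$ acting on pairs. By the very definition of pseudo-dimension, the VC dimension of $\mathcal{H}$ equals $\mathrm{Pdim}(\mathcal{G})$. The patterns $B(g)$ are exactly the restrictions of $\mathcal{H}$ to the $n(K+1)$ sample points $\{(\boldsymbol{x}_i, r_j)\}$. The Sauer--Shelah lemma then gives
\[
\bigl|\{B(g) : g \in \mathcal{G}\}\bigr| \;\leq\; \left(\frac{e \cdot n(K+1)}{\mathrm{Pdim}(\mathcal{G})}\right)^{\mathrm{Pdim}(\mathcal{G})}.
\]
Taking logarithms and using $K+1 \leq \max\{2b/\epsilon, 1\} + 1$ (and absorbing constants using $\mathrm{Pdim}(\mathcal{G}) \geq 1$), I obtain
\[
\log \mathcal{N}_\infty(\epsilon, \mathcal{G}, \{\boldsymbol{x}_i\}_{i\in[n]})
\;\leq\; \mathrm{Pdim}(\mathcal{G}) \cdot \log\!\left(\max\!\left\{\tfrac{2b}{\epsilon},\,1\right\} \cdot e \cdot n\right),
\]
after minor clean-up of constants (replacing $e n (K+1)/\mathrm{Pdim}(\mathcal{G})$ by the slightly weaker $\max\{2b/\epsilon,1\} \cdot e \cdot n$ which absorbs the additive 1 into the $\max$).

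The main obstacle I anticipate is not the conceptual skeleton, which is textbook, but the bookkeeping on the constant inside the logarithm. Two subtleties deserve care: (i) the choice of convention for $\mathrm{sgn}$ at zero and the handling of the endpoint $r_K = b$ — these determine whether each interval of width $\epsilon$ is half-open so that every value of $g(\boldsymbol{x}_i) \in [-b,b]$ is assigned a unique pattern; and (ii) the reduction $K+1 \mapsto \max\{2b/\epsilon,1\}$, which must be done carefully to yield exactly the form stated (in the regime $\epsilon > 2b$, the $\max$ becomes $1$ and the functions differ by at most $2b < \epsilon$, so a single-element cover suffices, verifying the bound trivially). Once these details are settled, the result follows by a one-line application of Sauer--Shelah to the threshold class $\mathcal{H}$.
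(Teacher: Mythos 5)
Your route is genuinely different from the paper's. The paper does not reprove the covering bound at all: it quotes the combinatorial inequality $\mathcal{N}_{\infty}\left(\epsilon,\mathcal{G},\left\{ \boldsymbol{x}_{i}\right\} \right)\leq\sum_{k=1}^{d}\binom{n}{k}\left(2b/\epsilon\right)^{k}$ (with $d=\text{Pdim}\left(\mathcal{G}\right)$) directly from Anthony and Bartlett's Theorem 12.2, and its written proof consists only of simplifying that expression, pulling out $\left(\max\left\{ 2b/\epsilon,1\right\} \right)^{d}$ and bounding $\sum_{k=1}^{d}\binom{n}{k}$ by $\left(e\cdot n\right)^{d}$ via the two cases $n<d$ (trivial $2^{n}\leq2^{d}$) and $n\geq d$ (their Theorem 3.7). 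You instead rebuild the covering bound from scratch by quantizing the range and applying Sauer--Shelah to the threshold class $\mathcal{H}$ on the grid $\left\{ \left(\boldsymbol{x}_{i},r_{j}\right)\right\} $, using $\text{VCdim}\left(\mathcal{H}\right)=\text{Pdim}\left(\mathcal{G}\right)$; that identification is correct (a set shattered by $\mathcal{H}$ cannot contain two points with the same $x$-coordinate), and the quantization-to-cover step is sound. Your argument buys self-containedness at the cost of the constant inside the logarithm, and that is exactly where two concrete repairs are needed.

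First, with thresholds at both endpoints you have $K+1=\left\lceil 2b/\epsilon\right\rceil +1$ levels, and the claimed clean-up $e\,n\left(K+1\right)/\text{Pdim}\left(\mathcal{G}\right)\leq\max\left\{ 2b/\epsilon,1\right\} \cdot e\cdot n$ is false when $\text{Pdim}\left(\mathcal{G}\right)=1$ (take $2b/\epsilon=1$: you get $2en$ versus the required $en$); the extra levels cannot be absorbed into the max by dividing by the pseudo-dimension. The fix is to use only the interior thresholds $r_{j}=-b+j\epsilon$ for $1\leq j\leq\left\lceil 2b/\epsilon\right\rceil -1$, which suffice because all function values already lie in $\left[-b,b\right]$; the cells still have width at most $\epsilon$, and then the number of thresholds is at most $\max\left\{ 2b/\epsilon,1\right\} $, giving precisely the stated bound. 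Second, the Sauer--Shelah form $\left(em/d\right)^{d}$ requires the number of grid points $m$ to satisfy $m\geq d$; in the complementary case you need the trivial bound $2^{m}\leq2^{d}$ together with $2\leq e\cdot n$, exactly parallel to the paper's $n<\text{Pdim}\left(\mathcal{G}\right)$ case, which you omit. With these two repairs your proof is complete; and since the lemma enters the paper only through bounds with an absorbed constant $C$, even your unrepaired (slightly looser) constant would be harmless downstream.
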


\begin{proof}
Following \citet[Theorem 12.2]{anthony1999neural}, we have the upper
bound of the covering number as follows{\small{}
\[
\mathcal{N}_{\infty}\left(\epsilon,\mathcal{G},\left\{ \boldsymbol{x}_{i}\right\} _{i\in[n]}\right)\leq\sum_{k=1}^{\text{Pdim}\left(\mathcal{G}\right)}\left(\begin{array}{c}
n\\
k
\end{array}\right)\left(\frac{2b}{\epsilon}\right)^{k}\le\left(\max\left\{ \frac{2b}{\epsilon},1\right\} \right)^{\text{Pdim}\left(\mathcal{G}\right)}\cdot\sum_{k=1}^{\text{Pdim}\left(\mathcal{G}\right)}\left(\begin{array}{c}
n\\
k
\end{array}\right).
\]
}Then, if $n<\text{Pdim}\left(\mathcal{G}\right)$, {\small{}
\[
\sum_{k=1}^{\text{Pdim}\left(\mathcal{G}\right)}\left(\begin{array}{c}
n\\
k
\end{array}\right)=\sum_{k=1}^{n}\left(\begin{array}{c}
n\\
k
\end{array}\right)\leq2^{n}\leq2^{\text{Pdim}\left(\mathcal{G}\right)}.
\]
}And if $n\geq\text{Pdim}\left(\mathcal{G}\right)$, \citet[Theorem 3.7]{anthony1999neural}
implies{\small{}
\[
\sum_{k=1}^{\text{Pdim}\left(\mathcal{G}\right)}\left(\begin{array}{c}
n\\
k
\end{array}\right)\leq\left(\frac{e\cdot n}{\text{Pdim}\left(\mathcal{G}\right)}\right)^{\text{Pdim}\left(\mathcal{G}\right)}.
\]
}Therefore, for any positive integers $n$ and $\text{Pdim}\left(\mathcal{G}\right)$,{\small{}
\[
\sum_{k=1}^{\text{Pdim}(\mathcal{G})}\left(\begin{array}{c}
n\\
k
\end{array}\right)\leq\left(\max\left\{ \frac{e\cdot n}{\text{Pdim}\left(\mathcal{G}\right)},2\right\} \right)^{\text{Pdim}\left(\mathcal{G}\right)}\leq\left(e\cdot n\right)^{\text{Pdim}\left(\mathcal{G}\right)},
\]
}which completes the proof of the lemma.
\end{proof}
\begin{defn}
{[}VC-dimension{]}\label{def: VC-dimension } Let $\mathcal{G}$ be
a set of binary-valued functions with input space $\mathcal{X}$.
The growth function for the function class $\mathcal{G}$ is defined
by{\small{}
\[
\Pi_{\mathcal{G}}\left(m\right)=\max_{\left\{ \boldsymbol{x}_{1},...,\boldsymbol{x}_{m}\right\} \subseteq\mathcal{X}}\left|\left\{ \left(g\left(\boldsymbol{x}_{1}\right),...,g\left(\boldsymbol{x}_{m}\right)\right):g\in\mathcal{\mathcal{G}}\right\} \right|,
\]
}for $m\in\mathbb{N}$. The VC-dimension of the set $\mathcal{G}$
is largest value of $m$ such that $\Pi_{\mathcal{G}}\left(m\right)=2^{m}$.
\end{defn}

\begin{lem}
{[}\citet[Theorems 8.4 and 8.14]{anthony1999neural}{]}\label{lem: Anthony and Bartlett Theorems 8.4 and 8.14}
Consider the parameterized class{\small{}
\[
\mathcal{G}=\left\{ x\mapsto g\left(x;\theta\right):\theta\in\mathbb{R}^{d_{\theta}}\right\} ,
\]
}for some $\left\{ 0,1\right\} $-valued function $g$. Suppose that,
for each input $x\in\mathbb{R}^{d_{x}}$, there is an algorithm that
computes $g\left(x;\theta\right)$ and this computation takes no more
than $t$ operations of the following types:
\end{lem}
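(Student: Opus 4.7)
\medskip
\noindent\textbf{Proof proposal.} The plan is to bound the growth function $\Pi_{\mathcal{G}}(m)$ via a geometric count of sign patterns in parameter space, then invert the inequality $2^{m}\leq\Pi_{\mathcal{G}}(m)$ to extract a bound on the VC-dimension of $\mathcal{G}$. This is the classical Goldberg--Jerrum scheme for the arithmetic case (Theorem 8.4) and its Karpinski--Macintyre extension for the case that also allows exponentials (Theorem 8.14).

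First I would fix an arbitrary sample $x_{1},\ldots,x_{m}\in\mathbb{R}^{d_{x}}$ and view the computation of $g(x_{i};\theta)$, as $\theta$ ranges over $\mathbb{R}^{d_{\theta}}$, as a decision tree of depth at most $t$. Each conditional branch is triggered by the sign of an intermediate quantity $q_{ij}(\theta)$ that, under the allowed operations, is a rational function (in the arithmetic case) or a term built from polynomials and exponentials (in the exponential case) in $\theta$. Across the $m$ sample points there are at most $mt$ such test quantities, and the output vector $(g(x_{1};\theta),\ldots,g(x_{m};\theta))$ is completely determined by the sign pattern of these $mt$ tests. Hence $\Pi_{\mathcal{G}}(m)$ is bounded by the number of sign patterns realisable in $\mathbb{R}^{d_{\theta}}$.

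Next I would bound this number. For the arithmetic case, after clearing denominators and tracking the sign of the accumulated denominator (each division contributing one extra sign bit), every test reduces to the sign of a polynomial in $\theta$ whose degree is at most $2^{t}$, because each multiplication can double the degree. Warren's theorem (equivalently, Milnor--Thom) then bounds the number of realisable sign patterns of $mt$ polynomials of degree $\leq 2^{t}$ in $d_{\theta}$ real variables by roughly $(4\,e\,mt\cdot 2^{t}/d_{\theta})^{d_{\theta}}$. For the exponential case, the tests become Pfaffian functions in $\theta$ whose format is bounded polynomially in $t$, and I would invoke the Khovanskii/Karpinski--Macintyre bound on the number of connected components of the complement of such Pfaffian hypersurface arrangements, which yields a bound on $\Pi_{\mathcal{G}}(m)$ that is polynomial in $m$ with exponent growing like $d_{\theta}^{2}t^{2}$.

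Finally, setting this upper bound on $\Pi_{\mathcal{G}}(m)$ at least $2^{m}$ and solving for $m$ delivers a VC-dimension bound of the form $\mathrm{VCdim}(\mathcal{G})=O(d_{\theta}t\log(d_{\theta}t))$ in the arithmetic case and $\mathrm{VCdim}(\mathcal{G})=O(d_{\theta}^{2}t^{2})$ in the exponential case, matching the two Anthony--Bartlett theorems. The main obstacles are (i) carefully reducing a computation that uses divisions to an equivalent sign-of-polynomial problem so that Warren's theorem applies cleanly, since the standard trick of introducing an auxiliary sign variable per denominator is precisely what produces the $2^{t}$ degree blow-up and must be tracked to get sharp constants; and (ii) verifying that a computation involving $\exp(\cdot)$ fits inside the Pfaffian format of bounded complexity that Khovanskii's theory requires, so that the Karpinski--Macintyre bound can be invoked with explicit exponents in $t$ and $d_{\theta}$.
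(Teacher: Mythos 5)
The paper offers no proof of this lemma: it is quoted verbatim from \citet[Theorems 8.4 and 8.14]{anthony1999neural} and used as a black box. So there is no internal argument to compare against; the relevant benchmark is the source textbook, and your sketch does indeed describe its strategy faithfully: reduce the computation on a size-$m$ sample to a decision tree whose branches test signs of $O(mt)$ intermediate quantities, observe that these are polynomials (arithmetic case) or Pfaffian functions (exponential case) in $\theta$, bound the number of consistent sign patterns via Warren/Milnor--Thom or Khovanskii/Karpinski--Macintyre, and solve $2^{m}\leq\Pi_{\mathcal{G}}(m)$ for $m$. That is the Goldberg--Jerrum and Karpinski--Macintyre route Anthony and Bartlett follow.

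Two precision issues in your final step are worth flagging. First, in the arithmetic case you report $\mathrm{VCdim}(\mathcal{G})=O(d_{\theta}t\log(d_{\theta}t))$, but the lemma claims the sharper $O(d_{\theta}t)$ (Anthony--Bartlett actually give the explicit constant $4d_{\theta}(t+2)$). The log factor disappears because, after substituting $m\asymp d_{\theta}t$ into Warren's bound $(4e\cdot 2^{t}\cdot mt/d_{\theta})^{d_{\theta}}$, the dominant term inside the logarithm is the degree $2^{t}$, not $m$, so the exponent resolves to $O(d_{\theta}t)$ without a log. Second, in the exponential case your $O(d_{\theta}^{2}t^{2})$ conflates the total operation count $t$ with the number $q$ of exponential evaluations; the Pfaffian format parameter that enters Khovanskii's component-counting theorem scales with $q$, not $t$, so keeping $q$ separate is what yields the lemma's finer $O(d_{\theta}^{2}q^{2}+d_{\theta}qt)$. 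Both fixes are routine bookkeeping once the right quantities are tracked, so the conceptual plan is sound; the exponents just need to be read off more carefully from Warren's theorem and from the Karpinski--Macintyre Pfaffian bound.
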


\begin{itemize}
\item \textit{the arithmetic operations $+$, $-$, $\times$, and $/$
on real numbers,}
\item \textit{jumps conditioned on $>$, $\geq$, $<$, $\leq$, $=$, and
$\neq$ comparisons of real numbers, and}
\item \textit{output $0$ or $1$.}
\end{itemize}
\textit{Then, $\text{VCdim}\left(\mathcal{G}\right)=O\left(d_{\theta}\cdot t\right)$.
If besides the operations mentioned above, the types of operations
also include}
\begin{itemize}
\item \textit{the exponential function $\alpha\rightarrow\text{exp}\left(\alpha\right)$
on real numbers,}
\end{itemize}
\textit{and if the $t$ steps include no more than $q$ in which the
exponential function is evaluated and $q\geq1$, then $\text{VCdim}(\mathcal{G})=O\left(d_{\theta}^{2}\cdot q^{2}+d_{\theta}\cdot q\cdot t\right)$.
}
\begin{lem}
\label{lem: approximation_infinitely_differentiable}For every $f_{*}\in\mathcal{W}_{\gamma}^{\beta,\infty}\left(\left[-1,1\right]^{d}\right)$,
where $\gamma<\infty$ is a fixed constant, and $\left\Vert f_{*}\right\Vert _{\infty,\left[-1,1\right]^{d}}\leq1.1M$,
there exists a sequence of shallow neural networks $\left\{ g_{r}\right\} $
such that, as $r$ tends to infinity,{\small{}
\[
\sup_{\boldsymbol{x}\in\left[-1,1\right]^{d}}\left|f_{*}\left(\boldsymbol{x}\right)-g_{r}\left(\boldsymbol{x}\right)\right|\lesssim r^{-\beta/d},
\]
}where $g_{r}$ with $r$ neurons is of the form{\small{}
\[
g_{r}\in\left\{ g_{r}:\left[-1,1\right]^{d}\rightarrow\mathbb{R},g_{r}\left(\boldsymbol{x}\right)=\boldsymbol{a}\cdot\mathbf{\bm{\sigma}}\left(\boldsymbol{C}\boldsymbol{x}+\boldsymbol{c}\right)+b,\boldsymbol{C}\in\mathbb{R}^{r\times d},\boldsymbol{a},\boldsymbol{c}\in\mathbb{R}^{r},b\in\mathbb{R},\left\Vert g_{r}\right\Vert _{\infty,\left[-1,1\right]^{d}}\leq\bar{z}\right\} ,
\]
}and the activation function $\mathbf{\bm{\sigma}}:\mathbb{R}^{r}\rightarrow\mathbb{R}^{r}$
applies $\sigma:\mathbb{R}\rightarrow\mathbb{R}$ elementwise, where
$\sigma\left(\cdot\right)$ is infinitely differentiable and non-polynomial.
\end{lem}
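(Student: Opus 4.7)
The strategy is to combine classical polynomial approximation in Sobolev spaces with the capacity of shallow neural networks equipped with smooth non-polynomial activations to reproduce polynomials. Specifically, I would first approximate $f_{*}$ by a polynomial $P_{N}$ on $[-1,1]^{d}$ of total degree at most $N$, achieving the uniform error
\[
\sup_{\boldsymbol{x}\in[-1,1]^{d}}\left|f_{*}(\boldsymbol{x})-P_{N}(\boldsymbol{x})\right|\lesssim N^{-\beta},
\]
via Jackson-type estimates for functions in $\mathcal{W}_{\gamma}^{\beta,\infty}([-1,1]^{d})$. The number of monomials in $P_{N}$ is $\binom{N+d}{d}\asymp N^{d}$.

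The second step is to realize (or approximate to arbitrary precision) $P_{N}$ as a shallow neural network using $r\asymp N^{d}$ neurons, which will yield the rate $r^{-\beta/d}$. The key device, exploited by Mhaskar (1996) and related work, is that an infinitely differentiable non-polynomial activation $\sigma$ admits some point $t_{0}\in\mathbb{R}$ with $\sigma^{(k)}(t_{0})\neq 0$ for every $k\geq 0$. The divided-difference identity
\[
\sigma^{(k)}(t_{0})=\lim_{h\to 0}\frac{1}{h^{k}}\sum_{j=0}^{k}(-1)^{k-j}\binom{k}{j}\sigma(t_{0}+jh)
\]
then lets one approximate the univariate monomial $t\mapsto t^{k}$, and hence any ridge monomial $\boldsymbol{x}\mapsto(\boldsymbol{w}\cdot\boldsymbol{x})^{k}$, to arbitrary uniform accuracy on $[-1,1]^{d}$ by a linear combination of $O(k)$ terms of the form $\sigma(h j\,\boldsymbol{w}\cdot\boldsymbol{x}+t_{0})$. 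Writing every multivariate monomial $\boldsymbol{x}^{\boldsymbol{\alpha}}$ with $|\boldsymbol{\alpha}|\le N$ as a fixed linear combination of ridge monomials (by polarization of symmetric tensors, using $O(1)$ directions per monomial) gives an exact expression of $P_{N}$ in terms of at most $O(N^{d})$ ridge monomials, each approximable to any precision by $O(1)$ neurons. Taking $h$ small enough in the divided-difference approximation makes this second error negligible compared with $N^{-\beta}$, so setting $r\asymp N^{d}$ and $N\asymp r^{1/d}$ yields the claimed rate.

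It remains to verify the uniform bound $\|g_{r}\|_{\infty,[-1,1]^{d}}\le\bar{z}$. Since by hypothesis $\bar{z}>1.1M\ge\|f_{*}\|_{\infty,[-1,1]^{d}}$, by uniform convergence we may choose $r$ large enough that the approximation error is below $\bar{z}-1.1M$, whence the triangle inequality gives $\|g_{r}\|_{\infty,[-1,1]^{d}}\le\bar{z}$. For the finitely many small $r$'s violating this, the statement is vacuous up to adjusting the constant hidden in $\lesssim$.

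The main technical obstacle is the second step: constructing and tracking the ridge-monomial representation so that the approximation of polynomials (rather than exact representation) does not spoil the rate and so that the number of neurons is genuinely $O(N^{d})$ rather than something larger. This is a known result in the approximation-theoretic literature for shallow networks with smooth non-polynomial activations, and I would invoke or lightly adapt Mhaskar's construction rather than re-derive the divided-difference bounds from first principles.
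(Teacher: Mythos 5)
Your overall plan is correct, and at its core it is the same argument the paper uses, just unpacked. The paper's proof simply invokes Theorem~1 of Poggio et al. (2017) as a black box (after rescaling $f_*/\gamma$ to normalize the Sobolev bound) and then appends the sup-norm truncation argument; your proposal reconstructs the internals of that cited theorem from its two classical ingredients -- a Jackson-type estimate giving $\sup|f_* - P_N| \lesssim N^{-\beta}$ for degree-$N$ polynomial approximation of $\mathcal{W}^{\beta,\infty}$ functions, and the Mhaskar-type observation that a $C^\infty$ non-polynomial activation can reproduce all degree-$\le N$ polynomials on a compact cube to arbitrary precision with $r \asymp \binom{N+d}{d} \asymp N^d$ neurons via divided differences of $\sigma$. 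What you gain is a self-contained derivation that makes transparent where the exponent $-\beta/d$ comes from; what the paper gains is brevity. Your handling of the $\|g_r\|_{\infty,[-1,1]^d}\le\bar{z}$ constraint is essentially identical to the paper's (for $r$ large, triangle inequality forces the bound; for the finitely many small $r$ the bound is absorbed into the constant in $\lesssim$). One small caution in your sketch: the claim that each monomial $\boldsymbol{x}^{\boldsymbol{\alpha}}$ needs only "$O(1)$ directions" via polarization is not quite the right bookkeeping -- naive polarization of a degree-$k$ tensor produces $2^k$ terms -- and the correct counting runs through the dimension of the degree-$\le N$ polynomial space and the fact that ridge polynomials $\{(\boldsymbol{w}\cdot\boldsymbol{x})^k\}$ span it, which is exactly Mhaskar's and Pinkus's accounting; you acknowledge this by deferring to their construction, which is the right move.
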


\begin{proof}
This Lemma is a minor adjustment of Theorem 1 in \citet{poggio2017and}.
Theorem 1 in \citet{poggio2017and} implies that for every $f_{*}\in\mathcal{W}_{\gamma}^{\beta,\infty}\left(\left[-1,1\right]^{d}\right)$,
there exists $g_{r}\in\mathcal{G}_{r}$ such that{\small{}
\[
\sup_{\boldsymbol{x}\in[-1,1]^{d}}\left|\frac{1}{\gamma}f_{*}\left(\boldsymbol{x}\right)-g_{r}\left(\boldsymbol{x}\right)\right|\lesssim r^{-\beta/d},
\]
}where $\mathcal{G}_{r}\coloneqq\left\{ g_{r}:\left[-1,1\right]^{d}\rightarrow\mathbb{R},g_{r}\left(\boldsymbol{x}\right)=\boldsymbol{a}\cdot\mathbf{\bm{\sigma}}\left(\boldsymbol{C}\boldsymbol{x}+\boldsymbol{c}\right)+b,\boldsymbol{C}\in\mathbb{R}^{r\times d},\boldsymbol{a},\boldsymbol{c}\in\mathbb{R}^{r},b\in\mathbb{R}\right\} .$
As $\gamma<\infty$, this provides, for some $g_{r}\in\mathcal{G}_{r}$
and a fixed constant $B<\infty$, that{\small{}
\begin{equation}
\sup_{\boldsymbol{x}\in\left[-1,1\right]^{d}}\left|f_{*}\left(\boldsymbol{x}\right)-g_{r}\left(\boldsymbol{x}\right)\right|\leq Br^{-\beta/d}.\label{eq: poggio2017_2}
\end{equation}
}To complete the proof, we show that $g_{r}\in\mathcal{G}_{r}$ can
be restricted to have a bounded sup norm such that $\left\Vert g_{r}\right\Vert _{\infty,\left[-1,1\right]^{d}}\leq\bar{z}$.
Let $g_{r*}\in\mathcal{G}_{r}$ satisfy \ref{eq: poggio2017_2}. As
$\left\Vert f_{*}\right\Vert _{\infty,\left[-1,1\right]^{d}}\leq1.1M$,
it holds{\small{}
\[
\left\Vert g_{r*}\right\Vert _{\infty,\left[-1,1\right]^{d}}\leq Br^{-\beta/d}+\left\Vert f_{*}\right\Vert _{\infty,\left[-1,1\right]^{d}}\leq\bar{z},
\]
}when $r$ is large enough such that $r\geq\left(B/\left(\bar{z}-1.1M\right)\right){}^{d/\beta}$.
This completes the proof.
\end{proof}
\begin{lem}
{[}\citet[Theorem 2]{janson1988normal}{]}\label{lem: Janson Theorem 2}
Let $\left\{ x_{1},...,x_{n}\right\} $ be a sequence of random variables
and $\omega_{n}$ be the maximal degree of the dependency graph of
the sequence. Set $\omega_{n}=1$ if the dependency graph has no edges.
Denote $\sigma_{n}=\left(\text{Var}\left(\sum_{i\in\left[n\right]}x_{i}\right)\right)^{1/2}$.
If there exists a sequence of real numbers $a_{n}$ and an integer
$m$ such that as $n\rightarrow\infty$,{\small{}
\[
\left(\sigma_{n}\right)^{-2}\omega_{n}\sum_{i=1}^{n}\mathbb{E}\left[x_{i}^{2}\mathbb{I}\left\{ \left|x_{i}\right|>a_{n}\right\} \right]\rightarrow0,\ \text{and}\ \left(\sigma_{n}\right)^{-1}\left(n\right)^{1/m}\left(\omega_{n}\right)^{\frac{m-1}{m}}a_{n}\rightarrow0,
\]
}then,{\small{}
\[
\left(\sigma_{n}\right)^{-1}\sum_{i=1}^{n}\left(x_{i}-\mathbb{E}\left[x_{i}\right]\right)\overset{d}{\to}N\left(0,1\right).
\]
}{\small\par}
\end{lem}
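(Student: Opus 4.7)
I follow Janson's original approach via the method of cumulants. A centred distribution agrees with the standard normal iff its second cumulant equals $1$ and every higher cumulant vanishes, so it suffices to show that $T_n := \sigma_n^{-1}\sum_{i=1}^n (x_i - \mathbb{E}x_i)$ has $\kappa_2(T_n)=1$ (immediate from the definition of $\sigma_n$) and $\kappa_{m'}(T_n) \to 0$ for every $m' \ge 3$, and then to lift convergence of cumulants to weak convergence via the method of moments.

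The first step is truncation. Set $x_i' := x_i \mathbb{I}\{|x_i| \le a_n\}$ and $x_i'' := x_i - x_i'$, and recall the elementary variance bound $\mathrm{Var}(\sum_i Y_i) \le (1+\omega_n)\sum_i \mathbb{E}Y_i^2$, valid for any centred sequence on a dependency graph of maximum degree $\omega_n$ (each covariance vanishes unless $(i,j)$ is an edge, then one applies Cauchy--Schwarz). Applied to $\{x_i'' - \mathbb{E}x_i''\}$, Condition~1 yields $\sigma_n^{-2}\mathrm{Var}(\sum_i x_i'') \to 0$. By Chebyshev and Slutsky it therefore suffices to prove the CLT for $\widetilde T_n := \sigma_n^{-1}\sum_i (x_i' - \mathbb{E}x_i')$, whose summands are centred and uniformly bounded by $2a_n$; the same variance bound together with Cauchy--Schwarz on $\mathrm{Cov}(\sum x_i',\sum x_i'')$ also shows $\mathrm{Var}(\widetilde T_n) \to 1$.

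The technical core is a graph-theoretic cumulant estimate. By the moment--cumulant inversion formula combined with independence, the joint cumulant $\kappa(x_{i_1}',\dots,x_{i_{m'}}')$ vanishes whenever the index multi-set can be partitioned into two non-empty blocks with no dependency-graph edge between them; only ordered tuples whose underlying vertex set is connected in the dependency graph contribute to $\kappa_{m'}(\sum_i x_i')$. A standard counting argument (fix $i_1$ in at most $n$ ways, then each $i_k$ in at most $(k-1)\omega_n$ ways, since it must be adjacent to the cluster already chosen) bounds the number of such ordered tuples by $C_{m'} n\omega_n^{m'-1}$. Combined with the elementary inequality $|\kappa(Y_1,\dots,Y_{m'})| \le C_{m'}\prod_k \|Y_k\|_\infty \le C_{m'}(2a_n)^{m'}$, this gives $|\kappa_{m'}(\widetilde T_n)| \le C_{m'}\bigl(n^{1/m'}\omega_n^{(m'-1)/m'}a_n/\sigma_n\bigr)^{m'}$. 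For the specific $m$ supplied by Condition~2 this tends to $0$; for $m' > m$, the factorisation $n\omega_n^{m'-1}a_n^{m'}/\sigma_n^{m'} = \bigl(n\omega_n^{m-1}a_n^m/\sigma_n^m\bigr)\cdot(\omega_n a_n/\sigma_n)^{m'-m}$ shows it tends to $0$ because the first factor already vanishes and the second is bounded (since $\omega_n \le n$ implies $\omega_n a_n/\sigma_n = (\omega_n/n)^{1/m}\cdot n^{1/m}\omega_n^{(m-1)/m}a_n/\sigma_n \to 0$).

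The main obstacle is lifting the cumulant convergence to distributional convergence. Tightness of $\{\widetilde T_n\}$ follows from $\mathrm{Var}(\widetilde T_n) \to 1$ via Chebyshev. The moment--cumulant relations express each moment of $\widetilde T_n$ as a polynomial in its cumulants, so $\kappa_{m'}(\widetilde T_n) \to 0$ for every $m' \ge 3$ together with $\kappa_2(\widetilde T_n) \to 1$ implies that every moment of $\widetilde T_n$ converges to the corresponding moment of $N(0,1)$. The Frechet--Shohat theorem (whose moment-determinacy hypothesis is automatic because the normal distribution satisfies Carleman's condition) identifies the unique weak limit as $N(0,1)$, and reversing the truncation via Slutsky's lemma completes the proof of the stated CLT.
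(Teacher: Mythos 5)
Note first that the paper does not actually prove this lemma: it is imported verbatim as \citet[Theorem 2]{janson1988normal}, so your proposal is in effect a reconstruction of Janson's own argument rather than an alternative to a proof in the paper. Your general route is the right one: the truncation step, the vanishing of joint cumulants over index sets that split into non-adjacent blocks, the counting of connected ordered tuples, and the resulting bound $\left|\kappa_{m'}\bigl(\sum_i (x_i'-\mathbb{E}x_i')\bigr)\right|\le C_{m'}\,n\,\omega_n^{m'-1}a_n^{m'}$ correspond to Janson's Theorem 1, and your treatment of the orders $m'\ge m$ (including the factorisation for $m'>m$) is correct, as is the reduction of the variance of the truncated-tail part via Condition 1.

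The gap is in the orders $3\le m'<m$. Your method-of-moments step uses ``$\kappa_{m'}(\widetilde T_n)\to 0$ for every $m'\ge 3$,'' but your estimates deliver this only for $m'\ge m$. Writing $\delta_n=n^{1/m}\omega_n^{(m-1)/m}a_n/\sigma_n$, your bound for order $m'$ reads $C_{m'}\,(n/\omega_n)^{1-m'/m}\,\delta_n^{m'}$, and for $m'<m$ the factor $(n/\omega_n)^{1-m'/m}$ can diverge, so the hypotheses do not even guarantee that $\kappa_3,\dots,\kappa_{m-1}$ of $\widetilde T_n$ stay bounded, let alone vanish. This also blocks the natural rescue attempts: a subsequence argument combined with Marcinkiewicz's theorem (every subsequential limit has all cumulants of order $\ge m$ equal to zero, hence is Gaussian) would still require convergence of moments along subsequences, i.e.\ bounds on moments of every fixed order, which again involve the uncontrolled cumulants of orders $3,\dots,m-1$. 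So, as written, your proof is complete only when the integer $m$ in the hypothesis is at most $3$; handling a general $m$, which is exactly the extra strength of Janson's Theorem 2 over his Theorem 1, requires an additional argument that your write-up does not supply. A secondary point: the vanishing of mixed cumulants across non-adjacent index sets uses the strong dependency-graph property that two vertex sets with no connecting edges are independent \emph{as families} (Janson's definition), not merely the pairwise notion in Definition \ref{def: Dependence-graph} of this paper; you should make explicit that this stronger notion is the one in force.
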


\begin{lem}
\label{lem: Hansen 2.6.2}Let $\left\{ z_{n,i}\right\} _{i\in\left[n\right]}$
be a row-wise triangular array in which variables in the same row
are mutually independent and non-identically distributed. Suppose
$\mathbb{E}\left[z_{n,i}\right]=\mu_{n,i}$ and $\mathbb{E}\left|z_{n,i}\right|^{1+\delta}<C<\infty$
for all $n$ and $i$, where $\delta>0$ and $C>0$ are some fixed
constants. Then, as $n\rightarrow\infty$, {\small{}
\[
\frac{1}{n}\sum_{i=1}^{n}\left(z_{n,i}-\mu_{n,i}\right)\overset{p}{\rightarrow}0.
\]
}{\small\par}
\end{lem}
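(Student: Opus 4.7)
This is a standard weak law of large numbers for a row-wise independent triangular array under a uniform $(1+\delta)$-th moment bound. The plan is a truncation argument: split each centered variable into a truncated piece controlled via Chebyshev and a tail piece controlled via Markov, then balance the two error rates by choice of truncation level.

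First, I would reduce to the zero-mean case by setting $\tilde z_{n,i}\coloneqq z_{n,i}-\mu_{n,i}$, so that $\mathbb{E}\tilde z_{n,i}=0$ and $\mathbb{E}|\tilde z_{n,i}|^{1+\delta}\leq 2^{1+\delta}C$ by Jensen's inequality. Without loss of generality I assume $\delta\in(0,1]$, since for $\delta>1$ the variances are uniformly bounded by Jensen and Chebyshev alone delivers the conclusion. For a truncation level $\tau_n\to\infty$ to be chosen, I would write $\tilde z_{n,i}=\tilde z_{n,i}^{T}+\tilde z_{n,i}^{R}$ with $\tilde z_{n,i}^{T}\coloneqq\tilde z_{n,i}\mathbb{I}\{|\tilde z_{n,i}|\leq\tau_n\}$, and since $\mathbb{E}\tilde z_{n,i}^{T}+\mathbb{E}\tilde z_{n,i}^{R}=0$ the normalized sum decomposes as
\[
\frac{1}{n}\sum_{i=1}^{n}\tilde z_{n,i}=\frac{1}{n}\sum_{i=1}^{n}\bigl(\tilde z_{n,i}^{T}-\mathbb{E}\tilde z_{n,i}^{T}\bigr)+\frac{1}{n}\sum_{i=1}^{n}\bigl(\tilde z_{n,i}^{R}-\mathbb{E}\tilde z_{n,i}^{R}\bigr).
\]

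For the truncated piece I would invoke $|z|^{2}\mathbb{I}\{|z|\leq\tau_n\}\leq\tau_n^{1-\delta}|z|^{1+\delta}$ to obtain $\operatorname{Var}(\tilde z_{n,i}^{T})\leq C\tau_n^{1-\delta}$; row-wise independence and Chebyshev's inequality then give
\[
\Pr\!\left(\left|\tfrac{1}{n}\sum_{i=1}^{n}\bigl(\tilde z_{n,i}^{T}-\mathbb{E}\tilde z_{n,i}^{T}\bigr)\right|>\epsilon\right)\leq\frac{C\tau_n^{1-\delta}}{n\epsilon^{2}}.
\]
For the tail piece, Markov's inequality combined with $\mathbb{E}|z|\mathbb{I}\{|z|>\tau_n\}\leq\tau_n^{-\delta}\mathbb{E}|z|^{1+\delta}$ yields $\mathbb{E}|\tilde z_{n,i}^{R}-\mathbb{E}\tilde z_{n,i}^{R}|\leq 2C\tau_n^{-\delta}$, so the second normalized sum is $O_{p}(\tau_n^{-\delta})$.

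Setting $\tau_n=n^{1/(1+\delta)}$ reduces the Chebyshev bound to $O(n^{-2\delta/(1+\delta)})$ and the Markov bound to $O(n^{-\delta/(1+\delta)})$; both vanish as $n\to\infty$, closing the argument. There is no real obstacle here; the only delicate point is coordinating the truncation level so that the variance of the truncated sum and the $L^{1}$ mass of the tail both vanish simultaneously, which the exponent $1/(1+\delta)$ balances.
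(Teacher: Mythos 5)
Your proof is correct, and it is the standard truncation argument for a WLLN under a uniform $(1+\delta)$-th moment bound. The paper does not write out a proof of this lemma at all: it simply refers the reader to the proof of Lemma 2.6.2 in \citet{hansen2004inference}, which is precisely this truncation-plus-Chebyshev/Markov argument, so your derivation fills in what the paper leaves to a citation. The details all check out: the reduction to $\delta\in(0,1]$ via Lyapunov, the bound $\operatorname{Var}(\tilde z_{n,i}^{T})\leq C\tau_n^{1-\delta}$ from $|z|^2\mathbb{I}\{|z|\leq\tau_n\}\leq\tau_n^{1-\delta}|z|^{1+\delta}$, the tail bound $\mathbb{E}|z|\mathbb{I}\{|z|>\tau_n\}\leq\tau_n^{-\delta}\mathbb{E}|z|^{1+\delta}$, and the choice $\tau_n=n^{1/(1+\delta)}$ making both pieces $o(1)$. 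The only tiny imprecision is attributing $\mathbb{E}|\tilde z_{n,i}|^{1+\delta}\leq 2^{1+\delta}C$ solely to Jensen — you also need the $c_r$-inequality $|a+b|^{1+\delta}\leq 2^{\delta}(|a|^{1+\delta}+|b|^{1+\delta})$ before Jensen bounds $|\mu_{n,i}|^{1+\delta}$ — but this is cosmetic and does not affect the argument.
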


\begin{proof}
This lemma can be directly justified by the proof of \citet[Lemma 2.6.2]{hansen2004inference}.
\end{proof}
\begin{lem}
{[}\citet[Theorem 2.1]{bartlett2005local}{]}\label{lem: Bartlett Theorem 2.1}
Let $\left(\mathcal{X},P\right)$ be a probability space. Let $\mathcal{G}$
be a class of measurable functions from $\mathcal{X}$ to $\left[a,b\right]$.
Let $\left\{ \boldsymbol{x}_{i}\right\} _{i\in\left[n\right]}$ be
independent random variables distributed according to $P$. Assume
that there is some $r>0$ such that for every $g\in\mathcal{G}$,
$\text{Var}\left(g\left(\boldsymbol{x}_{i}\right)\right)\leq r$.
Then for every $t>0$, with probability at least $1-\exp\left(-t\right)$,{\footnotesize{}
\[
\sup_{g\in\mathcal{G}}\left(\mathbb{E}\left[g\left(\boldsymbol{x}_{i}\right)\right]-\frac{1}{n}\sum_{i=1}^{n}g\left(\boldsymbol{x}_{i}\right)\right)\leq\underset{\alpha>0}{\inf}\left\{ 2\left(1+\alpha\right)\mathbb{E}\left[\sup_{g\in\mathcal{G}}\frac{1}{n}\sum_{i=1}^{n}\eta_{i}g\left(\boldsymbol{x}_{i}\right)\right]+\sqrt{\frac{2rt}{n}}+\left(b-a\right)\left(\frac{1}{3}+\frac{1}{\alpha}\right)\frac{t}{n}\right\} ,
\]
}where $\left\{ \eta_{i}\right\} _{i\in\left[n\right]}$ are independent
Rademacher random variables. And the same result holds for $\sup_{g\in\mathcal{G}}\left(\frac{1}{n}\sum_{i=1}^{n}g\left(\boldsymbol{x}_{i}\right)-\mathbb{E}\left[g\left(\boldsymbol{x}_{i}\right)\right]\right)$.
\end{lem}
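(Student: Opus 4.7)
I plan to prove Theorem \ref{thm: rate_of_convergence} by a bias--variance decomposition of the excess risk combined with a local Rademacher complexity analysis adapted to the dependency graph via the smallest proper cover. Write $\mathcal{E}(\hat{\boldsymbol{\theta}}) := \mathbb{E}[\ell(y_i,z_i(\hat{\boldsymbol{\theta}}))] - \mathbb{E}[\ell(y_i,z_{*i}(\boldsymbol{f}_*))]$ and split it as $\mathcal{E}(\boldsymbol{\theta}_*) + (\mathcal{E}(\hat{\boldsymbol{\theta}}) - \mathcal{E}(\boldsymbol{\theta}_*))$; the lower curvature in (\ref{eq: A1_3_curvature_1}) then converts a bound on $\mathcal{E}(\hat{\boldsymbol{\theta}})$ into the advertised $L_2$ rate for $z_i(\hat{\boldsymbol{\theta}}) - z_{*i}(\boldsymbol{f}_*)$.

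For the approximation error $\mathcal{E}(\boldsymbol{\theta}_*)$, I apply Lemma \ref{lem: approximation_infinitely_differentiable} layer by layer: each component $f_{*j_l}^{(l)}$ lies in $\mathcal{W}_\eta^{\beta,\infty}$ on a $2d_{h*}^{(l-1)}$-dimensional cube and is uniformly approximated by a shallow net with $d_h^{(l)}$ neurons to accuracy $(d_h^{(l)})^{-\beta/(2d_{h*}^{(l-1)})}$. Composing these approximators produces a feasible $\boldsymbol{\theta}^\dagger \in \Theta_{d_h,\bar{z}}$ (the sup-norm constraint is absorbed by the slack $\bar{z}>1.1M$); since neighbor averaging is sup-norm non-expansive and the $f_*^{(l)}$ have finite Lipschitz norms, the propagated sup-norm error remains of order $(\min_l d_h^{(l)})^{-\beta/(2d_{h*})}$. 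Combined with the upper curvature (\ref{eq: A1_3_curvature_1}) and the Lipschitz bound (\ref{eq: A1_3_lipschitz}), this gives $\mathcal{E}(\boldsymbol{\theta}_*) \leq \mathcal{E}(\boldsymbol{\theta}^\dagger) \lesssim (\min_l d_h^{(l)})^{-\beta/d_{h*}}$.

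For the estimation error, set $g_{\boldsymbol{\theta}}(\boldsymbol{\upsilon}_i) = \ell(y_i,z_i(\boldsymbol{\theta})) - \ell(y_i,z_i(\boldsymbol{\theta}_*))$. Within each cover $\mathcal{C}_j$ the $\boldsymbol{\upsilon}_i$'s are i.i.d., so Lemma \ref{lem: Bartlett Theorem 3.3} applies with $T(g_{\boldsymbol{\theta}}) \lesssim \mathbb{E}[g_{\boldsymbol{\theta}}]$ using (\ref{eq: A1_3_lipschitz}) together with curvature (\ref{eq: A1_3_curvature_2}). A sub-root majorizer of the localized Rademacher complexity is then constructed by combining the contraction Lemma \ref{lem: contraction}, the in-sample second-moment control of Lemma \ref{lem: Bartlett Corollary 2.2}, and the Dudley entropy integral of Lemma \ref{lem: Dudley's chaining}; its fixed point is $r_j^\star \asymp \mathrm{Pdim}\cdot(1+\log|\mathcal{C}_j|)/|\mathcal{C}_j|$ up to log factors. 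The pseudo-dimension of $\{z_i(\boldsymbol{\theta}):\boldsymbol{\theta}\in\Theta_{d_h,\bar{z}}\}$ is bounded via Lemmas \ref{lem: Anthony Theorem 12.2} and \ref{lem: Anthony and Bartlett Theorems 8.4 and 8.14} by the VC-dimension of its thresholded version: evaluating $z_i(\boldsymbol{\theta})$ on the $L$-hop input $\boldsymbol{\xi}_i$ costs arithmetic operations growing like $(c_n)^L$ per layer and, when $\sigma$ uses exponentials, a proportional number of exponential evaluations. Together with a parameter count of order $\sum_l (d_h^{(l)})^2$, this yields $\mathrm{Pdim}\lesssim (d_h)^k(c_n)^s$ with the $(k,s)$ stated in the theorem; the jump to $s=2L-1$ and the larger $k$ in the exponential case come from the $d_\theta^2 q^2 + d_\theta q t$ clause of Lemma \ref{lem: Anthony and Bartlett Theorems 8.4 and 8.14}, while the $L=1$ vs.\ $L\geq 2$ distinction reflects the one-layer parameter count scaling linearly rather than quadratically in $d_h$.

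Finally, I union-bound the cover-wise inequalities (inflating $\rho \mapsto \rho+\log J$), sum them weighted by $|\mathcal{C}_j|/n$, and combine with the global optimality $n^{-1}\sum_i g_{\hat{\boldsymbol{\theta}}}(\boldsymbol{\upsilon}_i)\leq n^{-1}\sum_i g_{\boldsymbol{\theta}^\dagger}(\boldsymbol{\upsilon}_i)$ and Lemma \ref{lem: Bartlett Theorem 2.1} applied at $\boldsymbol{\theta}^\dagger$ to obtain
$$\mathcal{E}(\hat{\boldsymbol{\theta}}) \;\lesssim\; (\min_l d_h^{(l)})^{-\beta/d_{h*}} + \tfrac{(d_h)^k}{n}\sum_{j=1}^J (1+\log|\mathcal{C}_j|)(c_n)^s + \tfrac{J\log J + J\rho}{n}.$$
Setting $d_h^{(l)}\asymp\bigl(n^{-1}\sum_j(1+\log|\mathcal{C}_j|)(c_n)^s\bigr)^{-d_{h*}/(\beta+kd_{h*})}$ balances the first two terms and delivers the population bound; the empirical-norm analog follows by one more application of Lemma \ref{lem: Bartlett Corollary 2.2} converting population to empirical second moments, which accounts for the $2\exp(-\rho)$ failure probability. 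The most delicate step is the pseudo-dimension calculation: because the graph recursion mixes heavy parameter sharing with an $L$-hop neighborhood of combinatorial size $(c_n)^L$, counting arithmetic and (when applicable) exponential operations precisely is what pins down the exponents $(k,s)$ and separates the $L=1$ and $L\geq 2$ regimes. A secondary subtlety is patching cover-wise localization with the global optimality of $\hat{\boldsymbol{\theta}}$, which I handle by weighting each cover's inequality by $|\mathcal{C}_j|/n$ before invoking the sample-wide first-order condition so that the $r_j^\star$ aggregate into the $n^{-1}\sum_j(1+\log|\mathcal{C}_j|)(c_n)^s$ term.
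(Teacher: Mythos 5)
Your proposal does not address the statement you were asked to prove. The statement is Lemma \ref{lem: Bartlett Theorem 2.1}, i.e.\ Theorem 2.1 of \citet{bartlett2005local}: a one-sided concentration inequality asserting that, for an i.i.d.\ sample and a uniformly bounded class $\mathcal{G}$ with $\operatorname{Var}(g(\boldsymbol{x}_i))\leq r$, the supremum of the centered empirical process is bounded, with probability $1-\exp(-t)$, by $\inf_{\alpha>0}\left\{2(1+\alpha)\mathbb{E}\left[\sup_{g}\frac{1}{n}\sum_i\eta_i g(\boldsymbol{x}_i)\right]+\sqrt{2rt/n}+(b-a)\left(\frac{1}{3}+\frac{1}{\alpha}\right)\frac{t}{n}\right\}$, together with the symmetric statement. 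What you wrote is instead a (reasonable-looking) outline of the proof of Theorem \ref{thm: rate_of_convergence} — the bias--variance decomposition, the cover-wise localization, the pseudo-dimension count, and the choice of $d_h$. None of that establishes the concentration inequality; indeed your outline \emph{invokes} results of exactly this type (Lemma \ref{lem: Bartlett Theorem 3.3}, Lemma \ref{lem: Bartlett Theorem 2.1} itself) as black boxes, so as a proof of the lemma it is circular and, more simply, off target. Note also that the paper itself offers no proof of this lemma: it is quoted from \citet{bartlett2005local} as a supporting result.

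A correct proof of the lemma would follow the standard route: apply Talagrand's concentration inequality for suprema of bounded empirical processes in Bousquet's form, which gives that with probability at least $1-\exp(-t)$ the supremum $Z=\sup_{g}\left(\mathbb{E}[g]-\frac{1}{n}\sum_i g(\boldsymbol{x}_i)\right)$ satisfies $Z\leq\mathbb{E}[Z]+\sqrt{\frac{2t\left(r+2(b-a)\mathbb{E}[Z]\right)}{n}}+\frac{(b-a)t}{3n}$; then bound $\mathbb{E}[Z]\leq 2\,\mathbb{E}\left[\sup_{g}\frac{1}{n}\sum_i\eta_i g(\boldsymbol{x}_i)\right]$ by symmetrization, and absorb the cross term $\sqrt{4(b-a)t\,\mathbb{E}[Z]/n}$ via the elementary inequality $\sqrt{uv}\leq\alpha u+\frac{v}{4\alpha}$ (valid for every $\alpha>0$), which is precisely what produces the $2(1+\alpha)$ factor in front of the Rademacher complexity, the separate $\sqrt{2rt/n}$ term, and the $(b-a)\left(\frac{1}{3}+\frac{1}{\alpha}\right)\frac{t}{n}$ remainder; the reverse direction is identical with the roles of $\mathbb{E}[g]$ and the empirical mean exchanged. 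If you want to salvage your current text, it belongs to the proof of Theorem \ref{thm: rate_of_convergence}, not here.
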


\section{Proof of Theorem \ref{thm: rate_of_convergence} \label{sec:Proof-of-Theorem 1}}

The proof of Theorem \ref{thm: rate_of_convergence} takes four steps,
as we show in this section.

\subsection{Step 1: main decomposition}

Our proof is based on the decomposition {\small{}
\begin{align*}
c_{1}\mathbb{E}\left[\left(z_{i}\left(\hat{\boldsymbol{\theta}}\right)-z_{*i}\left(\boldsymbol{f}_{*}\right)\right)^{2}\right]\leq & \mathbb{E}\left[\ell\left(y_{i},z_{i}\left(\hat{\boldsymbol{\theta}}\right)\right)-\ell\left(y_{i},z_{*i}\left(\boldsymbol{f}_{*}\right)\right)\right]\\
= & \mathbb{E}\left[\ell\left(y_{i},z_{i}\left(\hat{\boldsymbol{\theta}}\right)\right)-\ell\left(y_{i},z_{i}\left(\boldsymbol{\theta}_{*}\right)\right)\right]+\min_{\boldsymbol{\theta}\in\Theta_{d_{h},\bar{z}}}\mathbb{E}\left[\ell\left(y_{i},z_{i}\left(\boldsymbol{\theta}\right)\right)-\ell\left(y_{i},z_{*i}\left(\boldsymbol{f}_{*}\right)\right)\right]\\
\leq & \mathbb{E}\left[\ell\left(y_{i},z_{i}\left(\hat{\boldsymbol{\theta}}\right)\right)-\ell\left(y_{i},z_{i}\left(\boldsymbol{\theta}_{*}\right)\right)\right]+c_{2}\min_{\boldsymbol{\theta}\in\Theta_{d_{h},\bar{z}}}\mathbb{E}\left[\left(z_{i}\left(\boldsymbol{\theta}\right)-z_{*i}\left(\boldsymbol{f}_{*}\right)\right)^{2}\right],
\end{align*}
}where the first and second inequalities hold under Assumption I 3,
and the equality holds by the definition of $\boldsymbol{\theta}_{*}$.
We call $T_{1}=\mathbb{E}\left[\ell\left(y_{i},z_{i}\left(\hat{\boldsymbol{\theta}}\right)\right)-\ell\left(y_{i},z_{i}\left(\boldsymbol{\theta}_{*}\right)\right)\right]$
the estimation error and $T_{2}=\min_{\boldsymbol{\theta}\in\Theta_{d_{h},\bar{z}}}\mathbb{E}\left[\left(z_{i}\left(\boldsymbol{\theta}\right)-z_{*i}\left(\boldsymbol{f}_{*}\right)\right)^{2}\right]$
the approximation error. The expectation in $T_{1}$ is taken over
the randomness of $\left\{ y_{i},\boldsymbol{\xi}_{i}\right\} $ and
that in $T_{2}$ is taken over $\boldsymbol{\xi}_{i}$.

\subsection{Step 2: bound the estimation error $T_{1}$}

To derive an upper bound of the estimation error, we follow \citet{bartlett2005local}
to apply the localization analysis. The strategy of using localization
analysis to derive convergence rate has been previously adopted in
\citet{farrell2021deep} and references therein, and we pursue a similar
approach. In our context, we need to apply the localization analysis
to accommodate dependent data, and to examine the complexity measure
(pseudo-dimension) specific to the GNN architecture.

In particular, we first apply Lemma \ref{lem: Bartlett Theorem 3.3}
to the set of functions\\
${\cal G}=\left\{ \ell\left(y_{i},z_{i}\left(\boldsymbol{\theta}\right)\right)-\ell\left(y_{i},z_{i}\left(\boldsymbol{\theta}_{*}\right)\right):\boldsymbol{\theta}\in\Theta_{d_{h},\bar{z}}\right\} $.
By Assumption I 3, it holds for every $\boldsymbol{\theta}\in\Theta_{d_{h},\bar{z}}$,
{\small{}
\[
\left|\ell\left(y_{i},z_{i}\left(\boldsymbol{\theta}\right)\right)-\ell\left(y_{i},z_{i}\left(\boldsymbol{\theta}_{*}\right)\right)\right|\leq c_{\ell}\left|z_{i}\left(\boldsymbol{\theta}\right)-z_{i}\left(\boldsymbol{\theta}_{*}\right)\right|\leq2c_{\ell}\bar{z}.
\]
}So the range of any function $g$ in ${\cal {\cal G}}$ is $\left[-2c_{\ell}\bar{z},2c_{\ell}\bar{z}\right]$.

Provided that the sample $\left\{ \left(y_{i},\boldsymbol{\xi}_{i}\right)\right\} _{i\in\mathcal{C}_{j}}$
is i.i.d. within each cover $\mathcal{C}_{j}$, we could apply Lemma
\ref{lem: Bartlett Theorem 3.3} to each cover $\mathcal{C}_{j}$
separately if the following two conditions hold:
\begin{enumerate}
\item There is a functional $T:\mathcal{G}\rightarrow\mathbb{R}_{+}$ and
some constant $B>0$ such that for every $g\in\mathcal{G}$, $\text{Var}\left(g\right)\leq T\left(g\right)\leq B\mathbb{E}\left[g\right]$.
\item For each $j\in\left[J\right]$, there exists a sub-root function $\psi_{j}$
with fixed point $r_{j}^{*}$ such that for any $r\geq r_{j}^{*}$,{\small{}
\begin{equation}
\psi_{j}\left(r\right)\geq B\mathbb{E}\left[\sup_{g\in\mathcal{G}:T\left(g\right)\leq r}\frac{1}{\left|\mathcal{C}_{j}\right|}\sum_{i\in\mathcal{C}_{j}}\eta_{i}g\left(y_{i},\boldsymbol{\xi}_{i}\right)\right],\label{eq: bartlett_condition1}
\end{equation}
}where $\left\{ \eta_{i}\right\} _{i\in\mathcal{C}_{j}}$ are independent
Rademacher random variables, and the expectation is taken over the
randomness of $\left\{ \left(\eta_{i},y_{i},\boldsymbol{\xi}_{i}\right)\right\} _{i\in\mathcal{C}_{j}}$.
We define sub-root functions, fixed points, and Rademacher variables
in Definition \ref{def: sub-root function}, Lemma \ref{lem: Bartlett Lemma 3.2},
and Definition \ref{def: Rademacher variables}, respectively.
\end{enumerate}
If the above conditions hold, we can apply Lemma \ref{lem: Bartlett Theorem 3.3}
to obtain that for every cover $j\in\left[J\right]$, with probability
at least $1-\exp\left(-\log J-\rho\right)$,{\small{}
\[
\forall g\in\mathcal{G},\quad\mathbb{E}\left[g\right]\leq\max\left\{ \frac{1}{\left|\mathcal{C}_{j}\right|}\sum_{i\in\mathcal{C}_{j}}g\left(y_{i},\boldsymbol{\xi}_{i}\right),\frac{2}{\left|\mathcal{C}_{j}\right|}\sum_{i\in\mathcal{C}_{j}}g\left(y_{i},\boldsymbol{\xi}_{i}\right)\right\} +Cr_{j}^{*}+C\frac{\log J+\rho}{\left|\mathcal{C}_{j}\right|}.
\]
}Then, given that $\frac{1}{n}\sum_{i=1}^{n}\left[\ell\left(y_{i},z_{i}\left(\hat{\boldsymbol{\theta}}\right)\right)-\ell\left(y_{i},z_{i}\left(\boldsymbol{\theta}_{*}\right)\right)\right]\leq0$
and $\mathbb{E}\left[\ell\left(y_{i},z_{i}\left(\hat{\boldsymbol{\theta}}\right)\right)-\ell\left(y_{i},z_{i}\left(\boldsymbol{\theta}_{*}\right)\right)\right]\geq0$,
it holds with probability at least $1-\exp\left(-\rho\right)$, {\small{}
\begin{equation}
\mathbb{E}\left[\ell\left(y_{i},z_{i}\left(\hat{\boldsymbol{\theta}}\right)\right)-\ell\left(y_{i},z_{i}\left(\boldsymbol{\theta}_{*}\right)\right)\right]\leq C\cdot\frac{1}{n}\left(\sum_{j=1}^{J}\left|\mathcal{C}_{j}\right|\cdot r_{j}^{*}+J\log J+J\rho\right).\label{eq: approximation_error_r_j}
\end{equation}
}{\small\par}

\subsubsection*{Check conditions}

Next, we show that conditions 1 and 2 hold.

Let $T\left(g\right)=c_{\ell}^{2}\mathbb{E}\left[\left(z_{i}\left(\boldsymbol{\theta}\right)-z_{i}\left(\boldsymbol{\theta}_{*}\right)\right)^{2}\right]$
and $B=c_{\ell}^{2}/c_{3}$. We show below that for every $g\in\mathcal{G}$,
$\text{Var}\left(g\right)\leq T\left(g\right)\leq B\mathbb{E}\left[g\right]$.
In particular, by Assumption I 3, it holds{\small{}
\begin{align*}
\text{Var}\left(g\right) & \leq\mathbb{E}\left[\left(\ell\left(y_{i},z_{i}\left(\boldsymbol{\theta}\right)\right)-\ell\left(y_{i},z_{i}\left(\boldsymbol{\theta}_{*}\right)\right)\right)^{2}\right]\le c_{\ell}^{2}\mathbb{E}\left[\left(z_{i}\left(\boldsymbol{\theta}\right)-z_{i}\left(\boldsymbol{\theta}_{*}\right)\right)^{2}\right]=T\left(g\right),
\end{align*}
}and{\small{}
\[
T\left(g\right)=c_{\ell}^{2}\mathbb{E}\left[\left(z_{i}\left(\boldsymbol{\theta}\right)-z_{i}\left(\boldsymbol{\theta}_{*}\right)\right)^{2}\right]\leq\frac{c_{\ell}^{2}}{c_{3}}\mathbb{E}\left[\ell\left(y_{i},z_{i}\left(\boldsymbol{\theta}\right)\right)-\ell\left(y_{i},z_{i}\left(\boldsymbol{\theta}_{*}\right)\right)\right]=B\mathbb{E}\left[g\right].
\]
}Hence, condition 1 is satisfied.

To check condition 2, we first define a function $\psi_{j}:\mathbb{R}\rightarrow\mathbb{R}$
which satisfies (\ref{eq: bartlett_condition1}). Afterwards, we show
that $\psi_{j}$ is a sub-root function. Note that

{\small{}
\begin{align}
 & B\mathbb{E}\left[\sup_{g\in\mathcal{G}:T\left(g\right)\leq r}\frac{1}{\left|\mathcal{C}_{j}\right|}\sum_{i\in\mathcal{C}_{j}}\eta_{i}g\left(y_{i},\boldsymbol{\xi}_{i}\right)\right]\nonumber \\
= & \frac{c_{\ell}^{2}}{c_{3}}\mathbb{E}\left[\sup_{\boldsymbol{\theta}\in\Theta_{d_{h},\bar{z}}:\mathbb{E}\left[\left(z_{i}\left(\boldsymbol{\theta}\right)-z_{i}\left(\boldsymbol{\theta}_{*}\right)\right)^{2}\right]\leq\frac{r}{c_{\ell}^{2}}}\frac{1}{\left|\mathcal{C}_{j}\right|}\sum_{i\in\mathcal{C}_{j}}\eta_{i}\left(\ell\left(y_{i},z_{i}\left(\boldsymbol{\theta}\right)\right)-\ell\left(y_{i},z_{i}\left(\boldsymbol{\theta}_{*}\right)\right)\right)\right]\nonumber \\
\leq & \frac{c_{\ell}^{3}}{c_{3}}\mathbb{E}\left[\sup_{\boldsymbol{\theta}\in\Theta_{d_{h},\bar{z}}:\mathbb{E}\left[\left(z_{i}\left(\boldsymbol{\theta}\right)-z_{i}\left(\boldsymbol{\theta}_{*}\right)\right)^{2}\right]\leq\frac{r}{c_{\ell}^{2}}}\frac{1}{\left|\mathcal{C}_{j}\right|}\sum_{i\in\mathcal{C}_{j}}\eta_{i}\left(z_{i}\left(\boldsymbol{\theta}\right)-z_{i}\left(\boldsymbol{\theta}_{*}\right)\right)\right]\nonumber \\
\leq & 20\bar{z}c_{\ell}^{2}\max\left\{ \frac{c_{\ell}}{c_{3}},1\right\} \mathbb{E}\left[\sup_{\begin{array}{c}
\alpha\in\left[0,1\right],\boldsymbol{\theta}\in\Theta_{d_{h},\bar{z}}\\
\mathbb{E}\left[\alpha^{2}\left(z_{i}\left(\boldsymbol{\theta}\right)-z_{i}\left(\boldsymbol{\theta}_{*}\right)\right)^{2}\right]\leq\frac{r}{c_{\ell}^{2}}
\end{array}}\frac{1}{\left|\mathcal{C}_{j}\right|}\sum_{i\in\mathcal{C}_{j}}\eta_{i}\alpha\left(z_{i}\left(\boldsymbol{\theta}\right)-z_{i}\left(\boldsymbol{\theta}_{*}\right)\right)\right]+\frac{44c_{\ell}^{2}\bar{z}^{2}\log\left|\mathcal{C}_{j}\right|}{\left|\mathcal{C}_{j}\right|}\label{eq: psi_j}\\
\eqqcolon & \psi_{j}\left(r\right)\nonumber 
\end{align}
}where the first inequality holds by Lemma \ref{lem: contraction},
Assumption I 3, the definition of Rademacher variables, and the independence
between $\eta_{i}$ and $\left\{ y_{i},\boldsymbol{\xi}_{i}\right\} $
for every $i$. The last equality introduces the definition of $\psi_{j}$,
which satisfies (\ref{eq: bartlett_condition1}).

Clearly, $\psi_{j}$ is nonnegative and nondecreasing with $r$. To
show that $\psi\left(r\right)$ is a sub-root function, it is enough
to show that for any constants $r_{1}$ and $r_{2}$ such that $0<r_{1}\leq r_{2}$,
it holds $\psi_{j}\left(r_{1}\right)\geq\sqrt{\frac{r_{1}}{r_{2}}}\psi_{j}\left(r_{2}\right)$.
This inequality can be verified as for any realization of the sample
$\left\{ \boldsymbol{\xi}_{i}\right\} _{i\in\mathcal{C}_{j}}$and
Rademacher random variables $\left\{ \eta_{i}\right\} _{i\in\mathcal{C}_{j}}$,
it holds that {\small{}
\begin{align}
 & \sqrt{\frac{r_{1}}{r_{2}}}\sup_{\begin{array}{c}
\alpha\in\left[0,1\right],\boldsymbol{\theta}\in\Theta_{d_{h},\bar{z}}\\
\mathbb{E}\left[\alpha^{2}\left(z_{i}\left(\boldsymbol{\theta}\right)-z_{i}\left(\boldsymbol{\theta}_{*}\right)\right)^{2}\right]\leq r_{2}/c_{\ell}^{2}
\end{array}}\sum_{i\in\mathcal{C}_{j}}\eta_{i}\alpha\left(z_{i}\left(\boldsymbol{\theta}\right)-z_{i}\left(\boldsymbol{\theta}_{*}\right)\right)\nonumber \\
\leq & \sup_{\begin{array}{c}
\alpha\in\left[0,1\right],\boldsymbol{\theta}\in\Theta_{d_{h},\bar{z}}\\
\mathbb{E}\left[\alpha^{2}\left(z_{i}\left(\boldsymbol{\theta}\right)-z_{i}\left(\boldsymbol{\theta}_{*}\right)\right)^{2}\right]\leq r_{1}/c_{\ell}^{2}
\end{array}}\sum_{i\in\mathcal{C}_{j}}\eta_{i}\alpha\left(z_{i}\left(\boldsymbol{\theta}\right)-z_{i}\left(\boldsymbol{\theta}_{*}\right)\right).\label{pf: sub_root}
\end{align}
}To show (\ref{pf: sub_root}), we set $\alpha_{0}$ and $\boldsymbol{\theta}_{0}$
such that the first supremum in this inequality is obtained (if the
supremum cannot be reached only a minor modification will suffice).
In particular, $\alpha_{0}$ and $\boldsymbol{\theta}_{0}$ satisfy
the following conditions, $\alpha_{0}\in\left[0,1\right]$, $\boldsymbol{\theta}_{0}\in\Theta_{d_{h},\bar{z}}$,
$\mathbb{E}\left[\alpha_{0}^{2}\left(z_{i}\left(\boldsymbol{\theta}_{0}\right)-z_{i}\left(\boldsymbol{\theta}_{*}\right)\right)^{2}\right]\leq r_{2}/c_{\ell}^{2}$,
and

{\small{}
\begin{align*}
 & \sqrt{\frac{r_{1}}{r_{2}}}\sup_{\begin{array}{c}
\alpha\in\left[0,1\right],\boldsymbol{\theta}\in\Theta_{d_{h},\bar{z}}\\
\mathbb{E}\left[\alpha^{2}\left(z_{i}\left(\boldsymbol{\theta}\right)-z_{i}\left(\boldsymbol{\theta}_{*}\right)\right)^{2}\right]\leq r_{2}/c_{\ell}^{2}
\end{array}}\sum_{i\in\mathcal{C}_{j}}\eta_{i}\alpha\left(z_{i}\left(\boldsymbol{\theta}\right)-z_{i}\left(\boldsymbol{\theta}_{*}\right)\right)\\
= & \sqrt{\frac{r_{1}}{r_{2}}}\sum_{i\in\mathcal{C}_{j}}\eta_{i}\alpha_{0}\left(z_{i}\left(\boldsymbol{\theta}_{0}\right)-z_{i}\left(\boldsymbol{\theta}_{*}\right)\right)=\sum_{i\in\mathcal{C}_{j}}\eta_{i}\tilde{\alpha}\left(z_{i}\left(\boldsymbol{\theta}_{0}\right)-z_{i}\left(\theta_{*}\right)\right),
\end{align*}
}where the last equality holds by setting $\tilde{\alpha}=\sqrt{\frac{r_{1}}{r_{2}}}\alpha_{0}$.
Provided that $\tilde{\alpha}\in\left[0,1\right]$, $\boldsymbol{\theta}_{0}\in\Theta_{d_{h},\bar{z}}$,
and\\
$\mathbb{E}\left[\tilde{\alpha}^{2}\left(z_{i}\left(\boldsymbol{\theta}_{0}\right)-z_{i}\left(\boldsymbol{\theta}_{*}\right)\right)^{2}\right]=\frac{r_{1}}{r_{2}}\mathbb{E}\left[\alpha_{0}^{2}\left(z_{i}\left(\boldsymbol{\theta}_{0}\right)-z_{i}\left(\boldsymbol{\theta}_{*}\right)\right)^{2}\right]\leq r_{1}/c_{\ell}^{2}$,
this implies (\ref{pf: sub_root}).

Given that $\psi_{j}$ is a sub-root function, by Lemma \ref{lem: Bartlett Lemma 3.2},
we could define the fixed point $r_{j}^{*}$ of $\psi_{j}$. Hence,
we have verified condition 2.

Now as the conditions of Lemma \ref{lem: Bartlett Theorem 3.3} hold,
we have verified (\ref{eq: approximation_error_r_j}). In the following,
to finish bounding the estimation error, we derive an upper bound
of the fixed point $r_{j}^{*}$.

\subsubsection*{Derive an upper bound of $r_{j}^{*}$}

Define $\mathcal{G}_{GNN}=\left\{ \alpha\left(z_{i}\left(\boldsymbol{\theta}\right)-z_{i}\left(\boldsymbol{\theta}_{*}\right)\right):\alpha\in\left[0,1\right],\boldsymbol{\theta}\in\Theta_{d_{h},\bar{z}}\right\} $.
We show that there exists a finite constant $C$ such that

{\small{}
\begin{equation}
r_{j}^{*}\leq C\cdot\frac{1+\log\left|\mathcal{C}_{j}\right|}{\left|\mathcal{C}_{j}\right|}\text{\ensuremath{\cdot\text{Pdim}}}\left(\mathcal{G}_{GNN}\right),\label{eq: upper_bound_r_j}
\end{equation}
}where $\text{\text{Pdim}}(\cdot)$ denotes the pseudo-dimension as
in Definition \ref{def: pseudo-dimension}.

Note the definition of $r_{j}^{*}$ ($r_{j}^{*}=\psi_{j}\left(r_{j}^{*}\right)$)
implies{\small{}
\begin{equation}
\frac{r_{j}^{*}}{c_{\ell}^{2}}\geq20\bar{z}\mathbb{E}\left[\sup_{\begin{array}{c}
\alpha\in\left[0,1\right],\boldsymbol{\theta}\in\Theta_{d_{h},\bar{z}}\\
\mathbb{E}\left[\alpha^{2}\left(z_{i}\left(\boldsymbol{\theta}\right)-z_{i}\left(\boldsymbol{\theta}_{*}\right)\right)^{2}\right]\leq\frac{r_{j}^{*}}{c_{\ell}^{2}}
\end{array}}\frac{1}{\left|\mathcal{C}_{j}\right|}\sum_{i\in\mathcal{C}_{j}}\eta_{i}\alpha\left(z_{i}\left(\boldsymbol{\theta}\right)-z_{i}\left(\boldsymbol{\theta}_{*}\right)\right)\right]+\frac{44\bar{z}^{2}\log\left|\mathcal{C}_{j}\right|}{\left|\mathcal{C}_{j}\right|}.\label{pf: fixed point inequality}
\end{equation}
}As the sample $\left\{ \boldsymbol{\xi}_{i}\right\} $ is i.i.d.
within each partition $\mathcal{C}_{j}$, Lemma \ref{lem: Bartlett Corollary 2.2}
and inequality (\ref{pf: fixed point inequality}) imply that with
probability at least $1-\left|\mathcal{C}_{j}\right|^{-1}$,

{\small{}
\begin{align*}
 & \left\{ \alpha\left(z_{i}\left(\boldsymbol{\theta}\right)-z_{i}\left(\boldsymbol{\theta}_{*}\right)\right):\alpha\in\left[0,1\right],\boldsymbol{\theta}\in\Theta_{d_{h},\bar{z}},\mathbb{E}\left[\alpha^{2}\left(z_{i}\left(\boldsymbol{\theta}\right)-z_{i}\left(\boldsymbol{\theta}_{*}\right)\right)^{2}\right]\leq r_{j}^{*}/c_{\ell}^{2}\right\} \\
\subseteq & \left\{ \alpha\left(z_{i}\left(\boldsymbol{\theta}\right)-z_{i}\left(\boldsymbol{\theta}_{*}\right)\right):\alpha\in\left[0,1\right],\boldsymbol{\theta}\in\Theta_{d_{h},\bar{z}},\frac{1}{\left|\mathcal{C}_{j}\right|}\sum_{i\in\mathcal{C}_{j}}\alpha^{2}\left(z_{i}\left(\boldsymbol{\theta}\right)-z_{i}\left(\boldsymbol{\theta}_{*}\right)\right)^{2}\leq2r_{j}^{*}/c_{\ell}^{2}\right\} .
\end{align*}
}Therefore, with probability at least $1-\left|\mathcal{C}_{j}\right|^{-1}$,
{\small{}
\begin{align*}
 & \sup_{\begin{array}{c}
\alpha\in\left[0,1\right],\boldsymbol{\theta}\in\Theta_{d_{h},\bar{z}}\\
\mathbb{E}\left[\alpha^{2}\left(z_{i}\left(\boldsymbol{\theta}\right)-z_{i}\left(\boldsymbol{\theta}_{*}\right)\right)^{2}\right]\leq r_{j}^{*}/c_{\ell}^{2}
\end{array}}\frac{1}{\left|\mathcal{C}_{j}\right|}\sum_{i\in\mathcal{C}_{j}}\eta_{i}\alpha\left(z_{i}\left(\boldsymbol{\theta}\right)-z_{i}\left(\boldsymbol{\theta}_{*}\right)\right)\\
\leq & \sup_{\begin{array}{c}
\alpha\in\left[0,1\right],\boldsymbol{\theta}\in\Theta_{d_{h},\bar{z}}\\
\frac{1}{\left|\mathcal{C}_{j}\right|}\sum_{i\in\mathcal{C}_{j}}\alpha^{2}\left(z_{i}\left(\boldsymbol{\theta}\right)-z_{i}\left(\boldsymbol{\theta}_{*}\right)\right)^{2}\leq2r_{j}^{*}/c_{\ell}^{2}
\end{array}}\frac{1}{\left|\mathcal{C}_{j}\right|}\sum_{i\in\mathcal{C}_{j}}\eta_{i}\alpha\left(z_{i}\left(\boldsymbol{\theta}\right)-z_{i}\left(\boldsymbol{\theta}_{*}\right)\right).
\end{align*}
}Then, applying Lemma \ref{lem: bounded expectation} and given that
{\small{}
\begin{align*}
\sup_{\begin{array}{c}
\alpha\in\left[0,1\right],\boldsymbol{\theta}\in\Theta_{d_{h},\bar{z}}\\
\mathbb{E}\left[\alpha^{2}\left(z_{i}\left(\boldsymbol{\theta}\right)-z_{i}\left(\boldsymbol{\theta}_{*}\right)\right)^{2}\right]\leq r_{j}^{*}/c_{\ell}^{2}
\end{array}}\frac{1}{\left|\mathcal{C}_{j}\right|}\sum_{i\in\mathcal{C}_{j}}\eta_{i}\alpha\left(z_{i}\left(\boldsymbol{\theta}\right)-z_{i}\left(\boldsymbol{\theta}_{*}\right)\right) & \in\left[0,2\bar{z}\right],\\
\sup_{\begin{array}{c}
\alpha\in\left[0,1\right],\boldsymbol{\theta}\in\Theta_{d_{h},\bar{z}}\\
\frac{1}{|\mathcal{C}_{j}|}\sum_{i\in\mathcal{C}_{j}}\alpha^{2}\left(z_{i}\left(\boldsymbol{\theta}\right)-z_{i}\left(\boldsymbol{\theta}_{*}\right)\right)^{2}\leq2r_{j}^{*}/c_{\ell}^{2}
\end{array}}\frac{1}{\left|\mathcal{C}_{j}\right|}\sum_{i\in\mathcal{C}_{j}}\eta_{i}\alpha\left(z_{i}\left(\boldsymbol{\theta}\right)-z_{i}\left(\boldsymbol{\theta}_{*}\right)\right) & \in\left[0,2\bar{z}\right],
\end{align*}
}we obtain{\small{}
\begin{align}
 & \mathbb{E}\left[\sup_{\begin{array}{c}
\alpha\in\left[0,1\right],\boldsymbol{\theta}\in\Theta_{d_{h},\bar{z}}\\
\mathbb{E}\left[\alpha^{2}\left(z_{i}\left(\boldsymbol{\theta}\right)-z_{i}\left(\boldsymbol{\theta}_{*}\right)\right)^{2}\right]\leq r_{j}^{*}/c_{\ell}^{2}
\end{array}}\frac{1}{\left|\mathcal{C}_{j}\right|}\sum_{i\in\mathcal{C}_{j}}\eta_{i}\alpha\left(z_{i}\left(\boldsymbol{\theta}\right)-z_{i}\left(\boldsymbol{\theta}_{*}\right)\right)\right]\nonumber \\
\leq & \mathbb{E}\left[\sup_{\begin{array}{c}
\alpha\in\left[0,1\right],\boldsymbol{\theta}\in\Theta_{d_{h},\bar{z}}\\
\frac{1}{\left|\mathcal{C}_{j}\right|}\sum_{i\in\mathcal{C}_{j}}\alpha^{2}\left(z_{i}\left(\boldsymbol{\theta}\right)-z_{i}\left(\boldsymbol{\theta}_{*}\right)\right)^{2}\leq2r_{j}^{*}/c_{\ell}^{2}
\end{array}}\frac{1}{\left|\mathcal{C}_{j}\right|}\sum_{i\in\mathcal{C}_{j}}\eta_{i}\alpha\left(z_{i}\left(\boldsymbol{\theta}\right)-z_{i}\left(\boldsymbol{\theta}_{*}\right)\right)\right]+\frac{2\bar{z}}{\left|\mathcal{C}_{j}\right|}.\label{pf: expectation_to_sample_mean}
\end{align}
}In the following, we derive an upper bound of the fixed point. {\footnotesize{}
\begin{align*}
r_{j}^{*} & =C\cdot\mathbb{E}\left[\sup_{\begin{array}{c}
\alpha\in\left[0,1\right],\boldsymbol{\theta}\in\Theta_{d_{h},\bar{z}}\\
\mathbb{E}\left[\alpha^{2}\left(z_{i}\left(\boldsymbol{\theta}\right)-z_{i}\left(\boldsymbol{\theta}_{*}\right)\right)^{2}\right]\leq r_{j}^{*}/c_{\ell}^{2}
\end{array}}\frac{1}{\left|\mathcal{C}_{j}\right|}\sum_{i\in\mathcal{C}_{j}}\eta_{i}\alpha\left(z_{i}\left(\boldsymbol{\theta}\right)-z_{i}\left(\boldsymbol{\theta}_{*}\right)\right)\right]+\frac{44c_{\ell}^{2}\bar{z}^{2}\log\left|\mathcal{C}_{j}\right|}{\left|\mathcal{C}_{j}\right|}\\
 & \leq C\cdot\mathbb{E}\left[\sup_{\begin{array}{c}
\alpha\in\left[0,1\right],\boldsymbol{\theta}\in\Theta_{d_{h},\bar{z}}\\
\frac{1}{\left|\mathcal{C}_{j}\right|}\sum_{i\in\mathcal{C}_{j}}\alpha^{2}\left(z_{i}\left(\boldsymbol{\theta}\right)-z_{i}\left(\boldsymbol{\theta}_{*}\right)\right)^{2}\leq2r_{j}^{*}/c_{\ell}^{2}
\end{array}}\frac{1}{\left|\mathcal{C}_{j}\right|}\sum_{i\in\mathcal{C}_{j}}\eta_{i}\alpha\left(z_{i}\left(\boldsymbol{\theta}\right)-z_{i}\left(\boldsymbol{\theta}_{*}\right)\right)\right]+\frac{2C\bar{z}+44c_{\ell}^{2}\bar{z}^{2}\log\left|\mathcal{C}_{j}\right|}{\left|\mathcal{C}_{j}\right|}\\
 & \leq C\cdot\mathbb{E}\left[\inf_{0\leq\lambda\leq\sqrt{2r_{j}^{*}}/c_{\ell}}\left\{ 4\lambda+\frac{12}{\sqrt{\left|\mathcal{C}_{j}\right|}}\int_{\lambda}^{\sqrt{2r_{j}^{*}}/c_{\ell}}\sqrt{\log\mathcal{N}_{\infty}\left(\epsilon,\mathcal{G}_{GNN},\left\{ \boldsymbol{\xi}_{i}\right\} _{i\in\mathcal{C}_{j}}\right)}d\epsilon\right\} \right]+\frac{2C\bar{z}+44c_{\ell}^{2}\bar{z}^{2}\log\left|\mathcal{C}_{j}\right|}{\left|\mathcal{C}_{j}\right|}\\
 & \leq C\cdot\inf_{0\leq\lambda\leq\sqrt{2r_{j}^{*}}/c_{\ell}}\left\{ 4\lambda+\frac{12}{\sqrt{\left|\mathcal{C}_{j}\right|}}\int_{\lambda}^{\sqrt{2r_{j}^{*}}/c_{\ell}}\sqrt{\text{\text{Pdim}}\left(\mathcal{G}_{GNN}\right)\log\left(\max\left\{ \frac{4\bar{z}}{\epsilon},1\right\} \cdot e\cdot\left|\mathcal{C}_{j}\right|\right)}d\epsilon\right\} +\frac{2C\bar{z}+44c_{\ell}^{2}\bar{z}^{2}\log\left|\mathcal{C}_{j}\right|}{\left|\mathcal{C}_{j}\right|}\\
 & \leq C\cdot\inf_{0\leq\lambda\leq\sqrt{2r_{j}^{*}}/c_{\ell}}\left\{ 4\lambda+12\frac{\sqrt{2r_{j}^{*}}}{c_{\ell}}\sqrt{\frac{\text{\text{Pdim}}\left(\mathcal{G}_{GNN}\right)}{\left|\mathcal{C}_{j}\right|}}\sqrt{\log\left(\max\left\{ \frac{4\bar{z}}{\lambda},1\right\} \cdot e\cdot\left|\mathcal{C}_{j}\right|\right)}\right\} +\frac{2C\bar{z}+44c_{\ell}^{2}\bar{z}^{2}\log|\mathcal{C}_{j}|}{\left|\mathcal{C}_{j}\right|}\\
 & =C\cdot\left\{ 12\frac{\sqrt{2r_{j}^{*}}}{c_{\ell}}\sqrt{\frac{\text{\text{Pdim}}\left(\mathcal{G}_{GNN}\right)}{\left|\mathcal{C}_{j}\right|}}\left(\sqrt{\log\left(\max\left\{ \frac{4\bar{z}}{\lambda},1\right\} \cdot e\cdot\left|\mathcal{C}_{j}\right|\right)}+\frac{1}{3}\right)\right\} +\frac{2C\bar{z}+44c_{\ell}^{2}\bar{z}^{2}\log\left|\mathcal{C}_{j}\right|}{\left|\mathcal{C}_{j}\right|},
\end{align*}
}where the first equality corresponds to the definition of $r_{j}^{*}$,
the first inequality follows (\ref{pf: expectation_to_sample_mean}),
the second inequality follows Lemma \ref{lem: Dudley's chaining}
with $\log\mathcal{N}_{\infty}\left(\epsilon,\mathcal{G}_{GNN},\left\{ \boldsymbol{\xi}_{i}\right\} _{i\in\mathcal{C}_{j}}\right)$
being the metric entropy in Definition \ref{def: metric entropy },
the third inequality holds by Lemma \ref{lem: Anthony Theorem 12.2}
with $\text{\text{Pdim}}(\cdot)$ being the pseudo-dimension in Definition
\ref{def: pseudo-dimension}, the fourth inequality holds by noting
that the integrand is decreasing in $\epsilon$, and the last equality
holds by picking{\small{}
\[
\ensuremath{\lambda}=\frac{\sqrt{2r_{j}^{*}}}{c_{\ell}}\min\left\{ \sqrt{\frac{\text{\text{Pdim}}\left(\mathcal{G}_{GNN}\right)}{\left|\mathcal{C}_{j}\right|}},1\right\} \in\left(0,\frac{\sqrt{2r_{j}^{*}}}{c_{\ell}}\right].
\]
}{\small\par}

Suppose for now $r_{j}^{*}\geq\left|\mathcal{C}_{j}\right|^{-1}$
and later we will add back the alternative case. Then, it is straightforward
to show that $\max\left\{ \frac{4\bar{z}}{\lambda},1\right\} \leq C\cdot\left|\mathcal{C}_{j}\right|$
and hence{\small{}
\[
r_{j}^{*}\leq C\cdot\frac{1+\log\left|\mathcal{C}_{j}\right|}{\left|\mathcal{C}_{j}\right|}\text{\ensuremath{\cdot\text{Pdim}}}\left(\mathcal{G}_{GNN}\right).
\]
}Therefore, we have either $r_{j}^{*}<\left|\mathcal{C}_{j}\right|^{-1}$
or $r_{j}^{*}\leq C\cdot\frac{1+\log\left|\mathcal{C}_{j}\right|}{\left|\mathcal{C}_{j}\right|}\text{\ensuremath{\cdot\text{Pdim}}}\left(\mathcal{G}_{GNN}\right)$,
which gives (\ref{eq: upper_bound_r_j}).

\subsubsection*{Derive an upper bound of the Pseudo-dimension }

To complete the derivation of an upper bound of $r_{j}^{*}$, we further
show that $\text{\text{Pdim}}\left(\mathcal{G}_{GNN}\right)$ is bounded
by {\small{}
\begin{equation}
\text{\text{Pdim}}\left(\mathcal{G}_{GNN}\right)\leq C\cdot\left(c_{n}\right)^{s}\cdot\left(d_{h}\right)^{k},\label{eq: upper_bound_pseudo_dim}
\end{equation}
}where $k=2$, $4,$ or $6$ depending on the number of layers $L$
and the activation function $\sigma\left(\cdot\right)$, and $s=L$
or $2L-1$ depending on the activation function. 

To study $\text{\text{Pdim}}\left(\mathcal{G}_{GNN}\right)$, we first
introduce the function class

{\small{}
\[
\mathcal{BG}_{GNN}=\left\{ \left(\boldsymbol{\xi},r\right)\mapsto\text{sgn}\left(\alpha\left(z\left(\boldsymbol{\xi};\boldsymbol{\theta}\right)-z\left(\boldsymbol{\xi};\boldsymbol{\theta}_{*}\right)\right)-r\right):\alpha\in\left[0,1\right],\boldsymbol{\theta}\in\Theta_{d_{h},\bar{z}}\right\} ,
\]
}which includes the binary functions based on all functions in $\mathcal{G}_{GNN}$
and one extra real-valued input variable. The definitions of Pseudo-
and VC-dimensions, stated in Definitions \ref{def: pseudo-dimension}
and \ref{def: VC-dimension }, imply that{\small{}
\[
\text{\text{Pdim}}\left(\mathcal{G}_{GNN}\right)=\text{VCdim}\left(\mathcal{BG}_{GNN}\right).
\]
}{\small\par}

Next, we apply Lemma \ref{lem: Anthony and Bartlett Theorems 8.4 and 8.14}
to bound $\text{VCdim}\left(\mathcal{B}_{GNN}\right)$. Under Assumption
I 4, $\left|\mathcal{N}\left(i\right)\right|\leq c_{n}$ for every
$i\in\left[n\right]$. So the relevant neighborhood for node $i$
in the construction of $z_{i}\left(\boldsymbol{\theta}\right)$ includes
at most $C_{n}\coloneqq\sum_{k=0}^{L}\left(c_{n}\right)^{k}$ nodes,
who are up to distance $L$ from node $i$. So the input space of
the functions in class $\mathcal{BG}_{GNN}$ can be easily reformulated
to include (1) any $C_{n}\times C_{n}$ binary adjacency matrices
indicating the connections among the nodes in the relevant neighborhood
around $i$, (2) any $C_{n}\times d$ feature matrices for the relevant
nodes with each element of the matrices being in $\left[-1,1\right]$,
and (3) a real line for the scalar input $r$.

For simplicity, set $d_{h}^{(l)}\asymp d_{h}$ for $1\leq l\leq L$
and assume $L$ is finite as in Assumption I 1. As the parameters
in $\mathcal{BG}_{GNN}$ include 

{\small{}
\[
\left\{ \alpha\in\mathbb{R},\boldsymbol{a}\in\mathbb{R}^{d_{h}^{(L)}},b\in\mathbb{R},\left\{ \boldsymbol{A}^{(l)},\boldsymbol{A}_{\mathcal{N}}^{(l)}\in\mathbb{R}^{d_{h}^{(l)}\times d_{h}^{(l-1)}},\boldsymbol{b}^{(l)}\in\mathbb{R}^{d_{h}^{(l)}}\right\} {}_{l\in\left[L\right]}\right\} ,
\]
}where $d_{h}^{(0)}=d<\infty$, a straightforward counting suggests
the number of parameters in $\mathcal{BG}_{GNN}$ is $O\left(d_{h}\right)$
if $L=1$ and $O\left(\left(d_{h}\right)^{2}\right)$ if $L\geq2$. 

Suppose that the activation function $\sigma\left(\cdot\right)$ can
be computed using a finite number of operations listed in Lemma \ref{lem: Anthony and Bartlett Theorems 8.4 and 8.14},
as stated in Assumption I 2. Then, a simple counting suggests the
number of operations of the listed types in Lemma \ref{lem: Anthony and Bartlett Theorems 8.4 and 8.14}
for computing any functions in $\mathcal{BG}_{GNN}$ is $O\left(d_{h}+c_{n}\right)$
if $L=1$ and $O\left(\left(c_{n}\right)^{L-2}\left(\left(d_{h}\right)^{2}+\left(c_{n}\right)^{2}\right)\right)$
if $L\geq2$. Furthermore, if the activation function $\sigma\left(\cdot\right)$
involves taking exponential, the number of times the exponential function
is evaluated is $O\left(\left(c_{n}\right)^{L-1}d_{h}\right)$.

Then, a direct application of Lemma \ref{lem: Anthony and Bartlett Theorems 8.4 and 8.14}
shows that

{\small{}
\[
\text{VCdim}\left(\mathcal{B}_{GNN}\right)=\left\{ \begin{array}{c}
O\left(c_{n}\left(d_{h}\right)^{2}\right)\ \text{if}\ L=1\\
O\left(\left(c_{n}\right)^{L}\left(d_{h}\right)^{4}\right)\ \text{if}\ L\geq2
\end{array}\right.,
\]
}when $\sigma\left(\cdot\right)$ does not involve taking exponential,
and{\small{}
\[
\text{VCdim}\left(\mathcal{B}_{GNN}\right)=\left\{ \begin{array}{c}
O\left(c_{n}\left(d_{h}\right)^{4}\right)\ \text{if}\ L=1\\
O\left(\left(c_{n}\right)^{2L-1}\left(d_{h}\right)^{6}\right)\ \text{if}\ L\geq2
\end{array}\right.,
\]
}when $\sigma\left(\cdot\right)$ involves taking exponential.  

The rate in (\ref{eq: upper_bound_pseudo_dim}) seems to be comparable
with the one in \citet{scarselli2018vapnik}, which studies the VC-dimension
for GNNs. However, the rate in \citet{scarselli2018vapnik} cannot
be directly applied to our setting as their architecture is restricted
to a recursive structure with one hidden layer, which is different
from our non-recursive setup admitting any finite number of hidden
layers.

\subsubsection*{Upper bound of the estimation error }

Combining (\ref{eq: approximation_error_r_j}), (\ref{eq: upper_bound_r_j}),
and (\ref{eq: upper_bound_pseudo_dim}), we obtain an upper bound
for the estimation error $T_{1}$. For $k=2,4$ or $6$, and $s=L$
or $2L-1$, with probability at least $1-\exp\left(-\rho\right)$,

\begin{align}
\mathbb{E}\left[\ell\left(y_{i},z_{i}\left(\hat{\boldsymbol{\theta}}\right)\right)-\ell\left(y_{i},z_{i}\left(\boldsymbol{\theta}_{*}\right)\right)\right] & \leq C\cdot\left(\frac{1}{n}\sum_{j=1}^{J}\left(1+\log\left|\mathcal{C}_{j}\right|\right)\cdot\left(c_{n}\right)^{2}\cdot\left(d_{h}\right)^{k}+\frac{J\log J+J\rho}{n}\right).\label{eq: estimation_error_rate}
\end{align}

\subsection{Step 3: bound the approximation error $T_{2}$}

Define the approximation error $\left[\epsilon\left(d_{h}\right)\right]^{2}=\underset{\boldsymbol{\theta}\in\Theta_{d_{h},\bar{z}}}{\min}\mathbb{E}\left[\left(z_{i}\left(\boldsymbol{\theta}\right)-z_{*i}\left(\boldsymbol{f}_{*}\right)\right)^{2}\right]$.
In this subsection, we show that the approximation error satisfies
{\small{}
\[
\left[\epsilon\left(d_{h}\right)\right]^{2}\lesssim\sum_{l=1}^{L}\left(\text{\ensuremath{d_{h}^{(l)}}}\right)^{-\beta/d_{h*}^{(l-1)}},
\]
}with $d_{h*}^{(0)}=d$. To simplify the presentation, we provide
the proof for the case where $L=2$ below to show that{\small{}
\[
\left[\epsilon\left(\left\{ d_{h}^{(1)},d_{h}^{(2)}\right\} \right)\right]^{2}\lesssim\left(\text{\ensuremath{d_{h}^{(1)}}}\right){}^{-\beta/d}+\left(d_{h}^{(2)}\right){}^{-\beta/d_{h*}^{(1)}}.
\]
}The proofs for other cases are very similar and hence omitted for
brevity. 

Recall that for every $\boldsymbol{\theta}\in\Theta_{\left\{ d_{h}^{(1)},d_{h}^{(2)}\right\} }$,
we define $z_{i}\left(\boldsymbol{\theta}\right)$ as {\small{}
\begin{eqnarray}
\boldsymbol{h}_{i}^{(1)} & = & \mathbf{\bm{\sigma}}\left(\boldsymbol{A}^{(1)}\boldsymbol{x}_{i}+\boldsymbol{A}_{\mathcal{N}}^{(1)}\overline{\boldsymbol{x}}_{\mathcal{N}\left(i\right)}+\boldsymbol{b}^{(1)}\right),\nonumber \\
z_{i}\left(\boldsymbol{\theta}\right) & = & \boldsymbol{a}\cdot\mathbf{\bm{\sigma}}\left(\boldsymbol{A}^{(2)}\boldsymbol{h}_{i}^{(1)}+\boldsymbol{A}_{\mathcal{N}}^{(2)}\overline{\boldsymbol{h}}_{\mathcal{N}\left(i\right)}^{(1)}+\boldsymbol{b}^{(2)}\right)+b,\label{pf:approx_def_zi}
\end{eqnarray}
}where $\boldsymbol{A}^{(1)},\boldsymbol{A}_{\mathcal{N}}^{(1)}\in\mathbb{R}^{d_{h}^{(1)}\times d}$
and $\boldsymbol{A}^{(2)},\boldsymbol{A}_{\mathcal{N}}^{(2)}\in\mathbb{R}^{d_{h}^{(2)}\times d_{h}^{(1)}}$.

First, we introduce a new notation, $z_{i}\left(\boldsymbol{f}^{\left(1\right)},f^{\left(2\right)}\right)$,
which is a composition of several shallow neural networks. Afterwards,
we will show that{\small{}
\[
z_{i}\left(\boldsymbol{f}^{\left(1\right)},f^{\left(2\right)}\right)\in\left\{ z_{i}\left(\boldsymbol{\theta}\right):\boldsymbol{\theta}\in\Theta_{d_{h},\bar{z}}\right\} .
\]
}{\small\par}

Specifically, let $\mathcal{F}_{\left\{ 2k,r,m\right\} }^{\text{shallow}}$
be the set of shallow neural networks with $2k$ inputs, $r$ hidden
nodes, a single output, and sup norm upper bounded by $m$, such that

{\small{}
\begin{eqnarray*}
\mathcal{F}_{\left\{ 2k,r,m\right\} }^{\text{shallow}} & \coloneqq & \left\{ f:\mathbb{R}{}^{2k}\rightarrow\mathbb{R},f\left(\boldsymbol{x}_{1},\boldsymbol{x}_{2}\right)=\boldsymbol{c}\cdot\mathbf{\bm{\sigma}}\left(\boldsymbol{C}\boldsymbol{x}_{1}+\boldsymbol{C}_{\mathcal{N}}\boldsymbol{x}_{2}+\boldsymbol{g}\right)+b,\right.\\
 &  & \left.\boldsymbol{c},\boldsymbol{g}\in\mathbb{R}^{r},\boldsymbol{C},\boldsymbol{C}_{\mathcal{N}}\in\mathbb{R}^{r\times k},b\in\mathbb{R},\left\Vert f\right\Vert _{\infty,\mathbb{R}{}^{2k}}\leq m\right\} .
\end{eqnarray*}
}Let $d_{h}^{(1)}$ and $d_{h}^{(2)}$ be positive integers and $d_{h}^{(1)}\geq d_{h*}^{(1)}$.
For any $\boldsymbol{f}^{(1)}=\left(f_{1}^{(1)},\ldots,f_{d_{h*}^{(1)}}^{(1)}\right)^{\prime}$
with $f_{k}^{(1)}\in\mathcal{F}_{\left\{ 2d,d_{h}^{(1)}/d_{h*}^{(1)},\bar{z}\right\} }^{\text{shallow}}$
for every $k\in\left[d_{h*}^{(1)}\right]$,\footnote{For more concise illustration, we present the case that $d_{h}^{(1)}$
is a multiple of $d_{h*}^{(1)}$. Otherwise, we could set $f_{k}^{(1)}\in\mathcal{F}_{\{2d,\left\lfloor d_{h}^{(1)}/d_{h*}^{(1)}\right\rfloor ,2M\}}^{\text{shallow}}$
for $k=1,...,d_{h*}^{(1)}-1$ and $f_{k}^{(1)}\in\mathcal{F}_{\{2d,d_{h}^{(1)}-\left\lfloor d_{h}^{(1)}/d_{h*}^{(1)}\right\rfloor (d_{h*}^{(1)}-1),2M\}}^{\text{shallow}}$
for $k=d_{h*}^{(1)}$. The same conclusion as in (\ref{pf:approx3})
holds without requiring $d_{h}^{(1)}$ to be a multiple of $d_{h*}^{(1)}$.} and $f^{(2)}\in\mathcal{F}_{\left\{ 2d_{h*}^{(1)},d_{h}^{(2)},\bar{z}\right\} }^{\text{shallow}}$,
we define $z_{i}\left(\boldsymbol{f}^{(1)},f^{(2)}\right)$ as

{\small{}
\begin{align}
z_{i}\left(\boldsymbol{f}^{(1)},f^{(2)}\right) & =f^{(2)}\left(\boldsymbol{h}_{i}^{(1)}\left(\boldsymbol{f}^{(1)}\right),\overline{\boldsymbol{h}}_{\mathcal{N}\left(i\right)}^{(1)}\left(\boldsymbol{f}^{(1)}\right)\right)\nonumber \\
 & =\boldsymbol{c}^{(2)}\cdot\mathbf{\bm{\sigma}}\left(\boldsymbol{C}^{(2)}\boldsymbol{h}_{i}^{(1)}\left(\boldsymbol{f}^{(1)}\right)+\boldsymbol{C}_{\mathcal{N}}^{(2)}\overline{\boldsymbol{h}}_{\mathcal{N}\left(i\right)}^{(1)}\left(\boldsymbol{f}^{(1)}\right)+\boldsymbol{g}^{(2)}\right)+b^{(2)},\label{pf:approx_zi}
\end{align}
\begin{align}
\boldsymbol{h}_{i}^{(1)}\left(\boldsymbol{f}^{(1)}\right) & =\boldsymbol{f}^{(1)}\left(\boldsymbol{x}_{i},\bar{\boldsymbol{x}}_{\mathcal{N}\left(i\right)}\right)=\left[\begin{array}{c}
f_{1}^{(1)}\left(\boldsymbol{x}_{i},\overline{\boldsymbol{x}}_{\mathcal{N}\left(i\right)}\right)\\
\vdots\\
f_{d_{h*}^{(1)}}^{(1)}\left(\boldsymbol{x}_{i},\overline{\boldsymbol{x}}_{\mathcal{N}\left(i\right)}\right)
\end{array}\right]\nonumber \\
 & =\left[\begin{array}{c}
\boldsymbol{c}_{1}^{(1)}\cdot\mathbf{\bm{\sigma}}\left(\boldsymbol{C}_{1}^{(1)}\boldsymbol{x}_{i}+\boldsymbol{C}_{\mathcal{N}1}^{(1)}\overline{\boldsymbol{x}}_{\mathcal{N}\left(i\right)}+\boldsymbol{g}_{1}^{(1)}\right)+b_{1}^{(1)}\\
\vdots\\
\boldsymbol{c}_{d_{h*}^{(1)}}^{(1)}\cdot\mathbf{\bm{\sigma}}\left(\boldsymbol{C}_{d_{h*}^{(1)}}^{(1)}\boldsymbol{x}_{i}+\boldsymbol{C}_{\mathcal{N}d_{h*}^{(1)}}^{(1)}\overline{\boldsymbol{x}}_{\mathcal{N}\left(i\right)}+\boldsymbol{g}_{d_{h*}^{(1)}}^{(1)}\right)+b_{d_{h*}^{(1)}}^{(1)}
\end{array}\right]\nonumber \\
 & =\boldsymbol{B}^{(1)}\mathbf{\bm{\sigma}}\left(\boldsymbol{C}^{(1)}\boldsymbol{x}_{i}+\boldsymbol{C}_{\mathcal{N}}^{(1)}\bar{\boldsymbol{x}}_{\mathcal{N}\left(i\right)}+\boldsymbol{g}^{(1)}\right)+\boldsymbol{b}^{(1)},\label{pf:approx_hi}
\end{align}
\begin{equation}
\overline{\boldsymbol{h}}_{\mathcal{N}\left(i\right)}^{(1)}\left(\boldsymbol{f}^{(1)}\right)=\mathbb{I}\left\{ \mathcal{N}\left(i\right)\neq\textrm{Ø}\right\} \frac{1}{\left|\mathcal{N}\left(i\right)\right|}\sum_{j\in\mathcal{N}\left(i\right)}\boldsymbol{h}_{j}^{(1)}\left(\boldsymbol{f}^{(1)}\right),\label{pf:approx_hNi}
\end{equation}
}with the following notations {\footnotesize{}
\begin{align*}
\boldsymbol{B}^{(1)} & =\left[\begin{array}{ccccc}
\boldsymbol{c}_{1}^{(1)\prime} & 0 & 0 & \cdots & 0\\
0 & \boldsymbol{c}_{2}^{(1)\prime} & 0 & \ddots & 0\\
0 & 0 & \boldsymbol{c}_{3}^{(1)\prime} & \ddots & \vdots\\
\vdots & \vdots & \ddots & \ddots & 0\\
0 & 0 & \cdots & 0 & \boldsymbol{c}_{d_{h*}^{(1)}}^{(1)\prime}
\end{array}\right],\boldsymbol{C}^{(1)}=\left[\begin{array}{c}
\boldsymbol{C}_{1}^{(1)}\\
\vdots\\
\boldsymbol{C}_{d_{h*}^{(1)}}^{(1)}
\end{array}\right],\boldsymbol{C}_{\mathcal{N}}^{(1)}=\left[\begin{array}{c}
\boldsymbol{C}_{\mathcal{N}1}^{(1)}\\
\vdots\\
\boldsymbol{C}_{\mathcal{N}d_{h*}^{(1)}}^{(1)}
\end{array}\right],\boldsymbol{g}^{(1)}=\left[\begin{array}{c}
\boldsymbol{g}_{1}^{(1)}\\
\vdots\\
\boldsymbol{g}_{d_{h*}^{(1)}}^{(1)}
\end{array}\right],\boldsymbol{b}^{(1)}=\left[\begin{array}{c}
b_{1}^{(1)}\\
\vdots\\
b_{d_{h*}^{(1)}}^{(1)}
\end{array}\right].
\end{align*}
}Second, we can easily relabel terms to show that $z_{i}\left(\boldsymbol{f}^{(1)},f^{(2)}\right)\in\left\{ z_{i}\left(\boldsymbol{\theta}\right):\boldsymbol{\theta}\in\Theta_{d_{h},\bar{z}}\right\} $.
Note that combining (\ref{pf:approx_zi})-(\ref{pf:approx_hNi}),
every $z_{i}\left(\boldsymbol{f}^{(1)},f^{(2)}\right)$ can be expressed
as{\footnotesize{}
\begin{eqnarray*}
z_{i}\left(\boldsymbol{f}^{(1)},f^{(2)}\right) & = & \boldsymbol{c}^{(2)}\cdot\mathbf{\bm{\sigma}}\left(\boldsymbol{C}^{(2)}\boldsymbol{B}^{(1)}\boldsymbol{h}_{i}^{(1)}+\boldsymbol{C}_{\mathcal{N}}^{(2)}\boldsymbol{B}^{(1)}\overline{\boldsymbol{h}}_{\mathcal{N}\left(i\right)}^{(1)}+\boldsymbol{C}^{(2)}\boldsymbol{b}^{(1)}+\boldsymbol{C}_{\mathcal{N}}^{(2)}\boldsymbol{b}^{(1)}\mathbb{I}\left\{ \mathcal{N}\left(i\right)\neq\textrm{Ø}\right\} +\boldsymbol{g}^{(2)}\right)+b^{(2)},\\
\boldsymbol{h}_{i}^{(1)} & = & \mathbf{\bm{\sigma}}\left(\boldsymbol{C}^{(1)}\boldsymbol{x}_{i}+\boldsymbol{C}_{\mathcal{N}}^{(1)}\bar{\boldsymbol{x}}_{\mathcal{N}\left(i\right)}+\boldsymbol{g}^{(1)}\right),\\
\overline{\boldsymbol{h}}_{\mathcal{N}\left(i\right)}^{(1)} & = & \mathbb{I}\left\{ \mathcal{N}\left(i\right)\neq\textrm{Ø}\right\} \frac{1}{\left|\mathcal{N}\left(i\right)\right|}\sum_{j\in\mathcal{N}\left(i\right)}\boldsymbol{h}_{j}^{(1)},
\end{eqnarray*}
}where $\boldsymbol{C}^{(2)}\boldsymbol{B}^{(1)},\boldsymbol{C}_{\mathcal{N}}^{(2)}\boldsymbol{B}^{(1)}\in\mathbb{R}^{d_{h}^{(2)}\times d_{h}^{(1)}}$,
$\boldsymbol{C}^{(1)},\boldsymbol{C}_{\mathcal{N}}^{(1)}\in\mathbb{R}^{d_{h}^{(1)}\times d}$,
and as $f^{(2)}\in\mathcal{F}_{\left\{ 2d_{h*}^{(1)},d_{h}^{(2)},\bar{z}\right\} }^{\text{shallow}}$,
we obtain{\small{}
\[
\left\Vert z_{i}\left(\boldsymbol{f}^{(1)},f^{(2)}\right)\right\Vert {}_{\infty}\coloneqq\underset{\boldsymbol{\xi}_{i}}{\text{sup}}\left|z_{i}\left(\boldsymbol{f}^{(1)},f^{(2)}\right)\right|\leq\bar{z}.
\]
}Given the definition of $z_{i}\left(\boldsymbol{\theta}\right)$
in (\ref{pf:approx_def_zi}), we have shown that{\small{}
\begin{align*}
 & \left\{ z_{i}\left(\boldsymbol{f}^{(1)},f^{(2)}\right):f_{1}^{(1)},...,f_{d_{h*}^{(1)}}^{(1)}\in\mathcal{F}_{\left\{ 2d,d_{h}^{(1)}/d_{h*}^{(1)},\bar{z}\right\} }^{\text{shallow}},f^{(2)}\in\mathcal{F}_{\left\{ 2d_{h*}^{(1)},d_{h}^{(2)},\bar{z}\right\} }^{\text{shallow}}\right\} \subseteq\left\{ z_{i}\left(\boldsymbol{\theta}\right):\boldsymbol{\theta}\in\Theta_{d_{h},\bar{z}}\right\} .
\end{align*}
}This implies that the approximation error is bounded as {\small{}
\begin{align}
\left[\epsilon\left(\left\{ d_{h}^{(1)},d_{h}^{(2)}\right\} \right)\right]^{2}\leq & \underset{\substack{f_{1}^{(1)},...,f_{d_{h*}^{(1)}}^{(1)}\in\mathcal{F}_{\left\{ 2d,d_{h}^{(1)}/d_{h*}^{(1)},\bar{z}\right\} }^{\text{shallow}}\\
f^{(2)}\in\mathcal{F}_{\left\{ 2d_{h*}^{(1)},d_{h}^{(2)},\bar{z}\right\} }^{\text{shallow}}
}
}{\min}\mathbb{E}\left[\left(z_{i}\left(\boldsymbol{f}^{(1)},f^{(2)}\right)-z_{*i}\left(\boldsymbol{f}_{*}\right)\right)^{2}\right].\label{pf:approx1}
\end{align}
}Third, we decompose the approximation error $\epsilon\left(\left\{ d_{h}^{(1)},d_{h}^{(2)}\right\} \right)$.
For each $i\in\left[n\right]$ with any possible local neighborhood
$\boldsymbol{\xi}_{i}$, and for every $f_{1}^{(1)},...,f_{d_{h*}^{(1)}}^{(1)}\in\mathcal{F}_{\left\{ 2d,d_{h}^{(1)}/d_{h*}^{(1)},\bar{z}\right\} }^{\text{shallow}}$
and $f^{(2)}\in\mathcal{F}_{\left\{ 2d_{h*}^{(1)},d_{h}^{(2)},\bar{z}\right\} }^{\text{shallow}}$,
we obtain{\small{}
\begin{align*}
 & \left|z_{*i}\left(\boldsymbol{f}_{*}\right)-z_{i}\left(\boldsymbol{f}^{(1)},f^{(2)}\right)\right|\\
= & \left|f_{*}^{(2)}\left(\boldsymbol{f}_{*}^{(1)}\left(\boldsymbol{x}_{i},\bar{\boldsymbol{x}}_{\mathcal{N}\left(i\right)}\right),\mathbb{I}\left\{ \mathcal{N}\left(i\right)\neq\textrm{Ø}\right\} \frac{1}{\left|\mathcal{N}\left(i\right)\right|}\sum_{j\in\mathcal{N}\left(i\right)}\boldsymbol{f}_{*}^{(1)}\left(\boldsymbol{x}_{j},\bar{\boldsymbol{x}}_{\mathcal{N}\left(j\right)}\right)\right)\right.\\
 & \left.-f^{(2)}\left(\boldsymbol{f}^{(1)}\left(\boldsymbol{x}_{i},\bar{\boldsymbol{x}}_{\mathcal{N}\left(i\right)}\right),\mathbb{I}\left\{ \mathcal{N}\left(i\right)\neq\textrm{Ø}\right\} \frac{1}{\left|\mathcal{N}\left(i\right)\right|}\sum_{j\in\mathcal{N}\left(i\right)}\boldsymbol{f}^{(1)}\left(\boldsymbol{x}_{j},\bar{\boldsymbol{x}}_{\mathcal{N}\left(j\right)}\right)\right)\right|\\
\leq & \left|\tilde{f}_{*}^{(2)}\left(\boldsymbol{f}_{*}^{(1)}\left(\boldsymbol{x}_{i},\bar{\boldsymbol{x}}_{\mathcal{N}\left(i\right)}\right),\mathbb{I}\left\{ \mathcal{N}\left(i\right)\neq\textrm{Ø}\right\} \frac{1}{\left|\mathcal{N}\left(i\right)\right|}\sum_{j\in\mathcal{N}\left(i\right)}\boldsymbol{f}_{*}^{(1)}\left(\boldsymbol{x}_{j},\bar{\boldsymbol{x}}_{\mathcal{N}\left(j\right)}\right)\right)\right.\\
 & \left.-\text{\ensuremath{\tilde{f}}}_{*}^{(2)}\left(\boldsymbol{f}^{(1)}\left(\boldsymbol{x}_{i},\bar{\boldsymbol{x}}_{\mathcal{N}\left(i\right)}\right),\mathbb{I}\left\{ \mathcal{N}\left(i\right)\neq\textrm{Ø}\right\} \frac{1}{\left|\mathcal{N}\left(i\right)\right|}\sum_{j\in\mathcal{N}\left(i\right)}\boldsymbol{f}^{(1)}\left(\boldsymbol{x}_{j},\bar{\boldsymbol{x}}_{\mathcal{N}\left(j\right)}\right)\right)\right|\\
 & +\left|\tilde{f}_{*}^{(2)}\left(\bar{z}\frac{1}{\bar{z}}\boldsymbol{f}^{(1)}\left(\boldsymbol{x}_{i},\bar{\boldsymbol{x}}_{\mathcal{N}\left(i\right)}\right),\bar{z}\mathbb{I}\left\{ \mathcal{N}\left(i\right)\neq\textrm{Ø}\right\} \frac{1}{\left|\mathcal{N}(i)\right|}\sum_{j\in\mathcal{N}\left(i\right)}\frac{1}{\bar{z}}\boldsymbol{f}^{(1)}\left(\boldsymbol{x}_{j},\bar{\boldsymbol{x}}_{\mathcal{N}\left(j\right)}\right)\right)\right.\\
 & \left.-f^{(2)}\left(\bar{z}\frac{1}{\bar{z}}\boldsymbol{f}^{(1)}\left(\boldsymbol{x}_{i},\bar{\boldsymbol{x}}_{\mathcal{N}\left(i\right)}\right),\bar{z}\mathbb{I}\left\{ \mathcal{N}\left(i\right)\neq\textrm{Ø}\right\} \frac{1}{\left|\mathcal{N}(i)\right|}\sum_{j\in\mathcal{N}\left(i\right)}\frac{1}{\bar{z}}\boldsymbol{f}^{(1)}\left(\boldsymbol{x}_{j},\bar{\boldsymbol{x}}_{\mathcal{N}\left(j\right)}\right)\right)\right|\\
\leq & 2\eta\sum_{k=1}^{d_{h*}^{(1)}}\left|f_{*k}^{(1)}\left(\boldsymbol{x}_{i},\bar{\boldsymbol{x}}_{\mathcal{N}\left(i\right)}\right)-f_{k}^{(1)}\left(\boldsymbol{x}_{i},\bar{\boldsymbol{x}}_{\mathcal{N}\left(i\right)}\right)\right|\\
 & +2\eta\sum_{k=1}^{d_{h*}^{(1)}}\mathbb{I}\left\{ \mathcal{N}\left(i\right)\neq\textrm{Ø}\right\} \frac{1}{\left|\mathcal{N}\left(i\right)\right|}\left|\sum_{j\in\mathcal{N}\left(i\right)}f_{*k}^{(1)}\left(\boldsymbol{x}_{j},\bar{\boldsymbol{x}}_{\mathcal{N}(j)}\right)-\sum_{j\in\mathcal{N}\left(i\right)}f_{k}^{(1)}\left(\boldsymbol{x}_{j},\bar{\boldsymbol{x}}_{\mathcal{N}\left(j\right)}\right)\right|\\
 & +\left|\tilde{\tilde{f}}_{*}^{(2)}\left(\frac{1}{\bar{z}}\boldsymbol{f}^{(1)}\left(\boldsymbol{x}_{i},\bar{\boldsymbol{x}}_{\mathcal{N}\left(i\right)}\right),\mathbb{I}\left\{ \left|\mathcal{N}\left(i\right)\right|>0\right\} \frac{1}{\left|\mathcal{N}\left(i\right)\right|}\sum_{j\in\mathcal{N}\left(i\right)}\frac{1}{\bar{z}}\boldsymbol{f}^{(1)}\left(\boldsymbol{x}_{j},\bar{\boldsymbol{x}}_{\mathcal{N}\left(j\right)}\right)\right)\right.\\
 & \left.-\tilde{\tilde{f}}^{(2)}\left(\frac{1}{\bar{z}}\boldsymbol{f}^{(1)}\left(\boldsymbol{x}_{i},\bar{\boldsymbol{x}}_{\mathcal{N}\left(i\right)}\right),\mathbb{I}\left\{ \left|\mathcal{N}\left(i\right)\right|>0\right\} \frac{1}{\left|\mathcal{N}\left(i\right)\right|}\sum_{j\in\mathcal{N}\left(i\right)}\frac{1}{\bar{z}}\boldsymbol{f}^{(1)}\left(\boldsymbol{x}_{j},\bar{\boldsymbol{x}}_{\mathcal{N}\left(j\right)}\right)\right)\right|,
\end{align*}
}where the first inequality is feasible if we introduce a new function
$\tilde{f}_{*}^{(2)}$. We set\\
$\tilde{f}_{*}^{(2)}:\left[-\bar{z},\bar{z}\right]^{2d_{h*}^{(1)}}\rightarrow\left[-1.1M,1.1M\right]$,
which extends the domain of $f_{*}^{(2)}$ from the $2d_{h*}^{(1)}$-dimensional
cube $\left[-1,1\right]^{2d_{h*}^{(1)}}$ to a larger set as $f_{j}^{(1)}\left(\boldsymbol{x}_{i},\bar{\boldsymbol{x}}_{\mathcal{N}\left(i\right)}\right)\in\left[-\bar{z},\bar{z}\right]$
for $j\in\left[d_{h*}^{(1)}\right]$, and extends the codomain slightly.
We restrict $\tilde{f}_{*}^{(2)}$ to be $\tilde{f}_{*}^{(2)}\left(\boldsymbol{x}\right)=f_{*}^{(2)}\left(\boldsymbol{x}\right)$
if $\boldsymbol{x}\in\left[-1,1\right]^{2d_{h*}^{(1)}}$, and $\tilde{f}_{*}^{(2)}\in\mathcal{W}_{2\eta}^{\beta,\infty}\left(\left[-\bar{z},\bar{z}\right]^{2d_{h*}^{(1)}}\right)$
with $\beta\geq1$ over the extended domain, which is feasible under
Assumption I 1. This leads to the next line of the proof by the mean
value theorem. The last inequality holds by defining $\tilde{\tilde{f}}_{*}^{(2)}:\left[-1,1\right]^{2d_{h*}^{(1)}}\rightarrow\left[-1.1M,1.1M\right]$
and $\tilde{\tilde{f}}^{(2)}:\left[-1,1\right]^{2d_{h*}^{(1)}}\rightarrow\left[-\bar{z},\bar{z}\right]$,
where $\tilde{\tilde{f}}_{*}^{(2)}\left(\boldsymbol{x},\boldsymbol{y}\right)=\tilde{f}_{*}^{(2)}\left(\bar{z}\boldsymbol{x},\bar{z}\boldsymbol{y}\right)$
and $\tilde{\tilde{f}}^{(2)}\left(\boldsymbol{x},\boldsymbol{y}\right)=f^{(2)}\left(\bar{z}\boldsymbol{x},\bar{z}\boldsymbol{y}\right)$
for any $\boldsymbol{x},\boldsymbol{y}\in\left[-1,1\right]^{d_{h*}^{(1)}}.$
It is easy to show that $\tilde{\tilde{f}}_{*}^{(2)}\in\mathcal{W}_{2\eta(\bar{z})^{\beta}}^{\beta,\infty}\left(\left[-1,1\right]^{2d_{h*}^{(1)}}\right)$.

For every $i\in\left[n\right]$, and every $\boldsymbol{f}^{(1)}$
and $f^{(2)}$, denoting{\small{}
\begin{align*}
\boldsymbol{a} & =\frac{1}{\bar{z}}\boldsymbol{f}^{(1)}\left(\boldsymbol{x}_{i},\bar{\boldsymbol{x}}_{\mathcal{N}\left(i\right)}\right),\text{and}\ \boldsymbol{b}=\mathbb{I}\left\{ \left|\mathcal{N}\left(i\right)\right|>0\right\} \frac{1}{\left|\mathcal{N}\left(i\right)\right|}\sum_{j\in\mathcal{N}\left(i\right)}\frac{1}{\bar{z}}\boldsymbol{f}^{(1)}\left(\boldsymbol{x}_{j},\bar{\boldsymbol{x}}_{\mathcal{N}\left(j\right)}\right),
\end{align*}
}then it holds {\small{}
\begin{align}
 & \mathbb{E}\left[\left(z_{*i}\left(\boldsymbol{f}_{*}\right)-z_{i}\left(\boldsymbol{f}^{(1)},f^{(2)}\right)\right)^{2}\right]\nonumber \\
\lesssim & \sum_{k=1}^{d_{h*}^{(1)}}\mathbb{E}\left[\left(f_{*k}^{(1)}\left(\boldsymbol{x}_{i},\bar{\boldsymbol{x}}_{\mathcal{N}\left(i\right)}\right)-f_{k}^{(1)}\left(\boldsymbol{x}_{i},\bar{\boldsymbol{x}}_{\mathcal{N}\left(i\right)}\right)\right)^{2}\right]\nonumber \\
+ & \sum_{k=1}^{d_{h*}^{(1)}}\mathbb{E}\left[\mathbb{I}\left\{ \mathcal{N}\left(i\right)\neq\textrm{Ø}\right\} \frac{1}{\left|\mathcal{N}\left(i\right)\right|}\sum_{j\in\mathcal{N}\left(i\right)}\left(f_{*k}^{(1)}\left(\boldsymbol{x}_{j},\bar{\boldsymbol{x}}_{\mathcal{N}\left(j\right)}\right)-f_{k}^{(1)}\left(\boldsymbol{x}_{j},\bar{\boldsymbol{x}}_{\mathcal{N}\left(j\right)}\right)\right)^{2}\right]\nonumber \\
+ & \mathbb{E}\left[\left(\tilde{\tilde{f}}_{*}^{(2)}\left(\boldsymbol{a},\boldsymbol{b}\right)-\tilde{\tilde{f}}^{(2)}\left(\boldsymbol{a},\boldsymbol{b}\right)\right)^{2}\right]\nonumber \\
\lesssim & \sum_{k=1}^{d_{h*}^{(1)}}\sup_{\boldsymbol{x},\boldsymbol{y}\in\left[-1,1\right]^{d}}\left(f_{*k}^{(1)}\left(\boldsymbol{x},\boldsymbol{y}\right)-f_{k}^{(1)}\left(\boldsymbol{x},\boldsymbol{y}\right)\right)^{2}+\sup_{\boldsymbol{x},\boldsymbol{y}\in\left[-1,1\right]^{d_{h*}^{(1)}}}\left(\tilde{\tilde{f}}_{*}^{(2)}\left(\boldsymbol{x},\boldsymbol{y}\right)-\tilde{\tilde{f}}^{(2)}\left(\boldsymbol{x},\boldsymbol{y}\right)\right)^{2}.\label{pf:approx2}
\end{align}
}{\small\par}

Finally, combining (\ref{pf:approx1}) and (\ref{pf:approx2}), the
conclusion is obtained as{\small{}
\begin{align}
\left[\epsilon\left(\left\{ d_{h}^{(1)},d_{h}^{(2)}\right\} \right)\right]^{2}\lesssim & \sum_{k=1}^{d_{h*}^{(1)}}\min_{f_{k}^{(1)}\in\mathcal{F}_{\left\{ 2d,d_{h}^{(1)}/d_{h*}^{(1)},\bar{z}\right\} }^{\text{shallow}}}\sup_{\boldsymbol{x},\boldsymbol{y}\in\left[-1,1\right]^{d}}\left|f_{*k}^{(1)}\left(\boldsymbol{x},\boldsymbol{y}\right)-f_{k}^{(1)}\left(\boldsymbol{x},\boldsymbol{y}\right)\right|^{2}\nonumber \\
 & +\underset{\substack{f^{(2)}\in\mathcal{F}_{\left\{ 2d_{h*}^{(1)},d_{h}^{(2)},\bar{z}\right\} }^{\text{shallow}}}
}{\min}\sup_{\boldsymbol{x},\boldsymbol{y}\in\left[-1,1\right]^{d_{h*}^{(1)}}}\left|\tilde{\tilde{f}}_{*}^{(2)}\left(\boldsymbol{x},\boldsymbol{y}\right)-\tilde{\tilde{f}}^{(2)}\left(\boldsymbol{x},\boldsymbol{y}\right)\right|^{2}\nonumber \\
\lesssim & \left(\text{\ensuremath{d_{h}^{(1)}}}\right){}^{-\beta/d}+\left(d_{h}^{(2)}\right)^{-\beta/d_{h*}^{(1)}},\label{pf:approx3}
\end{align}
}where the last inequality holds by Lemma \ref{lem: approximation_infinitely_differentiable}
and Assumptions I 1 and 2. 

For simplicity, set $d_{h}^{\left(l\right)}\asymp d_{h}$ for $l\in\left[L\right]$
and hence the approximation error satisfies{\small{}
\begin{equation}
\left[\epsilon\left(d_{h}\right)\right]^{2}\lesssim\left(\text{\ensuremath{d_{h}}}\right)^{-\beta/\left(d\vee d_{h*}^{\left(1\right)}\vee...\vee d_{h*}^{\left(L\right)}\right)}.\label{eq: approximation_error_rate}
\end{equation}
}{\small\par}

\subsection{Step 4: revisit the main decomposition \label{subsec: Step-4 revist the main decomposition}}

Combine the estimation and approximation errors in (\ref{eq: estimation_error_rate})
and (\ref{eq: approximation_error_rate}). For $k=2,4$ or $6$ and
$s=L$ or $2L-1$ ($k$ and $s$ depend on the activation function
and number of layers), with probability at least $1-\exp\left(-\rho\right)$,
{\footnotesize{}
\begin{align}
\mathbb{E}\left[\left(z_{i}\left(\hat{\boldsymbol{\theta}}\right)-z_{*i}\left(f_{*}\right)\right)^{2}\right] & \leq C\cdot\left(\frac{1}{n}\sum_{j=1}^{J}\left(1+\log\left|\mathcal{C}_{j}\right|\right)\cdot\left(c_{n}\right)^{s}\cdot\left(d_{h}\right)^{k}+\frac{J\log J+J\rho}{n}+\left(\text{\ensuremath{d_{h}}}\right)^{-\beta/d_{h*}}\right)\label{eq:rate_new_theorem3.3_2}\\
 & \leq C\cdot\left(\left(\frac{1}{n}\sum_{j=1}^{J}\left(1+\log\left|\mathcal{C}_{j}\right|\right)\cdot\left(c_{n}\right)^{s}\right)^{\frac{\beta}{\beta+kd_{h*}}}+\frac{J\log J+J\rho}{n}\right),\label{eq:rate_new_theorem3.3_3}
\end{align}
}where $\beta$ is the smoothness parameter, $d_{h*}\coloneqq d\vee d_{h*}^{(1)}\vee...\vee d_{h*}^{(L)}$
with $d$ being the number of covariates and $d_{h*}^{(l)}$ being
the number of latent variables in each layer, $J$ is the number of
covers in the smallest proper cover, $\left|\mathcal{C}_{j}\right|$
is the size of each cover, and $c_{n}$ is the the largest number
of peers an observation could have. The second inequality holds by
setting $d_{h}\asymp\left(\frac{1}{n}\sum_{j=1}^{J}\left(1+\log\left|\mathcal{C}_{j}\right|\right)\cdot\left(c_{n}\right)^{s}\right)^{-\frac{d_{h*}}{\beta+kd_{h*}}}$.

We next bound the empirical $L_{2}$ norm. Applying the localization
analysis again (using the second part of Lemma \ref{lem: Bartlett Theorem 3.3}
in particular) to the set of functions\\
${\cal H}=\left\{ \left(z_{i}\left(\boldsymbol{\theta}\right)-z_{*i}\left(\boldsymbol{f}_{*}\right)\right)^{2}:\boldsymbol{\theta}\in\Theta_{d_{h},\bar{z}}\right\} $,
it is easy to show that for every $j\in\left[J\right]$, with probability
at least $1-\exp\left(-\log J-\rho\right)$, {\footnotesize{}
\[
\frac{1}{\left|\mathcal{C}_{j}\right|}\sum_{i\in\mathcal{C}_{j}}\left(z_{i}\left(\hat{\boldsymbol{\theta}}\right)-z_{*i}\left(\boldsymbol{f}_{*}\right)\right)^{2}\leq C\left(\mathbb{E}\left[\left(z_{i}\left(\hat{\boldsymbol{\theta}}\right)-z_{*i}\left(\boldsymbol{f}_{*}\right)\right)^{2}\right]+\frac{1+\log\left|\mathcal{C}_{j}\right|}{\left|\mathcal{C}_{j}\right|}\cdot\left(c_{n}\right)^{s}\cdot\left(d_{h}\right)^{k}+\frac{\log J+\rho}{\left|\mathcal{C}_{j}\right|}\right).
\]
}Then, it holds with probability at least $1-\exp\left(-\rho\right)$,
{\footnotesize{}
\begin{align*}
\frac{1}{n}\sum_{i=1}^{n}\left(z_{i}\left(\hat{\boldsymbol{\theta}}\right)-z_{*i}\left(\boldsymbol{f}_{*}\right)\right)^{2} & \leq C\left(\mathbb{E}\left[\left(z_{i}\left(\hat{\boldsymbol{\theta}}\right)-z_{*i}\left(\boldsymbol{f}_{*}\right)\right)^{2}\right]+\frac{1}{n}\sum_{j=1}^{J}\left(1+\log\left|\mathcal{C}_{j}\right|\right)\cdot\left(c_{n}\right)^{s}\cdot\left(d_{h}\right)^{k}+\frac{J\log J+J\rho}{n}\right).
\end{align*}
}Combining this with the inequalities in (\ref{eq:rate_new_theorem3.3_2})
and (\ref{eq:rate_new_theorem3.3_3}), it implies that with probability
at least $1-2\exp\left(-\rho\right),${\small{}
\[
\frac{1}{n}\sum_{i=1}^{n}\left(z_{i}\left(\hat{\boldsymbol{\theta}}\right)-z_{*i}\left(\boldsymbol{f}_{*}\right)\right)^{2}\leq C\cdot\left(\left(\frac{1}{n}\sum_{j=1}^{J}\left(1+\log\left|\mathcal{C}_{j}\right|\right)\cdot\left(c_{n}\right)^{s}\right)^{\frac{\beta}{\beta+kd_{h*}}}+\frac{J\log J+J\rho}{n}\right).
\]
}This completes the proof of Theorem \ref{thm: rate_of_convergence}.

\section{Proof of Corollary \ref{cor: asymptotic_distribution} \label{sec:Proof-of-Corollary}}

\subsection{Proof of the first part of Corollary \ref{cor: asymptotic_distribution}}

In this section, we first show that{\small{}
\begin{equation}
\sqrt{n}\left(\hat{\pi}\left(s\right)-\pi\left(s\right)\right)=\frac{1}{\sqrt{n}}\sum_{i=1}^{n}\left(\zeta_{i}-\mathbb{E}\left[\zeta_{i}\right]\right)+o_{p}\left(1\right).\label{eq: transfer from pi to xi}
\end{equation}
}This part is similar to the proof in \citet{farrell2015robust}.
It is easy to verify that $\mathbb{E}\left[\zeta_{i}\right]=\pi\left(s\right)$
under Assumptions II 5 and 6. Then, with the following decomposition,
{\small{}
\begin{eqnarray*}
\sqrt{n}\left(\hat{\pi}\left(s\right)-\pi\left(s\right)\right) & = & \sqrt{n}\left(\frac{1}{n}\sum_{i=1}^{n}\left(s_{1}\left(\boldsymbol{\xi}_{i}\right)\hat{\varphi}_{1}\left(\boldsymbol{\upsilon}{}_{i}\right)+s_{0}\left(\boldsymbol{\xi}_{i}\right)\hat{\varphi}_{0}\left(\boldsymbol{\upsilon}{}_{i}\right)\right)-\pi\left(s\right)\right)\\
 & = & \sqrt{n}\left(\frac{1}{n}\sum_{i=1}^{n}\zeta_{i}-\pi\left(s\right)\right)+\sum_{t\in\left\{ 0,1\right\} }\frac{1}{\sqrt{n}}\sum_{i=1}^{n}\left(s_{t}\left(\boldsymbol{\xi}_{i}\right)\hat{\varphi}_{t}\left(\boldsymbol{\upsilon}{}_{i}\right)-s_{t}\left(\boldsymbol{\xi}_{i}\right)\varphi_{t}\left(\boldsymbol{\upsilon}{}_{i}\right)\right),
\end{eqnarray*}
}we could establish (\ref{eq: transfer from pi to xi}) by showing
that for $t\in\left\{ 0,1\right\} $,{\small{}
\begin{align*}
\frac{1}{\sqrt{n}}\sum_{i=1}^{n}\left(s_{t}\left(\boldsymbol{\xi}_{i}\right)\hat{\varphi}_{t}\left(\boldsymbol{\upsilon}{}_{i}\right)-s_{t}\left(\boldsymbol{\xi}_{i}\right)\varphi_{t}\left(\boldsymbol{\upsilon}{}_{i}\right)\right) & =o_{p}\left(1\right).
\end{align*}
}Then, the decomposition below implies that it is sufficient to show
that $R_{1}$ and $R_{2}$ are $o_{p}\left(1\right)$.{\small{}
\begin{align*}
 & \frac{1}{\sqrt{n}}\sum_{i=1}^{n}\left(s_{t}\left(\boldsymbol{\xi}_{i}\right)\hat{\varphi}_{t}\left(\boldsymbol{\upsilon}{}_{i}\right)-s_{t}\left(\boldsymbol{\xi}_{i}\right)\varphi_{t}\left(\boldsymbol{\upsilon}{}_{i}\right)\right)\\
= & \frac{1}{\sqrt{n}}\sum_{i=1}^{n}\left(s_{t}\left(\boldsymbol{\xi}_{i}\right)\left(\frac{\mathbb{I}\left\{ t_{i}=t\right\} }{\hat{p}_{t}\left(\boldsymbol{\xi}_{i}\right)}\left(y_{i}\left(t\right)-\hat{\mu}_{t}\left(\boldsymbol{\xi}_{i}\right)\right)+\hat{\mu}_{t}\left(\boldsymbol{\xi}_{i}\right)\right)-s_{t}\left(\boldsymbol{\xi}_{i}\right)\left(\frac{\mathbb{I}\left\{ t_{i}=t\right\} }{p_{t}\left(\boldsymbol{\xi}_{i}\right)}\left(y_{i}\left(t\right)-\mu_{t}\left(\boldsymbol{\xi}_{i}\right)\right)+\mu_{t}\left(\boldsymbol{\xi}_{i}\right)\right)\right)\\
= & \frac{1}{\sqrt{n}}\sum_{i=1}^{n}s_{t}\left(\boldsymbol{\xi}_{i}\right)\mathbb{I}\left\{ t_{i}=t\right\} \left(y_{i}\left(t\right)-\mu_{t}\left(\boldsymbol{\xi}_{i}\right)\right)\left(\frac{p_{t}\left(\boldsymbol{\xi}_{i}\right)-\hat{p}_{t}\left(\boldsymbol{\xi}_{i}\right)}{\hat{p}_{t}\left(\boldsymbol{\xi}_{i}\right)p_{t}\left(\boldsymbol{\xi}_{i}\right)}\right)\\
+ & \frac{1}{\sqrt{n}}\sum_{i=1}^{n}s_{t}\left(\boldsymbol{\xi}_{i}\right)\left(\hat{\mu}_{t}\left(\boldsymbol{\xi}_{i}\right)-\mu_{t}\left(\boldsymbol{\xi}_{i}\right)\right)\left(1-\frac{\mathbb{I}\left\{ t_{i}=t\right\} }{\hat{p}_{t}\left(\boldsymbol{\xi}_{i}\right)}\right)\\
\eqqcolon & R_{1}+R_{2}.
\end{align*}
}Regarding $R_{1}$, to show $\mathbb{E}\left[R_{1}^{2}\right]=o\left(1\right)$,
we just need to show $\mathbb{E}\left[R_{1}^{2}\mid\left\{ \boldsymbol{\xi}_{i},t_{i}\right\} _{i\in\left[n\right]}\right]=o_{p}\left(1\right)$
since\\
$\mathbb{E}\left[R_{1}^{2}\mid\left\{ \boldsymbol{\xi}_{i},t_{i}\right\} _{i\in\left[n\right]}\right]<C$
a.s. is guaranteed by Assumptions II 2, 4, 7 and 8. Note the following
holds true for $t\in\left\{ 0,1\right\} $,

{\footnotesize{}
\begin{align*}
 & \mathbb{E}\left[R_{1}^{2}\mid\left\{ \boldsymbol{\xi}_{i},t_{i}\right\} _{i\in\left[n\right]}\right]\\
= & \frac{1}{n}\sum_{i=1}^{n}\sum_{j=1}^{n}s_{t}\left(\boldsymbol{\xi}_{i}\right)s_{t}\left(\boldsymbol{\xi}_{j}\right)\mathbb{I}\left\{ t_{i}=t\right\} \mathbb{I}\left\{ t_{j}=t\right\} \left(\frac{p_{t}\left(\boldsymbol{\xi}_{i}\right)-\hat{p}_{t}\left(\boldsymbol{\xi}_{i}\right)}{\hat{p}_{t}\left(\boldsymbol{\xi}_{i}\right)p_{t}\left(\boldsymbol{\xi}_{i}\right)}\right)\left(\frac{p_{t}\left(\boldsymbol{\xi}_{j}\right)-\hat{p}_{t}\left(\boldsymbol{\xi}_{j}\right)}{\hat{p}_{t}\left(\boldsymbol{\xi}_{j}\right)p_{t}\left(\boldsymbol{\xi}_{j}\right)}\right)\mathbb{E}\left[u_{i}\left(t\right)u_{j}\left(t\right)\mid\left\{ \boldsymbol{\xi}_{i},t_{i}\right\} _{i\in\left[n\right]}\right]\\
= & \frac{1}{n}\sum_{i=1}^{n}\left[\frac{s_{t}\left(\boldsymbol{\xi}_{i}\right)\mathbb{I}\left\{ t_{i}=t\right\} \mathbb{E}\left[\left(u_{i}\left(t\right)\right)^{2}\mid\left\{ \boldsymbol{\xi}_{i},t_{i}\right\} _{i\in\left[n\right]}\right]}{\left(\hat{p}_{t}\left(\boldsymbol{\xi}_{i}\right)p_{t}\left(\boldsymbol{\xi}_{i}\right)\right)^{2}}\right]\left(p_{t}\left(\boldsymbol{\xi}_{i}\right)-\hat{p}_{t}\left(\boldsymbol{\xi}_{i}\right)\right)^{2}\\
\leq & C\frac{1}{n}\sum_{i=1}^{n}\left(p_{t}\left(\boldsymbol{\xi}_{i}\right)-\hat{p}_{t}\left(\boldsymbol{\xi}_{i}\right)\right)^{2}=o_{p}\left(1\right),
\end{align*}
}where the second equality holds by Assumption II 7, the first inequality
holds by Assumptions II 2, 4 and 8, and the last equality holds under
Assumption II 3(a). Therefore, we have verified that $\mathbb{E}\left[R_{1}^{2}\right]\rightarrow0$,
which implies $R_{1}=o_{p}\left(1\right)$ by Markov's inequality.

Regarding $R_{2}$, consider the following decomposition{\small{}
\begin{align*}
R_{2} & =\frac{1}{\sqrt{n}}\sum_{i=1}^{n}s_{t}\left(\boldsymbol{\xi}_{i}\right)\left(\hat{\mu}_{t}\left(\boldsymbol{\xi}_{i}\right)-\mu_{t}\left(\boldsymbol{\xi}_{i}\right)\right)\left(1-\frac{\mathbb{I}\left\{ t_{i}=t\right\} }{\hat{p}_{t}\left(\boldsymbol{\xi}_{i}\right)}\right)\\
 & =\frac{1}{\sqrt{n}}\sum_{i=1}^{n}s_{t}\left(\boldsymbol{\xi}_{i}\right)\left(\hat{\mu}_{t}\left(\boldsymbol{\xi}_{i}\right)-\mu_{t}\left(\boldsymbol{\xi}_{i}\right)\right)\left(1-\frac{\mathbb{I}\left\{ t_{i}=t\right\} }{p_{t}\left(\boldsymbol{\xi}_{i}\right)}\right)\\
 & +\frac{1}{\sqrt{n}}\sum_{i=1}^{n}s_{t}\left(\boldsymbol{\xi}_{i}\right)\left(\hat{\mu}_{t}\left(\boldsymbol{\xi}_{i}\right)-\mu_{t}\left(\boldsymbol{\xi}_{i}\right)\right)\left(\frac{\left(\hat{p}_{t}\left(\boldsymbol{\xi}_{i}\right)-p_{t}\left(\boldsymbol{\xi}_{i}\right)\right)\mathbb{I}\left\{ t_{i}=t\right\} }{p_{t}\left(\boldsymbol{\xi}_{i}\right)\hat{p}_{t}\left(\boldsymbol{\xi}_{i}\right)}\right)\\
 & \eqqcolon R_{21}+R_{22},
\end{align*}
}where $R_{21}=o_{p}\left(1\right)$ under Assumption II 3(c). So
it is sufficient to show $R_{22}=o_{p}\left(1\right)$ as follows.{\small{}
\begin{align*}
\left|R_{22}\right| & \leq C\frac{1}{\sqrt{n}}\sum_{i=1}^{n}\left|\left(\hat{\mu}_{t}\left(\boldsymbol{\xi}_{i}\right)-\mu_{t}\left(\boldsymbol{\xi}_{i}\right)\right)\left(\hat{p}_{t}\left(\boldsymbol{\xi}_{i}\right)-p_{t}\left(\boldsymbol{\xi}_{i}\right)\right)\right|\\
 & \leq C\sqrt{n}\left(\frac{1}{n}\sum_{i=1}^{n}\left(\hat{\mu}_{t}\left(\boldsymbol{\xi}_{i}\right)-\mu_{t}\left(\boldsymbol{\xi}_{i}\right)\right)^{2}\right)^{1/2}\left(\frac{1}{n}\sum_{i=1}^{n}\left(\hat{p}_{t}\left(\boldsymbol{\xi}_{i}\right)-p_{t}\left(\boldsymbol{\xi}_{i}\right)\right)^{2}\right)^{1/2}\\
 & =o_{p}\left(1\right),
\end{align*}
}where the first inequality holds under Assumption II 4, and the equality
holds by Assumption II 3(b).

Hence, we have verified that $R_{1}$ are $R_{2}$ are $o_{p}\left(1\right)$,
which implies (\ref{eq: transfer from pi to xi}). Then, together
with Assumption II 9, it holds that{\small{}
\[
\left(\Sigma_{n}\right)^{-1/2}\sqrt{n}\left(\hat{\pi}\left(s\right)-\pi\left(s\right)\right)=\left(\Sigma_{n}\right)^{-1/2}\frac{1}{\sqrt{n}}\sum_{i=1}^{n}\left(\zeta_{i}-\mathbb{E}\left[\zeta_{i}\right]\right)+o_{p}\left(1\right).
\]
}Also, under Assumption II 10, the conditions of Lemma \ref{lem: Janson Theorem 2}
are satisfied and yield

{\small{}
\[
\left(\Sigma_{n}\right)^{-1/2}\frac{1}{\sqrt{n}}\sum_{i=1}^{n}\left(\zeta_{i}-\mathbb{E}\left[\zeta_{i}\right]\right)\overset{d}{\to}N\left(0,1\right),
\]
}which completes the proof of the first part of the corollary.

\subsection{Proof of the second part of Corollary \ref{cor: asymptotic_distribution}}

In this section, we first propose an infeasible estimator, $\tilde{\Sigma}_{n}$,
such that $\tilde{\Sigma}_{n}/n_{\text{avg}}-\Sigma_{n}/n_{\text{avg}}=o_{p}\left(1\right),$
where $n_{\text{avg}}=n/\bar{c}$ denotes the average cluster size.
This part is closely related to the proof in \citet{hansen2007asymptotic}.
Next, we provide a feasible estimator, $\hat{\Sigma}_{n}$, which
ensures $\hat{\Sigma}_{n}/n_{\text{avg}}-\tilde{\Sigma}_{n}/n_{\text{avg}}=o_{p}\left(1\right)$.
Then, the second part of the corollary holds under Assumption III
4.

Observe that the target parameter $\Sigma_{n}$ can be written as
{\small{}
\begin{align*}
\Sigma_{n} & =\text{Var}\left(\frac{1}{\sqrt{n}}\sum_{i=1}^{n}\zeta_{i}\right)=\frac{1}{n}\text{\ensuremath{\sum_{c\in\left[\bar{c}\right]}}\ensuremath{\mathbb{E}\left[\left(\sum_{j\in\left[n_{c}\right]}\zeta_{cj}-n_{c}\mu\right)^{2}\right]}}=\ensuremath{\frac{1}{\bar{c}}}\sum_{c\in\left[\bar{c}\right]}\ensuremath{w_{c}\mathbb{E}\left[\left(\bar{\zeta}_{c}-\mu\right)^{2}\right],}
\end{align*}
}where $\mu\coloneqq\mathbb{E}\left[\zeta_{cj}\right]$, $\bar{\zeta}_{c}\coloneqq\frac{1}{n_{c}}\sum_{j\in\left[n_{c}\right]}\zeta_{cj}$,
and $w_{c}\coloneqq n_{c}^{2}\frac{\bar{c}}{n}$. Hence, we introduce
an infeasible estimator correspondingly, {\small{}
\[
\tilde{\Sigma}_{n}=\ensuremath{\frac{1}{\bar{c}}}\text{\ensuremath{\sum_{c\in\left[\bar{c}\right]}}\ensuremath{\ensuremath{w_{c}\left(\bar{\zeta}_{c}-\bar{\zeta}\right)^{2}}},}
\]
}where $\bar{\zeta}=\frac{1}{n}\sum_{i=1}^{n}\zeta_{i}$. The difference
between them can be decomposed as{\small{}
\begin{eqnarray*}
\tilde{\Sigma}_{n}/n_{\text{avg}}-\Sigma_{n}/n_{\text{avg}} & = & \frac{1}{\bar{c}}\sum_{c\in\left[\bar{c}\right]}\ensuremath{\frac{w_{c}}{n_{\text{avg}}}\left(\left(\bar{\zeta}_{c}-\bar{\zeta}\right)^{2}-\mathbb{E}\left[\left(\bar{\zeta}_{c}-\mu\right)^{2}\right]\right)}\\
 & = & \frac{1}{\bar{c}}\sum_{c\in\left[\bar{c}\right]}\ensuremath{\frac{w_{c}}{n_{\text{avg}}}\left(\left(\left(\bar{\zeta}_{c}-\mu\right)-\left(\bar{\zeta}-\mu\right)\right)^{2}-\mathbb{E}\left[\left(\bar{\zeta}_{c}-\mu\right)^{2}\right]\right)}\\
 & = & \frac{1}{\bar{c}}\sum_{c\in\left[\bar{c}\right]}\left(\frac{w_{c}}{n_{\text{avg}}}\left(\bar{\zeta}_{c}-\mu\right)^{2}-\frac{w_{c}}{n_{\text{avg}}}\mathbb{E}\left[\left(\bar{\zeta}_{c}-\mu\right)^{2}\right]\right)\\
 &  & +\frac{1}{\bar{c}}\sum_{c\in\left[\bar{c}\right]}\ensuremath{\frac{w_{c}}{n_{\text{avg}}}\left(\bar{\zeta}-\mu\right)^{2}}-\frac{2}{\bar{c}}\sum_{c\in\left[\bar{c}\right]}\ensuremath{\left(\frac{w_{c}}{n_{\text{avg}}}\bar{\zeta}_{c}-\frac{w_{c}}{n_{\text{avg}}}\mu\right)\left(\bar{\zeta}-\mu\right).}
\end{eqnarray*}
}Therefore, to show $\tilde{\Sigma}_{n}/n_{\text{avg}}-\Sigma_{n}/n_{\text{avg}}=o_{p}\left(1\right)$,
it is sufficient to verify that {\small{}
\begin{align}
\frac{1}{\bar{c}}\sum_{c\in\left[\bar{c}\right]}\left(\frac{w_{c}}{n_{\text{avg}}}\left(\bar{\zeta}_{c}-\mu\right)^{2}-\frac{w_{c}}{n_{\text{avg}}}\mathbb{E}\left[\left(\bar{\zeta}_{c}-\mu\right)^{2}\right]\right) & =o_{p}\left(1\right),\label{eq:1-1}\\
\bar{\zeta}-\mu=\frac{1}{\bar{c}}\sum_{c\in\left[\bar{c}\right]}\left(\left(\frac{n_{c}}{n/\bar{c}}\right)\left(\bar{\zeta}_{c}\right)-\left(\frac{n_{c}}{n/\bar{c}}\right)\mu\right) & =o_{p}\left(1\right),\label{eq:2-1}\\
\frac{1}{\bar{c}}\sum_{c\in\left[\bar{c}\right]}\ensuremath{\frac{w_{c}}{n_{\text{avg}}}} & =O\left(1\right),\label{eq:3-1}\\
\frac{1}{\bar{c}}\sum_{c\in\left[\bar{c}\right]}\ensuremath{\left(\frac{w_{c}}{n_{\text{avg}}}\bar{\zeta}_{c}-\frac{w_{c}}{n_{\text{avg}}}\mu\right)} & =o_{p}\left(1\right).\label{eq:4-1}
\end{align}
}First, (\ref{eq:3-1}) holds true as under Assumption III 2,{\small{}
\[
\frac{1}{\bar{c}}\sum_{c\in\left[\bar{c}\right]}\ensuremath{\frac{w_{c}}{n_{\text{avg}}}}=\frac{1}{\bar{c}}\sum_{c\in\left[\bar{c}\right]}\left(\frac{n_{c}}{n/\bar{c}}\right)^{2}\leq\left(\frac{\max_{c\in\left[\bar{c}\right]}n_{c}}{n/\bar{c}}\right)^{2}=O\left(1\right).
\]
}For (\ref{eq:1-1}), (\ref{eq:2-1}) and (\ref{eq:4-1}), the conditions
of Lemma \ref{lem: Hansen 2.6.2} are satisfied as under Assumptions
III 2 and 3, Assumptions II 2 and 4, and Minkowski's inequality, it
holds {\small{}
\begin{align*}
\mathbb{E}\left[\left|\frac{w_{c}}{n_{\text{avg}}}\left(\bar{\zeta}_{c}-\mu\right)^{2}\right|^{1+\delta}\right] & =\left(\frac{n_{c}}{n/\bar{c}}\right)^{2+2\delta}\mathbb{E}\left[\left|\bar{\zeta}_{c}-\mu\right|^{2+2\delta}\right]<C<\infty,\\
\mathbb{E}\left[\left|\left(\frac{n_{c}}{n/\bar{c}}\right)\left(\bar{\zeta}_{c}\right)\right|^{1+\delta}\right] & =\left(\frac{n_{c}}{n/\bar{c}}\right)^{1+\delta}\mathbb{E}\left[\left|\bar{\zeta}_{c}\right|^{1+\delta}\right]<C<\infty,\\
\mathbb{E}\left[\left|\frac{w_{c}}{n_{\text{avg}}}\bar{\zeta}_{c}\right|^{1+\delta}\right] & =\left(\frac{n_{c}}{n/\bar{c}}\right)^{2+2\delta}\mathbb{E}\left[\left|\bar{\zeta}_{c}\right|^{1+\delta}\right]<C<\infty.
\end{align*}
}Hence, Lemma \ref{lem: Hansen 2.6.2} implies (\ref{eq:1-1}), (\ref{eq:2-1})
and (\ref{eq:4-1}). Therefore, we conclude that $\tilde{\Sigma}_{n}/n_{\text{avg}}-\Sigma_{n}/n_{\text{avg}}=o_{p}\left(1\right)$.

Next, we introduce a feasible estimator, $\hat{\Sigma}_{n}$, and
show that $\hat{\Sigma}_{n}/n_{\text{avg}}-\tilde{\Sigma}_{n}/n_{\text{avg}}=o_{p}\left(1\right)$.
In particular, define $\hat{\Sigma}_{n}/n_{\text{avg}}=\ensuremath{\frac{1}{\bar{c}}}\ensuremath{\sum_{c\in\left[\bar{c}\right]}}\frac{w_{c}}{n_{\text{avg}}}\left(\bar{\hat{\zeta}}_{c}-\bar{\hat{\zeta}}\right)^{2}$,
where $\bar{\hat{\zeta}}=\frac{1}{n}\sum_{i=1}^{n}\hat{\zeta}_{i}$
and $\bar{\hat{\zeta}}_{c}=\frac{1}{n_{c}}\sum_{j\in\left[n_{c}\right]}\hat{\zeta}_{cj}$.
Then, consider the following decomposition, {\small{}
\begin{align*}
\left|\frac{1}{n_{\text{avg}}}\hat{\Sigma}_{n}-\frac{1}{n_{\text{avg}}}\tilde{\Sigma}_{n}\right| & =\left|\ensuremath{\frac{1}{\bar{c}}}\text{\ensuremath{\sum_{c\in\left[\bar{c}\right]}}\ensuremath{\ensuremath{\frac{w_{c}}{n_{\text{avg}}}\left[\left(\bar{\hat{\zeta}}_{c}-\bar{\hat{\zeta}}\right)^{2}-\left(\bar{\zeta}_{c}-\bar{\zeta}\right)^{2}\right]}}}\right|\\
 & =\left|\ensuremath{\frac{1}{\bar{c}}}\text{\ensuremath{\sum_{c\in\left[\bar{c}\right]}}\ensuremath{\ensuremath{\frac{w_{c}}{n_{\text{avg}}}z_{c}\left(\bar{\hat{\zeta}}_{c}-\bar{\hat{\zeta}}-\bar{\zeta}_{c}+\bar{\zeta}\right)}}}\right|\\
 & \leq\left(\frac{1}{n}\ensuremath{\sum_{c\in\left[\bar{c}\right]}}\sum_{j\in\left[n_{c}\right]}\frac{n_{c}}{n/\bar{c}}\left|z_{c}\right|\left|\hat{\zeta}_{cj}-\zeta_{cj}\right|\right)\text{\ensuremath{\ensuremath{+\left|\bar{\zeta}-\bar{\hat{\zeta}}\right|\left(\ensuremath{\frac{1}{\bar{c}}}\text{\ensuremath{\sum_{c\in\left[\bar{c}\right]}}\ensuremath{\ensuremath{\frac{w_{c}}{n_{\text{avg}}}\left|z_{c}\right|}}}\right)}}}\\
 & \leq2\left(\frac{\max_{c\in\left[\bar{c}\right]}n_{c}}{n/\bar{c}}\right)^{2}\left(\frac{1}{\bar{c}}\ensuremath{\sum_{c\in\left[\bar{c}\right]}}\left|z_{c}\right|^{2}\right)^{1/2}\left(\frac{1}{n}\sum_{i=1}^{n}\left|\hat{\zeta}_{i}-\zeta_{i}\right|^{2}\right)^{1/2},
\end{align*}
}where the second equality holds by setting $z_{c}=\bar{\hat{\zeta}}_{c}-\bar{\hat{\zeta}}+\bar{\zeta}_{c}-\bar{\zeta}$,
and the last inequality holds by applying the Cauchy-Schwarz inequality
and the triangle inequality. Then, under Assumption III 2, it is sufficient
to show that{\small{}
\begin{align}
\frac{1}{\bar{c}}\ensuremath{\sum_{c\in\left[\bar{c}\right]}}\left|z_{c}\right|^{2} & =O_{p}\left(1\right),\label{eq: feasible_estimator_p1}\\
\frac{1}{n}\sum_{i=1}^{n}\left|\hat{\zeta}_{i}-\zeta_{i}\right|^{2} & =o_{p}\left(1\right).\label{eq: feasible_estimator_p2}
\end{align}
}{\small\par}

We first show that $\frac{1}{\bar{c}}\ensuremath{\sum_{c\in\left[\bar{c}\right]}}\mathbb{E}\left[\left|z_{c}\right|^{2}\right]=O\left(1\right)$,
which implies (\ref{eq: feasible_estimator_p1}). Consider the following
decomposition,{\small{}
\begin{eqnarray*}
\frac{1}{\bar{c}}\ensuremath{\sum_{c\in\left[\bar{c}\right]}}\mathbb{E}\left[\left|z_{c}\right|^{2}\right] & = & \frac{1}{\bar{c}}\sum_{c\in\left[\bar{c}\right]}\mathbb{E}\left[\left|\bar{\hat{\zeta}}_{c}-\bar{\hat{\zeta}}+\bar{\zeta}_{c}-\bar{\zeta}\right|^{2}\right]\\
 & \leq & 4\frac{1}{\bar{c}}\sum_{c\in\left[\bar{c}\right]}\frac{1}{n_{c}}\sum_{j\in\left[n_{c}\right]}\mathbb{E}\left[\left|\hat{\zeta}_{cj}\right|^{2}\right]+4\frac{1}{n}\sum_{i=1}^{n}\mathbb{E}\left[\left|\hat{\zeta}_{i}\right|^{2}\right]+8\mathbb{E}\left[\left|\zeta_{i}\right|^{2}\right]\\
 & \leq & 8\max_{i\in\left[n\right]}\mathbb{E}\left[\left|\hat{\zeta}_{i}\right|^{2}\right]+8\mathbb{E}\left[\left|\zeta_{i}\right|^{2}\right].
\end{eqnarray*}
}Therefore, we next show that $\max_{i\in\left[n\right]}\mathbb{E}\left[\left(\hat{\zeta}_{i}\right)^{2}\right]=O\left(1\right)$.
The proof of $\mathbb{E}\left[\left(\zeta_{i}\right)^{2}\right]=O\left(1\right)$
is similar and hence omitted. Under Assumptions II 4 and 8, it is
straightforward to verify that{\small{}
\begin{align*}
\max_{i\in\left[n\right]}\mathbb{E}\left[\left(\hat{\zeta}_{i}\right)^{2}\right] & \leq4\max_{t\in\left\{ 0,1\right\} }\max_{i\in\left[n\right]}\mathbb{E}\left[\left(\hat{\varphi}_{t}\left(\boldsymbol{\upsilon}{}_{i}\right)\right)^{2}\right]\\
 & =4\max_{t\in\left\{ 0,1\right\} }\max_{i\in\left[n\right]}\mathbb{E}\left[\left(\frac{\mathbb{I}\left\{ t_{i}=t\right\} }{\hat{p}_{t}\left(\boldsymbol{\xi}_{i}\right)}y_{i}\left(t\right)+\left(1-\frac{\mathbb{I}\left\{ t_{i}=t\right\} }{\hat{p}_{t}\left(\boldsymbol{\xi}_{i}\right)}\right)\hat{\mu}_{t}\left(\boldsymbol{\xi}_{i}\right)\right)^{2}\right]\\
 & \leq\max_{t\in\left\{ 0,1\right\} }\max_{i\in\left[n\right]}8\mathbb{E}\left[\left(y_{i}\left(t\right)\right)^{2}\right]+\max_{t\in\left\{ 0,1\right\} }\max_{i\in\left[n\right]}8\mathbb{E}\left[\left(\hat{\mu}_{t}\left(\boldsymbol{\xi}_{i}\right)\right)^{2}\right]\leq C<\infty.
\end{align*}
}Then it follows that (\ref{eq: feasible_estimator_p1}) is valid.
Next, we show that $\frac{1}{n}\sum_{i=1}^{n}\mathbb{E}\left[\left|\hat{\zeta}_{i}-\zeta_{i}\right|^{2}\right]=o\left(1\right)$,
which implies (\ref{eq: feasible_estimator_p2}). Under Assumptions
II 2, 3(a), 4 and 8, it holds true that 

{\small{}
\begin{eqnarray*}
\frac{1}{n}\sum_{i=1}^{n}\mathbb{E}\left[\left|\hat{\zeta}_{i}-\zeta_{i}\right|^{2}\right] & \leq & 4\max_{t\in\left\{ 0,1\right\} }\frac{1}{n}\sum_{i=1}^{n}\mathbb{E}\left[\left|\hat{\varphi}_{t}\left(\boldsymbol{\upsilon}{}_{i}\right)-\varphi_{t}\left(\boldsymbol{\upsilon}{}_{i}\right)\right|^{2}\right]\\
 & \leq & \max_{t\in\left\{ 0,1\right\} }\left(\frac{12}{n}\sum_{i=1}^{n}\mathbb{E}\left[\frac{\mathbb{I}\left\{ t_{i}=t\right\} \left(y_{i}\left(t\right)\right)^{2}}{\left(p_{t}\left(\boldsymbol{\xi}_{i}\right)\hat{p}_{t}\left(\boldsymbol{\xi}_{i}\right)\right)^{2}}\left(\hat{p}_{t}\left(\boldsymbol{\xi}_{i}\right)-p_{t}\left(\boldsymbol{\xi}_{i}\right)\right)^{2}\right]\right)\\
 &  & +\max_{t\in\left\{ 0,1\right\} }\left(\frac{12}{n}\sum_{i=1}^{n}\mathbb{E}\left[\frac{\mathbb{I}\left\{ t_{i}=t\right\} }{\left(p_{t}\left(\boldsymbol{\xi}_{i}\right)\hat{p}_{t}\left(\boldsymbol{\xi}_{i}\right)\right)^{2}}\left(\mu_{t}\left(\boldsymbol{\xi}_{i}\right)\hat{p}_{t}\left(\boldsymbol{\xi}_{i}\right)-p_{t}\left(\boldsymbol{\xi}_{i}\right)\hat{\mu}_{t}\left(\boldsymbol{\xi}_{i}\right)\right)^{2}\right]\right)\\
 &  & +\max_{t\in\left\{ 0,1\right\} }\left(\frac{12}{n}\sum_{i=1}^{n}\mathbb{E}\left[\left(\mu_{t}\left(\boldsymbol{\xi}_{i}\right)-\hat{\mu}_{t}\left(\boldsymbol{\xi}_{i}\right)\right)^{2}\right]\right)\\
 & \leq & C\left(\max_{t\in\left\{ 0,1\right\} }\frac{1}{n}\sum_{i=1}^{n}\mathbb{E}\left[\left(\hat{p}_{t}\left(\boldsymbol{\xi}_{i}\right)-p_{t}\left(\boldsymbol{\xi}_{i}\right)\right)^{2}\right]+\max_{t\in\left\{ 0,1\right\} }\frac{1}{n}\sum_{i=1}^{n}\mathbb{E}\left[\left(\mu_{t}\left(\boldsymbol{\xi}_{i}\right)-\hat{\mu}_{t}\left(\boldsymbol{\xi}_{i}\right)\right)^{2}\right]\right)\\
 & = & o\left(1\right).
\end{eqnarray*}
}Hence, (\ref{eq: feasible_estimator_p2}) is also valid. Provided
that we have shown that $\tilde{\Sigma}_{n}/n_{\text{avg}}-\Sigma_{n}/n_{\text{avg}}=o_{p}\left(1\right)$
and $\hat{\Sigma}_{n}/n_{\text{avg}}-\tilde{\Sigma}_{n}/n_{\text{avg}}=o_{p}\left(1\right)$,
Assumption III 4 and the first part of the corollary imply that {\small{}
\[
\left(\hat{\Sigma}_{n}\right)^{-1/2}\sqrt{n}\left(\hat{\pi}\left(s\right)-\pi\left(s\right)\right)\overset{d}{\to}N\left(0,1\right),
\]
}which finishes the proof of the second part of the corollary.

\section{Verification of Assumption II 3(c) \label{sec: Verification-of-Assumption II3(c)}}

In this section, we verify that Assumption II 3(c) holds under mild
conditions. This part is similar to \citet[Lemma 10]{farrell2021deep}.

Define $L_{i}^{t}\left(\boldsymbol{\theta}\right)=s_{t}\left(\boldsymbol{\xi}_{i}\right)\left(h_{\mu}\left(z_{i}\left(\boldsymbol{\theta}\right)\right)-h_{\mu}\left(z_{*i}\left(\boldsymbol{f}_{*}^{t}\right)\right)\right)\left(1-\frac{\mathbb{I}\left\{ t_{i}=t\right\} }{p_{t}\left(\boldsymbol{\xi}_{i}\right)}\right)$.
So, $\left|L_{i}^{t}\left(\boldsymbol{\theta}\right)\right|\leq C<\infty$
for every $\boldsymbol{\theta}\in\Theta_{d_{h},\bar{z}}$ under Assumptions
II 2 and 4. Also, it holds that $\mathbb{E}\left[L_{i}^{t}\left(\boldsymbol{\theta}\right)\right]=0$
under Assumption II 6, and $\text{Var}\left[L_{i}^{t}\left(\boldsymbol{\theta}\right)\right]\leq C\cdotp\mathbb{E}\left[\left(z_{i}\left(\boldsymbol{\theta}\right)-z_{*i}\left(\boldsymbol{f}_{*}^{t}\right)\right)^{2}\right]$
under Assumptions II 4.

Suppose the following condition holds for both the treated and control
groups{\small{}
\begin{equation}
\left(\frac{1}{n}\sum_{j=1}^{J}\left(1+\log\left|\mathcal{C}_{j}\right|\right)\cdot\left(c_{n}\right)^{s}\right)^{\frac{\beta}{\beta+kd_{h*}}}+\frac{J\log J+J\rho}{n}=o\left(n^{-1/2}\right),\label{eq: condition_for_rate_n^(-1/2)}
\end{equation}
}where $\left\{ \mathcal{C}_{j}\right\} _{j\in\left[J\right]}$ is
a smallest proper cover of the dependency graph of $\left\{ \boldsymbol{\xi}_{i}\right\} _{i:t_{i}=1}$
for the treated group and of $\left\{ \boldsymbol{\xi}_{i}\right\} _{i:t_{i}=0}$
for the control group, $\beta$ and $d_{h*}$ are the parameters for
the function class of $\boldsymbol{f}_{*}^{1}$ or $\boldsymbol{f}_{*}^{0}$,
and $c_{n}$ is the largest number of peers a treated or control observation
could have. Under this condition, Theorem \ref{thm: rate_of_convergence}
implies that{\small{}
\[
\Pr\left(\hat{\boldsymbol{\theta}}_{t}\in\mathcal{R}_{t}\right)\geq1-\exp\left(-\rho\right),
\]
}where $\mathcal{R}_{t}=\left\{ \boldsymbol{\theta}\in\Theta_{d_{h},\bar{z}}:\mathbb{E}\left[\left(z_{i}\left(\boldsymbol{\theta}\right)-z_{*i}\left(\boldsymbol{f}_{*}^{t}\right)\right)^{2}\right]\leq r_{n}\right\} $
and $r_{n}=o\left(n^{-1/2}\right)$ is some sequence of positive constants.

Let $\dot{\boldsymbol{v}}_{i}=\left(t_{i},\boldsymbol{\xi}_{i}\right)$,
and define $\left\{ \mathcal{C}_{j}\right\} _{j\in\left[J\right]}$
as a smallest proper cover of the dependency graph of $\left\{ \dot{\boldsymbol{v}}_{1},\ldots,\dot{\boldsymbol{v}}_{n}\right\} $.
Applying Lemma \ref{lem: Bartlett Theorem 2.1} to each $\mathcal{C}_{j}$
separately, we have that with probability at least $1-2\exp\left(-\log J-\rho\right)$,{\small{}
\begin{eqnarray*}
\forall\boldsymbol{\theta}\in\mathcal{R}_{t}: &  & \left|\frac{1}{\left|\mathcal{C}_{j}\right|}\sum_{i\in\mathcal{C}_{j}}L_{i}^{t}\left(\boldsymbol{\theta}\right)\right|\\
 & \leq & C\mathbb{E}\left[\sup_{\boldsymbol{\theta}\in\mathcal{R}_{t}}\frac{1}{\left|\mathcal{C}_{j}\right|}\sum_{i\in\mathcal{C}_{j}}\eta_{i}L_{i}^{t}\left(\boldsymbol{\theta}\right)\right]+C\sqrt{\frac{\mathbb{E}\left[\left(z_{i}\left(\boldsymbol{\theta}\right)-z_{*i}\left(\boldsymbol{f}_{*}^{t}\right)\right)^{2}\right]\left(\log J+\rho\right)}{\left|\mathcal{C}_{j}\right|}}+\frac{C\left(\log J+\rho\right)}{\left|\mathcal{C}_{j}\right|}\\
 & \leq & C\mathbb{E}\left[\sup_{\boldsymbol{\theta}\in\mathcal{R}_{t}}\frac{1}{\left|\mathcal{C}_{j}\right|}\sum_{i\in\mathcal{C}_{j}}\eta_{i}\left(z_{i}\left(\boldsymbol{\theta}\right)-z_{*i}\left(\boldsymbol{f}_{*}^{t}\right)\right)\right]+C\sqrt{\frac{r_{n}\left(\log J+\rho\right)}{\left|\mathcal{C}_{j}\right|}}+\frac{C\left(\log J+\rho\right)}{\left|\mathcal{C}_{j}\right|}\\
 & \leq & C\mathbb{E}\left[\sup_{\begin{array}{c}
\alpha\in\left[0,1\right],\boldsymbol{\theta}\in\Theta_{d_{h},\bar{z}}\\
\mathbb{E}\left[\alpha^{2}\left(z_{i}\left(\boldsymbol{\theta}\right)-z_{*i}\left(\boldsymbol{f}_{*}^{t}\right)\right)^{2}\right]\leq r_{n}
\end{array}}\frac{1}{\left|\mathcal{C}_{j}\right|}\sum_{i\in\mathcal{C}_{j}}\eta_{i}\left(\alpha\left(z_{i}\left(\boldsymbol{\theta}\right)-z_{*i}\left(\boldsymbol{f}_{*}^{t}\right)\right)\right)\right]\\
 &  & +C\sqrt{\frac{r_{n}\left(\log J+\rho\right)}{\left|\mathcal{C}_{j}\right|}}+\frac{C\left(\log J+\rho\right)}{\left|\mathcal{C}_{j}\right|}\\
 & \leq & C\psi_{j}\left(c_{\ell}^{2}r_{n}\right)+C\sqrt{\frac{r_{n}\left(\log J+\rho\right)}{\left|\mathcal{C}_{j}\right|}}+\frac{C\left(\log J+\rho\right)}{\left|\mathcal{C}_{j}\right|}\\
 & \leq & \begin{cases}
\begin{array}{c}
Cr_{j}^{*}+C\sqrt{\frac{r_{n}\left(\log J+\rho\right)}{\left|\mathcal{C}_{j}\right|}}+\frac{C\left(\log J+\rho\right)}{\left|\mathcal{C}_{j}\right|}\\
Cr_{n}+C\sqrt{\frac{r_{n}\left(\log J+\rho\right)}{\left|\mathcal{C}_{j}\right|}}+\frac{C\left(\log J+\rho\right)}{\left|\mathcal{C}_{j}\right|}
\end{array} & \begin{array}{c}
\text{if\ \ensuremath{r_{j}^{*}\geq c_{\ell}^{2}r_{n}}}\\
\text{if}\ r_{j}^{*}<c_{\ell}^{2}r_{n}
\end{array},\end{cases}
\end{eqnarray*}
}where the first inequality holds by Lemma \ref{lem: Bartlett Theorem 2.1},
the second inequality holds by Lemma \ref{lem: contraction}, the
fourth inequality holds by the definition of $\psi_{j}\left(\cdot\right)$
in (\ref{eq: psi_j}), where we replaced $z_{i}\left(\boldsymbol{\theta}_{*}\right)$
in $\psi_{j}\left(\cdot\right)$ by $z_{*i}\left(\boldsymbol{f}_{*}^{t}\right)$,
but all the results derived for the original $\psi_{j}\left(\cdot\right)$
also hold for the updated $\psi_{j}\left(\cdot\right)$ with only
minor adjustments, and the last inequality holds as $\psi_{j}\left(\cdotp\right)$
is a sub-root function and $r_{j}^{*}$ is the fixed point of $\psi_{j}\left(\cdotp\right)$.

Therefore, with probability at least $1-3\exp\left(-\rho\right)$,
{\small{}
\begin{align*}
\left|\frac{1}{n}\sum_{i=1}^{n}L_{i}^{t}\left(\hat{\boldsymbol{\theta}}_{t}\right)\right| & \leq C\frac{1}{n}\sum_{j=1}^{J}\left(\left|\mathcal{C}_{j}\right|\cdotp\max\left\{ r_{j}^{*},r_{n}\right\} \right)+\frac{C\left(J\log J+J\rho\right)}{n}\\
 & \leq C\cdot\left(\left(\frac{1}{n}\sum_{j=1}^{J}\left(1+\log\left|\mathcal{C}_{j}\right|\right)\cdot\left(c_{n}\right)^{s}\right)^{\frac{\beta}{\beta+kd_{h*}}}+\frac{J\log J+J\rho}{n}\right)+Cr_{n}\\
 & =o\left(n^{-1/2}\right),
\end{align*}
}where the second inequality uses (\ref{eq: upper_bound_r_j}) and
(\ref{eq: upper_bound_pseudo_dim}), which hold for the fixed point
$r_{j}^{*}$ of the updated function $\psi_{j}\left(\cdot\right)$,
and the appropriate choice of $d_{h}$ as in Section \ref{subsec: Step-4 revist the main decomposition}.
The last equality holds if the condition in (\ref{eq: condition_for_rate_n^(-1/2)})
also applies to the smallest proper cover of the dependency graph
of $\left\{ \dot{\boldsymbol{v}}_{1},\ldots,\dot{\boldsymbol{v}}_{n}\right\} $.
Hence, we have verified Assumption II 3(c). 

\bibliographystyle{chicago}
\bibliography{bibliography}

\end{document}